\documentclass{lmcs}
\pdfoutput=1

\usepackage{lastpage}
\lmcsdoi{17}{3}{8}
\lmcsheading{}{\pageref{LastPage}}{}{}%
{Jan.~04,~2021}{Jul.~21,~2021}{}

\usepackage[utf8]{inputenc}

\usepackage{amsmath,amssymb}
\usepackage{mathtools}
\usepackage{thm-restate}
\usepackage{xfrac}
\usepackage{stmaryrd,wasysym}
\usepackage{tikz}
\usetikzlibrary{automata,arrows,calc,decorations.pathreplacing,petri,shapes.geometric}
\usepackage{todonotes}
\usepackage[inline]{enumitem}
\usepackage[labelformat=parens]{subcaption}
\captionsetup{labelfont=bf}
\usepackage{wrapfig}
\usepackage[misc]{ifsym}

\setlist[description]{topsep=1.7ex}

\input{macros.tex}

\keywords{partial-order reduction, stutter equivalence, LTL, stubborn sets}

\begin{document}

\title[The Inconsistent Labelling Problem]{A Detailed Account of The 
Inconsistent Labelling Problem of Stutter-Preserving Partial-Order 
Reduction\rsuper*}
\titlecomment{{\lsuper*}An extended abstract of this paper appeared earlier 
as~\cite{NeeleVW2020}.}

\author[T.~Neele]{Thomas Neele\rsuper{a}}	
\address{\lsuper{a}Royal Holloway University of London, Egham, UK}	
\email{thomas.neele@rhul.ac.uk}  

\author[A.~Valmari]{Antti Valmari\rsuper{b}}	
\address{\lsuper{b}University of Jyv\"askyl\"a, Jyv\"askyl\"a, Finland}	
\email{antti.valmari@jyu.fi}  

\author[T.A.C.~Willemse]{Tim A.C. Willemse\rsuper{c}}	
\address{\lsuper{c}Eindhoven University of Technology, Eindhoven, The Netherlands}	
\email{t.a.c.willemse@tue.nl}  

\begin{abstract}
	One of the most popular state-space reduction techniques for model checking
	is partial-order reduction (POR).
	Of the many different POR implementations, stubborn sets are a very
	versatile variant and have thus seen many different applications over the
	past 32 years.
	One of the early stubborn sets works shows how the basic conditions for
	reduction can be augmented to preserve stutter-trace equivalence, making
	stubborn sets suitable for model checking of linear-time properties.
	In this paper, we identify a flaw in the reasoning and show with a
	counter-example that stutter-trace equivalence is not necessarily preserved.
	We propose a stronger reduction condition and provide extensive new
	correctness proofs to ensure the issue is resolved.
	Furthermore, we analyse in which formalisms the problem may occur.
	The impact on practical implementations is limited, since they all compute
	a correct approximation of the theory.
\end{abstract}

\maketitle

\section{Introduction}
In the field of formal methods, model checking is a push-button technique for
establishing the correctness of systems according to certain criteria.
A fundamental issue in model checking is the \emph{state-space explosion}
problem: the size of the state space can grow exponentially with the number of
concurrent components, due to all their possible interleavings.
One of the prime methods of reducing the number of states is
\emph{partial-order reduction} (POR).
The literature contains many different implementations of POR, but they are all
centred around the idea that some interleavings may be considered similar and
thus only one interleaving from each equivalence class needs to be explored.
The main variants of POR are \emph{ample sets}~\cite{Peled1993},
\emph{persistent sets}~\cite{Godefroid1996} and \emph{stubborn
sets}~\cite{Valmari1991a,Valmari2017a}.
The basic conditions set out by each of these variants can be strengthened,
such that the resulting conditions are sufficient for the preservation of
stutter-trace equivalence.
The extra conditions resolve the so-called \emph{action-ignoring
problem}~\cite{Valmari1991a}.
Since LTL without the next operator (LTL$_{-X}$) is invariant under finite
stuttering, this allows one to check most LTL properties under POR.

However, the correctness proofs for these methods are intricate and not
reproduced often.
For stubborn sets, LTL$_{-X}$-preserving conditions and an accompanying
correctness result were first presented in~\cite{Valmari1991b}; the
corresponding proofs appeared in~\cite{Valmari1992}.
When attempting to reproduce the proof of~\cite[Theorem 2]{Valmari1992} (see
also Theorem~\ref{thm:d1_preserve_stutter_trace_equivalence} in the current
work), we were unable to show that the two alternative paths considered
by~\cite[Construction 1]{Valmari1992}, a core component of the proof,
are stutter equivalent.
The consequence is that stutter-trace equivalence is not necessarily preserved,
contrary to what the theorem states!
We call this the \emph{inconsistent labelling problem}.

The essence of the problem is that POR in general, and the proofs
in~\cite{Valmari1992} in particular, reason mostly about actions, which label
the transitions.
In POR theory, the only relevance of the state labelling is that it determines
which actions must be considered \emph{visible}.
On the other hand, stutter-trace equivalence and the LTL semantics are purely
based on state labels.
The correctness proof in~\cite{Valmari1992} does not deal properly with this
disparity.
Consequently, any application of stubborn sets in LTL$_{-X}$ model checking is
possibly unsound, both for safety and liveness properties.
In literature, the correctness of several
theories~\cite{Laarman2016,Liebke2019,Valmari1996} relies on the incorrect
theorem.

In earlier work~\cite{NeeleVW2020}, we identified the inconsistent labelling
problem and investigated the theoretical and practical consequences.
As detailed in \ibid, the problem is witnessed by a counter-example, which is
valid for weak stubborn sets and, with a small modification, in a 
non-deterministic setting for strong stubborn sets.
A slight strengthening of one of the stubborn set conditions is sufficient to
repair the issue (Theorems~\ref{thm:dl_nostut} and~\ref{thm:inf_nostut} in the 
current work).
The fix is local, in the sense that it reduces the reduction potential in those 
places where the inconsistent labelling problem might otherwise occur.
Petri nets can be susceptible to the issue, depending on what notion of 
invisibility and what types of atomic propositions are used.
We used this knowledge about formalisms in which the inconsistent labelling
problem may manifest itself to determine its impact on related work.
The investigation in~\cite{NeeleVW2020} shows that probably all practical
implementations of stubborn sets compute an approximation which resolves the
inconsistent labelling problem.
Furthermore, POR methods based on the standard independence relation, such as
ample sets and persistent sets, are not affected.
The current paper improves on~\cite{NeeleVW2020} with extended explanation and
full proofs.
In particular, we introduce each of the existing stubborn set conditions with
reworked proofs to aid the reader's intuition.

The rest of the paper is structured as follows.
In Section~\ref{sec:preliminaries}, we introduce the basic concepts of
transition systems and stutter-trace equivalence.
Section~\ref{sec:stubborn_sets} introduces the stubborn set conditions one by
one and shows what they preserve through several lemmata.
Our counter-example to the preservation of stutter-trace equivalence is
presented in Section~\ref{sec:counter_example}.
We propose a solution to the inconsistent labelling problem in
Section~\ref{sec:strengthen_d1}, together with an updated correctness proof.
Sections~\ref{sec:safe_formalisms} and~\ref{sec:petri_nets} discuss several
settings in which correctness is not affected.
Finally, Section~\ref{sec:related_work} discusses related work and
Section~\ref{sec:conclusion} presents a conclusion.

\section{Preliminaries}
\label{sec:preliminaries}

\subsection{Labelled State Transition Systems and Paths}\label{sec:LSTS}

Since LTL relies on state labels and POR relies on edge labels, we assume the 
existence of some fixed set of atomic propositions \AP to label the states and 
a fixed set of edge labels \Act, which we will call \emph{actions}.
Actions are typically denoted with the letter $\act$.

\begin{defi}
	\label{def:labelled_LSTS}
	A \emph{labelled state transition system}, short \emph{LSTS}, is a directed 
	graph $\TS = (S,\edgerel,\init{s},L)$, where:
	\begin{itemize}
		\item $S$ is the state space;
		\item ${\edgerel} \subseteq S \times \Act \times S$ is the transition 
		relation;
		\item $\init{s} \in S$ is the initial state; and
		\item $L: S \to 2^{\AP}$ is a function that labels states with atomic 
		propositions.
	\end{itemize}
\end{defi}

We write $s \transition{\act} s'$ whenever $(s,\act,s') \in {\edgerel}$.
An action $\act$ is \emph{enabled} in a state $s$, notation $s
\transition{\act}$, if and only if there is a transition $s \transition{\act}
s'$ for some $s'$.
In a given LSTS \TS, $\enabled_{\TS}(s)$ is the set of all enabled actions in
a state $s$.
We may drop the subscript $\TS$ if it is clear from the context.
A state $s$ is a \emph{deadlock} in $\TS$ if and only if $\enabled_{\TS}(s) =
\emptyset$.

A \emph{path} is a (finite or infinite) alternating sequence of states and 
actions that respects the transition relation: $s_0 \transition{\act_1} s_1 
\transition{\act_2} s_2 \dots$.
We sometimes omit the intermediate and/or final states if they are clear from 
the context or not relevant, and write $s \transition{\act_1\dots \act_n} s'$
or $s \transition{\act_1\dots \act_n}$ for finite paths and $s
\transition{\act_1\act_2\dots}$ for infinite paths.
The empty sequence is denoted with $\varepsilon$.
Thus, for all states $s$ and $s'$, $s \transition{\varepsilon} s'$ holds if
and only if $s = s'$.
A path is \emph{deadlocking} if and only if it ends in a deadlock.
A path is \emph{complete} if and only if it is infinite or deadlocking.
Paths that start in the initial state $\init{s}$ are called \emph{initial 
paths}.

Given a path $\pi = s_0 \transition{\act_1} s_1 \transition{\act_2} s_2 \dots$, 
the \emph{trace} of $\pi$ is the sequence of state labels observed along $\pi$, 
\viz $L(s_0) L(s_1) L(s_2) \dots$.
The \emph{no-stutter trace} of $\pi$, notation $\nostut(\pi)$, is the 
sequence of those $L(s_i)$ such that $i = 0$ or $L(s_i) \neq L(s_{i-1})$.

A set $\Inv$ of \emph{invisible} actions is chosen such that if (but not 
necessarily only if) $a \in \Inv$, then for all states $s$ and $s'$, $s 
\transition{\act} s'$ implies $L(s) = L(s')$.
Note that this definition allows the set $\Inv$ to be under-approximated.
An action that is not invisible is called \emph{visible}.
The \emph{projection} of $a_1 \ldots a_n$ on the visible actions is the result
of the removal of all elements of $\Inv$ from $a_1 \ldots a_n$.
We denote it with $\vis_\Inv(a_1 \ldots a_n)$.
The notion extends naturally to infinite sequences $a_1 a_2 \ldots$.
We furthermore lift the function $\vis$ to paths, such that $\vis_\Inv(s_0 
\transition{a_1} s_1 \transition{a_2} \dots) = \vis_\Inv(a_1 a_2 \dots)$.
The subscript $\Inv$ is omitted when it is clear from the context.

We say \TS is \emph{deterministic} if and only if $s \transition{a} s_1$ and
$s \transition{a} s_2$ imply $s_1 = s_2$, for all states $s$, $s_1$ and $s_2$
and actions $\act$.
To indicate that \TS is not necessarily deterministic, we say \TS is 
\emph{non-deterministic}.

\subsection{Petri Nets}\label{section:PN}

\emph{Petri nets} are a widely-known formalism for modelling concurrent
processes and have seen frequent use in the application of stubborn set 
theory~\cite{Bonneland2019,Liebke2019,Valmari2017a,Varpaaniemi2005}.
We will use Petri nets for presenting examples.
In Section~\ref{sec:petri_nets}, we will also reassess the correctness of some 
published POR theories that use Petri nets.
Other than that, the theory in the present paper is fairly general, that is,
it does not depend on Petri Nets.

A Petri net $(P, T, W, \hat m)$ contains a set of \emph{places} $P$ and a set
of \emph{structural transitions} $T$.
These sets are disjoint.
In this paper they are finite.
Figure~\ref{fig:pn_example} shows an example of a Petri net.
Places are drawn as circles and structural transitions as rectangles.

\emph{Arcs} between places and structural transitions and their \emph{weights}
are specified via a total function $W: (P \times T) \cup (T \times P) \to
\mathbb{N}$.
The values $W(p,t)$ and $W(t,p)$ are called weights.
There is an arc from place $p$ to structural transition $t$, drawn as an
arrow, if and only if $W(p,t) > 0$; and similarly in the opposite direction if
and only if $W(t,p) > 0$.
If $W(p,t) > 1$ or $W(t,p) > 1$, then it is written as a number next to the
arc.
Figure~\ref{fig:pn_example} contains 11 arcs of weight 1, three arcs of
weight 2, and one arc of weight 3.

A \emph{marking} $m: P \to \mathbb{N}$ is a function that assigns a number of 
\emph{tokens} to each place.
Let $\Markings$ denote the set of all markings.
A Petri net has an \emph{initial marking} $\hat m$.
The initial marking of the example satisfies $\hat m(p_3) = 2$, $\hat m(p_1) =
\hat m(p_4) = \hat m(p_6) = 1$ and $\hat m(p_2) = \hat m(p_5) = 0$.

Structural transition $t$ is \emph{enabled} in marking $m$ if and only if
$m(p) \geq W(p,\tr)$ for every $p \in P$, and \emph{disabled} otherwise.
In our example, $t_1$, $t_3$ and $t_6$ are enabled.
Because $\hat m(p_3) = 2$ but $W(p_3, t_4) = 3$, $t_4$ is disabled.
An enabled transition may \emph{occur} resulting in the marking $m'$ such that
$m'(p) = m(p) - W(p,\tr) + W(\tr,p)$ for every $p \in P$.
We denote this with $m \transition{\tr} m'$, and extend the notation to paths
similarly to Section~\ref{sec:LSTS}.
If $m$ is the marking such that $\hat m \transition{t_1} m$ in our example,
then $m(p_1) = 0$, $m(p_2) = 1$, and $m(p) = \hat m(p)$ for the remaining
places.
If $\hat m \transition{t_3} m'$, then $m'(p_4) = 0$ and $m'(p) = \hat m(p)$
for the remaining places.

A marking $m$ is \emph{reachable} if and only if there are $t_1$, \ldots,
$t_n$ such that $\hat{m} \transition{t_1 \ldots t_n} m$.
Let $\Mreach$ denote the set of reachable markings, and $\edgerel'$ the
restriction of $\edgerel$ on $\Mreach \times T \times \Mreach$.
Assume that a set of atomic propositions $\AP$ and a function $L': \Mreach
\to 2^\AP$ are given.
A Petri net together with these induces the LSTS $(\Mreach, \edgerel', \hat
m, L')$.
In this context $\Act = T$.

It is customary to abuse notation by forgetting about the distinction between
$\edgerel$ and $\edgerel'$, and using the same symbol for both.
This is done because it is often not known in advance whether a marking is
reachable, making it impractical to define $\edgerel'$ instead of $\edgerel$.
Similarly instead of $L'$, it is customary to define a function $L$ from all
markings $\Markings$ to $2^\AP$, let $L'$ be its restriction on $\Mreach$, and 
abuse notation by using the same symbol for both.
These are general practice instead of being restricted to Petri nets.

\begin{figure}
	\centering
	\begin{tikzpicture}[->,>=stealth',shorten >=0pt,auto,node 
	distance=2.0cm,semithick,
	every place/.style={draw,minimum size=5mm}]
	
	\begin{scope}
	\tikzstyle{vtransition} = [fill,inner sep=0pt,minimum 
	width=1.4mm,minimum height=5mm]
	\tikzstyle{htransition} = [fill,inner sep=0pt,minimum 
	width=5mm,minimum height=1.4mm]
	
	\def\x{1}
	\def\y{1.2}
	\node[place,label={above:$p_1$},tokens=1]       (p1) at (0,\y) {};
	\node[place,label={above:$p_2$}]                (p2) at (2*\x,\y) {};
	\node[place,label={above:$p_3$},tokens=2]       (p3) at (4*\x,\y) {};
	\node[place,label={above:$p_4$},tokens=1]       (p4) at (6*\x,\y) {};
	\node[place,label={above:$p_5$}]                (p5) at (8*\x,\y) {};
	\node[place,label={above:$p_6$},tokens=1]       (p6) at (10*\x,\y) {};
	
	\node[vtransition,label={above:$\tr_1$}]  (t1) at (1*\x,\y) {};
	\node[vtransition,label={above:$\tr_2$}]  (t2) at (3*\x,\y) {};
	\node[vtransition,label={above:$\tr_3$}]  (t3) at (5*\x,\y) {};
	\node[vtransition,label={above:$\tr_5$}]  (t5) at (7*\x,\y) {};
	\node[vtransition,label={above:$\tr_6$}]  (t6) at (9*\x,\y) {};
	\node[htransition,label={left:$\tr_4$}]   (t4) at (4*\x,0) {};
	
	\path
	(p1) edge (t1) (t1) edge (p2)
	(p2) edge (t2) (t2) edge (p3)
	(p3) edge[bend left=10] node {2} (t3.150)
	(t3.210) edge[bend left=10] node {2} (p3)
	(p3) edge[bend left=10] node {3} (t4.60)
	(t4.120) edge[bend left=10] node {2} (p3)
	(p4) edge (t3) (p4) edge (t5)
	(p5) edge (t5) (t6) edge (p5)
	(p6) edge[bend left=10] (t6.330) (t6.30) edge[bend left=10] (p6);
	\draw[rounded corners=6pt]
	(p6) |- (t4);
	
	\end{scope}
	\end{tikzpicture}
	\caption{An example Petri net.}
	\label{fig:pn_example}
\end{figure}
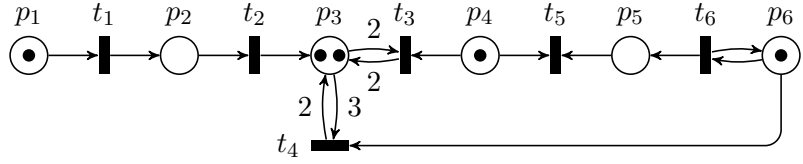

\subsection{Weak and Stutter Equivalence}\label{sec:weak_stutter}

Stubborn sets save effort by constructing, instead of the full LSTS $\TS =
(S,\edgerel,\init{s},L)$, a reduced LSTS $\TS_r = (S_r, \edgerel_r, \init{s},
L_r)$ such that $S_r \subseteq S$, ${\edgerel_r} \subseteq {\edgerel}$ and
$L_r$ is the restriction of $L$ on $S_r$ (more details will be given in
Section~\ref{sec:stubborn_sets}).
To reason about the similarity of an LSTS $\TS$ and its reduced LSTS $\TS_r$, 
we introduce the notions \emph{weak equivalence}, which operates on actions, 
and \emph{stutter equivalence}, which operates on states.
For the purpose of the discussion in Section~\ref{sec:petri_nets}, these 
concepts respectively depend on a set of actions and a labelling function.

\begin{defi}
	Two paths $\pi$ and $\pi'$ are weakly equivalent with respect to a set of 
	actions $A$, notation $\pi \weakeq_A \pi'$, if and only if they are both 
	finite or both infinite, and their respective projections on $\Act 
	\setminus A$ are equal, \ie, $\vis_A(\pi) = \vis_A(\pi')$.
\end{defi}
\begin{defi}
	\label{def:stutter_equivalence}
	Paths $\pi$ and $\pi'$ are stutter equivalent under $L$, notation $\pi 
	\stuteq_L \pi'$, if and only if they are both finite or both infinite, and 
	they yield the same no-stutter trace under~$L$.
\end{defi}

We typically consider weak equivalence with respect to the set of invisible 
actions $\Inv$.
In that case, we simply refer to the equivalence as weak equivalence and
we write $\pi \weakeq \pi'$, which intuitively means that $\pi$ and $\pi'$ 
contain the same visible actions.
We also omit the subscript for stutter equivalence when reasoning about the 
labelling function of the LSTS under consideration and write $\pi \stuteq \pi'$.
Note that stutter equivalence is invariant under finite repetitions of state 
labels, hence its name.
We lift both equivalences to LSTSs, and say that $\TS$ and $\TS'$ are 
\emph{weak-trace equivalent}  iff for every complete initial path $\pi$ in
$\TS$, there is a weakly equivalent complete initial path $\pi'$ in $\TS'$ and
vice versa.
Likewise, $\TS$ and $\TS'$ are \emph{stutter-trace equivalent} iff for every 
complete initial path $\pi$ in $\TS$, there is a stutter equivalent complete
initial path $\pi'$ in $\TS'$ and vice versa.

In general, weak equivalence and stutter equivalence are incomparable, even for 
complete initial paths.
However, for some LSTSs, these notions \emph{are} related in a certain way.
We formalise this in the following definition.
\begin{defi}
	\label{def:consistent_labelling}
	An LSTS is \emph{labelled consistently} iff for all complete initial
paths $\pi$ and 
	$\pi'$, $\pi \weakeq \pi'$ implies $\pi \stuteq \pi'$.
\end{defi}
It follows from the definition that, if an LSTS $\TS$ is labelled consistently 
and weak-trace equivalent to a subgraph $\TS'$, then $\TS$ and $\TS'$ are also 
stutter-trace equivalent.

Stubborn sets as defined in the next section aim to preserve stutter-trace 
equivalence between the original and the reduced LSTS.
The motivation behind this is that two stutter-trace equivalent LSTSs satisfy 
exactly the same formulae~\cite{BaierKatoen-PMC} in LTL$_{-X}$.
The following theorem, which is frequently cited in the 
literature~\cite{Laarman2016,Liebke2019,Valmari1996}, aims to show that 
stubborn sets indeed preserve stutter-trace equivalence.
Its original formulation reasons about the validity of an arbitrary LTL$_{-X}$ 
formula.
Here, we give the alternative formulation based on stutter-trace equivalence.

\begin{thm}{\cite[Theorem 2]{Valmari1992}}
	\label{thm:d1_preserve_stutter_trace_equivalence}
	For every LSTS $\TS$, the reduced LSTS $\TS_\redf$ (defined in 
	Section~\ref{sec:stubborn_sets}) is stutter-trace equivalent to $\TS$.
\end{thm}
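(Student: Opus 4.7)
The plan is to prove the theorem by establishing the two directions of stutter-trace equivalence separately. The reverse direction is immediate: since $\TS_\redf$ is a subgraph of $\TS$ (sharing initial state and labelling), every complete initial path in $\TS_\redf$ is already a complete initial path in $\TS$ and is stutter equivalent to itself. All the work therefore lies in the forward direction, namely: for every complete initial path $\pi$ in $\TS$, exhibit a stutter equivalent complete initial path $\pi_\redf$ in $\TS_\redf$.

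My approach for the forward direction is an inductive/coinductive construction. Given $\pi = s_0 \transition{a_1} s_1 \transition{a_2} s_2 \cdots$, I would inspect the stubborn set $T$ chosen at $s_0$ in $\TS_\redf$. If some $a_i$ in $\pi$ belongs to $T$, the commutativity-style stubborn set conditions should let me commute $a_i$ leftwards past $a_1,\dots,a_{i-1}$ (which by definition of stubbornness are independent of $a_i$ in the relevant sense), producing a path weakly equivalent to $\pi$ that begins with an action in $T$. I would then prepend this action, move to its target state, and recurse on the suffix. If no action of $\pi$ lies in $T$, the key action condition (together with the existence of some enabled transition, since the state is non-deadlocking) should still supply a $T$-action to execute; I would invoke the liveness/key-action condition to show that the construction can always make progress or terminate in a genuine deadlock. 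Iterating, in the limit I obtain a complete initial path $\pi_\redf$ in $\TS_\redf$ that is weakly equivalent to $\pi$.

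The main obstacle — and the step I expect to be delicate — is the final upgrade from weak equivalence to stutter equivalence. The natural hope is that, because every action executed out of the stubborn set order is invisible (so by the very definition of $\Inv$, it preserves the state label), the rearrangement cannot create or destroy label changes. Concretely, one would like a lemma of the following shape: whenever two complete initial paths share the same projection onto visible actions and the intermediate swaps all involve invisible actions, the two paths yield the same no-stutter trace, i.e., the LSTS in question is labelled consistently in the sense of Definition~\ref{def:consistent_labelling}.

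This lemma is exactly where I expect the proof to run into trouble. The subtlety is that invisibility only constrains the endpoints of a single transition; it says nothing about how invisible transitions interact with the labelling of intermediate states reached \emph{after} reordering. In particular, the two paths $\pi$ and $\pi_\redf$ traverse different intermediate states, so their no-stutter traces need not coincide even though their visible-action sequences do. I would try to patch this by strengthening the visibility condition on the stubborn set, requiring that enabling of a visible action forces \emph{all} enabled actions into $T$, and then arguing inductively that matched prefixes of $\pi$ and $\pi_\redf$ agree up to stuttering. If this argument cannot be closed — which, given the paper's framing around the inconsistent labelling problem, I expect to be the case — the theorem as stated is false and would require exactly the kind of strengthened condition alluded to in the introduction.
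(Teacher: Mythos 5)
Your conclusion is the correct one, and it matches the paper's own treatment of this statement: the theorem is \emph{false} as stated, and the paper does not prove it but refutes it. The paper observes that the original proof of \cite[Theorem 2]{Valmari1992} correctly establishes only the four weak-equivalence items listed in Section~\ref{sec:weak_stutter} (every complete initial path of $\TS$ has a \emph{weakly} equivalent counterpart in $\TS_r$ and vice versa), and that the step from weak equivalence to stutter equivalence fails for exactly the reason you isolate: the reordered path visits different intermediate states, and invisibility of an action only constrains the labels at the two endpoints of transitions that actually occur, not the labels of the states produced by commutation. This is precisely the ``inconsistent labelling problem,'' witnessed concretely by the LSTS $\TS^C$ of Figure~\ref{fig:counter_example}, where the trace $\emptyset\,\{q\}\,\emptyset\,\emptyset\,\{q\}^\omega$ of $\TS^C$ has no stutter-equivalent counterpart in $\TS^C_r$ even though the two LSTSs are weak-trace equivalent. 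The one point where your sketch diverges from the paper's resolution is the proposed patch: strengthening the visibility condition (forcing all enabled actions into the stubborn set when a visible one is present) is not the route taken; instead the paper strengthens \textbf{D1} to \textbf{D1'}, which demands the intermediate ``vertical'' transitions $s_i \transition{\act} s'_i$ whenever $\act$ is invisible, thereby forcing $L(s_i) = L(s'_i)$ locally and restoring stutter equivalence of the two commuted paths (Lemma~\ref{lem:D1'_labels}).
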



The original proof correctly establishes the four items listed below.
For a long time it was believed that they suffice to ensure that $\TS_r$ gives 
the same truth values to LTL$_{-X}$ formulas as $\TS$ gives.
While investigating the application of stubborn sets to parity 
games~\cite{NeeleWW2020}, Thomas Neele (the main author of the current paper, 
but not the author of this sentence) took the effort of checking this 
self-evident ``fact'', and found out that it does not hold.
We call this the \emph{inconsistent labelling problem}.
A counter-example is in Section~\ref{sec:counter_example}.
\begin{enumerate}
\item Every initial deadlocking path of $\TS$ has a weakly equivalent initial
deadlocking path in $\TS_r$.
\item Every initial deadlocking path of $\TS_r$ has a weakly equivalent
initial deadlocking path in $\TS$.
\item Every initial infinite path of $\TS$ has a weakly equivalent initial
infinite path in $\TS_r$.
\item Every initial infinite path of $\TS_r$ has a weakly equivalent initial
infinite path in $\TS$.
\end{enumerate}
Because the four items in this list are sufficient for $\TS \weakeq \TS_r$, the 
issue could be resolved with the additional requirement that $\TS$ is 
consistently labelled, which would yield $\TS \stuteq \TS_r$ (since $\TS_r$ is 
a subgraph of $\TS$, see Definition~\ref{def:red_lsts}).
However, this requirement is rather strong; we propose a more local solution in 
Section~\ref{sec:strengthen_d1}.

\section{Stubborn Sets}\label{sec:stubborn_sets}

\subsection{Basic Ideas}

In POR, \emph{reduction functions} play a central role.
A reduction function $\redf: S \to 2^\Act$ indicates which actions to explore
in each state.
When starting at the initial state $\init{s}$, a reduction function induces a 
\emph{reduced LSTS} as follows.

\begin{defi}
	\label{def:red_lsts}
	Let $\TS = (S,\edgerel,\init{s},L)$ be an LSTS and $\redf: S \to 2^\Act$ a 
	reduction function.
	Then the \emph{reduced LSTS} induced by $\redf$ is defined as $\TS_r = 
	(S_r,\rededgerel,\init{s},L_r)$, where $L_r$ is the restriction of $L$ on 
	$S_r$, and $S_r$ and $\rededgerel$ are the smallest sets such that the 
	following holds:
\begin{itemize}
\item $\init{s} \in S_r$; and
\item If $s \in S_r$, $s \transition{\act} s'$ and $\act \in \redf(s)$, then
$s' \in S_r$ and $s \redtransition{\act} s'$.
\end{itemize}
\end{defi}

Note that we have ${\rededgerel} \subseteq {\edgerel}$.\smallskip

In the first paper on stubborn sets~\cite{Valmari1988}, the set $r(s)$ was
constructed so that if enabled actions exist, then it contains an enabled
action that the outside world cannot disable.
This inspired the thought that the set is ``stubborn'', that is, determined to
do something and not letting the outside world prevent it.
Much more than this is needed to make $\TS_r$ yield correct answers to
verification questions concerning $\TS$.
Furthermore, some more recent methods do not necessarily put an enabled action
in $r(s)$ even if enabled actions do exist.
So the name is imprecise, but has remained in use.

The main question now is how to implement a practical reduction function so
that answers to interesting verification questions can be obtained from the
reduced LSTSs.
Because this publication is about fixing an error that had been lurking for
decades, we feel appropriate to present the full proof of the affected theorem
anew as clearly as possible, in more detail than originally, to minimise the
possibility that other errors remain.
To this end, we proceed in small steps.

We first discuss the motivating example from Figure~\ref{fig:pn_example}, 
reproduced here in Figure~\ref{fig:D1D2wExample}.
%
Assume that we know that the places adjacent to $t_3$ are $p_3$ and $p_4$;
they contain 2 and 1 tokens, respectively; the transitions adjacent to $p_3$
and $p_4$ are $t_2$ to $t_5$; and the arcs between them and their weights are
as is shown in Figure~\ref{fig:D1D2wExample}.
That is, we know the black part but not the grey part in the figure.
Although our knowledge is incomplete, we can reason as follows that $t_3$ is
enabled and remains enabled until $t_3$ or $t_5$ occurs.
It is enabled by the numbers of tokens in $p_3$ and $p_4$, and by the weights
of the arcs from them to $t_3$.
An occurrence of $t_2$ does not decrement the numbers of tokens in $p_3$ and
$p_4$, so it cannot disable $t_3$.
The same applies to $t_1$ and $t_6$.
An occurrence of $t_4$ decrements the number of tokens in $p_3$ (but not in
$p_4$).
However, thanks to the arc weight 2, it is guaranteed to leave at least 2
tokens in $p_3$.
So it cannot disable $t_3$ either.

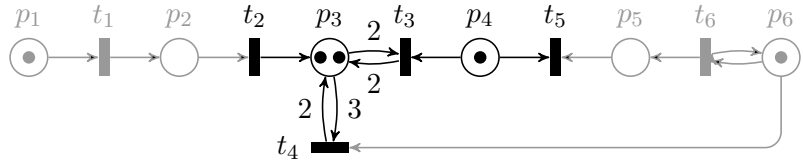
\begin{figure}
	\centering
	\begin{tikzpicture}[->,>=stealth',shorten >=0pt,auto,node 
	distance=2.0cm,semithick,
	every place/.style={draw,minimum size=5mm}]
	
	\begin{scope}
	\tikzstyle{vtransition} = [fill,inner sep=0pt,minimum 
	width=1.4mm,minimum height=5mm]
	\tikzstyle{htransition} = [fill,inner sep=0pt,minimum 
	width=5mm,minimum height=1.4mm]
	
	\def\x{1}
	\def\y{1.2}
	\colorlet{hiddencolor}{black!40}
	\node[place,label={[color=hiddencolor]above:$p_1$},hiddencolor]      (p1) 
	at (0,\y) {}
	[children are tokens] child {node[token,fill=hiddencolor] {}};
	\node[place,label={[color=hiddencolor]above:$p_2$},hiddencolor]      (p2) 
	at (2*\x,\y) {};
	\node[place,label={above:$p_3$},tokens=2]       (p3) at (4*\x,\y) {};
	\node[place,label={above:$p_4$},tokens=1]       (p4) at (6*\x,\y) {};
	\node[place,label={[color=hiddencolor]above:$p_5$},hiddencolor]      (p5) 
	at (8*\x,\y) {};
	\node[place,label={[color=hiddencolor]above:$p_6$},hiddencolor]      (p6) 
	at (10*\x,\y) {}
	[children are tokens] child {node[token,fill=hiddencolor] {}};
	
	\node[vtransition,label={[color=hiddencolor]above:$\tr_1$},fill=hiddencolor]
	(t1) at (1*\x,\y) {};
	\node[vtransition,label={above:$\tr_2$}]  (t2) at (3*\x,\y) {};
	\node[vtransition,label={above:$\tr_3$}]  (t3) at (5*\x,\y) {};
	\node[vtransition,label={above:$\tr_5$}]  (t5) at (7*\x,\y) {};
	\node[vtransition,label={[color=hiddencolor]above:$\tr_6$},fill=hiddencolor]
	(t6) at (9*\x,\y){};
	\node[htransition,label={left:$\tr_4$}]   (t4) at (4*\x,0) {};
	
	\path
	(t2) edge (p3)
	(p3) edge[bend left=10] node {2} (t3.150)
	(t3.210) edge[bend left=10] node {2} (p3)
	(p3) edge[bend left=10] node {3} (t4.60)
	(t4.120) edge[bend left=10] node {2} (p3)
	(p4) edge (t3) (p4) edge (t5)
	;
	\path[draw=hiddencolor]
	(p1) edge (t1) (t1) edge (p2)
	(p2) edge (t2)
	(p5) edge (t5) (t6) edge (p5)
	(p6) edge[bend left=10] (t6.330) (t6.30) edge[bend left=10] (p6)
	;
	\draw[rounded corners=6pt,hiddencolor]
	(p6) |- (t4);
	
	\end{scope}
	\end{tikzpicture}
	\caption{An example motivating \textbf{D1} and \textbf{D2w}.}
	\label{fig:D1D2wExample}
\end{figure}

This is an example of the kind of observations that stubborn set methods
exploit.
Together with some other observations that will be discussed soon, it will let
us choose $\redf(s) = \{t_3, t_5, t_6\}$, where $s$ denotes the marking shown
in Figure~\ref{fig:D1D2wExample}.
Unfortunately, the observation is Petri net-specific.
We now introduce a more abstract notion that captures the same idea: $t_3$ is
a \emph{key action} of $\redf(s) = \{t_3, t_5, t_6\}$ in the sense of the
following definition.
\begin{defi}
An action $\act$ is a \emph{key action} of $\redf(s)$ in $s$ if and only if
for all paths $s \transition{\act_1\dots\act_n} s'$ such that $\act_1 \notin
\redf(s), \dots, \act_n \notin \redf(s)$, it holds that $s'
\transition{\act}$.
\end{defi}
We typically denote key actions by $\keyact$.
Note that a key action must be enabled in $s$: by setting $n=0$, we have $s = 
s'$ and $s \transition{\act}$.

Many stubborn set methods assume that the sets $\redf(s)$ satisfy the
following condition.
\begin{description}[leftmargin=!,labelwidth=\widthof{\bfseries D2w:}]
	\item[\textbf{D2w}] If $\enabled(s) \neq \emptyset$, then $\redf(s)$ 
	contains a key action in $s$.
\end{description}

In Figure~\ref{fig:D1D2wExample}, $t_5$ is not a key action of $\{t_3, t_5,
t_6\}$, because it is disabled.
Also $t_6$ is not, because the sequence $t_1 t_2 t_4$ disables it.

\begin{figure}[t]
	\centering
	\resizebox{\textwidth}{!}{
		\begin{tikzpicture}[->,>=stealth',shorten >=0pt,auto,node 
		distance=2.0cm,semithick]
		\small
		\begin{scope}
			\def\d{1.4}
			\node (s)     at (0,\d)        {$s$};
			\node (s1)    at (\d,\d)       {$s_1$};
			\node (d)     at (1.65*\d,\d)  {$\dots$};
			\node (sn-1)  at (2.5*\d,\d)   {$s_{n-1}$};
			\node (sn)    at (3.5*\d,\d)   {$s_n$};
			\node (s'n)   at (3.5*\d,0)    {$s'_n$};
			\node         at (1.65*\d,\d+0.8) {$\notin \redf(s)$};
			\draw[-] decorate[decoration={name=brace,amplitude=5pt}] {
				($(s.north east) + (0,0.2)$) -- ($(sn.north west) + (0,0.2)$)
			};
			\path
			(s)     edge node {$\act_1$} (s1)
			(d)     edge              (sn-1)
			(sn-1)  edge node {$\act_n$} (sn)
			(sn)    edge node {$\act \in \redf(s)$}   (s'n);
			\path[-]
			(s1)    edge              (d);
		\end{scope}
		\node at (6.9,0.7) {$\Rightarrow$};
		\begin{scope}[xshift=7.7cm]
			\def\d{1.4}
			\node (s)     at (0,\d)        {$s$};
			\node (s1)    at (\d,\d)       {$s_1$};
			\node (d)     at (1.65*\d,\d)  {$\dots$};
			\node (sn-1)  at (2.5*\d,\d)   {$s_{n-1}$};
			\node (sn)    at (3.5*\d,\d)   {$s_n$};
			\node (s')    at (0,0)         {$s'$};
			\node (s'1)   at (\d,0)        {$s'_1$};
			\node (s'n-1) at (2.5*\d,0)    {$s'_{n-1}$};
			\node (s'n)   at (3.5*\d,0)    {$s'_n$};
			\path
			(s1)    edge node {$\act$}   (s'1)
			(sn-1)    edge node {$\act$}   (s'n-1)
			(s)     edge node {$\act_1$} (s1)
			(d)     edge              (sn-1)
			(sn-1)  edge node {$\act_n$} (sn)
			(sn)    edge node {$\act$}   (s'n);
			\path[-]
			(s1)    edge              (d);
			\path
			(s)     edge node {$\act$}   (s');
		\end{scope}
		\node at (13.6,0.7) {$\Rightarrow$};
		\end{tikzpicture}
	}
	\resizebox{\textwidth}{!}{
		\begin{tikzpicture}[->,>=stealth',shorten >=0pt,auto,node 
		distance=2.0cm,semithick]
		\small
		\begin{scope}
			\def\d{1.4}
			\node (s)     at (0,\d)        {$s$};
			\node (s1)    at (\d,\d)       {$s_1$};
			\node (d)     at (1.65*\d,\d)  {$\dots$};
			\node (sn-1)  at (2.5*\d,\d)   {$s_{n-1}$};
			\node (sn)    at (3.5*\d,\d)   {$s_n$};
			\node (s')    at (0,0)         {$s'$};
			\node (s'1)   at (\d,0)        {$s'_1$};
			\node (s''1)  at (\d-.2,-.5)   {$s''_1$};
			\node (s''2)  at (2*\d-.2,-.5) {$s''_2$};
			\node (d'')    at (2.5*\d-.1,-.5)  {$\dots$};
			\node (s'n-1) at (2.5*\d,0)    {$s'_{n-1}$};
			\node (s'n)   at (3.5*\d,0)    {$s'_n$};
			\node (s''n)   at (3.5*\d-.2,-.5) {$s''_n$};
			\path
			(s)     edge node {$\act$}   (s')
			(s1)    edge node {$\act$}   (s'1)
			(sn-1)  edge node {$\act$}   (s'n-1)
			(sn)    edge node {$\act$}   (s'n)
			(s)     edge node {$\act_1$} (s1)
			(d)     edge              (sn-1)
			(sn-1)  edge node {$\act_n$} (sn)
			(s')    edge node {$\!\!\!\act_1$} (s''1)
			(s'1)   edge node {$\!\!\!\act_2$} (s''2)
			(s'n-1) edge node {$\!\!\!\act_n$} (s''n);
			\path[-]
			(s1)    edge              (d);
		\end{scope}
		\node at (6.9,0.7) {$\Rightarrow$};
		\begin{scope}[xshift=7.7cm]
			\def\d{1.4}
			\node (s)     at (0,\d)        {$s$};
			\node (s1)    at (\d,\d)       {$s_1$};
			\node (d)     at (1.65*\d,\d)  {$\dots$};
			\node (sn-1)  at (2.5*\d,\d)   {$s_{n-1}$};
			\node (sn)    at (3.5*\d,\d)   {$s_n$};
			\node (s')    at (0,0)         {$s'$};
			\node (s'1)   at (\d,0)        {$s'_1$};
			\node (d')    at (1.65*\d,0)   {$\dots$};
			\node (s'n-1) at (2.5*\d,0)    {$s'_{n-1}$};
			\node (s'n)   at (3.5*\d,0)    {$s'_n$};
			\path
			(s1)    edge node {$\act$}   (s'1)
			(sn-1)    edge node {$\act$}   (s'n-1)
			(s)     edge node {$\act_1$} (s1)
			(d)     edge              (sn-1)
			(sn-1)  edge node {$\act_n$} (sn)
			(sn)    edge node {$\act$}   (s'n);
			\path[-]
			(s1)    edge              (d);
			\path
			(s')    edge node {$\act_1$} (s'1)
			(d')    edge              (s'n-1)
			(s'n-1) edge node {$\act_n$} (s'n)
			(s)     edge node {$\act$}   (s');
			\path[-]
			(s'1)   edge              (d');
		\end{scope}
		\node at (13.6,0.7) {\phantom{$\Rightarrow$}};
		\end{tikzpicture}
	}
	\caption{Visual representation of why \textbf{D1} holds on the example.}
	\label{fig:PN_D1}
\end{figure}
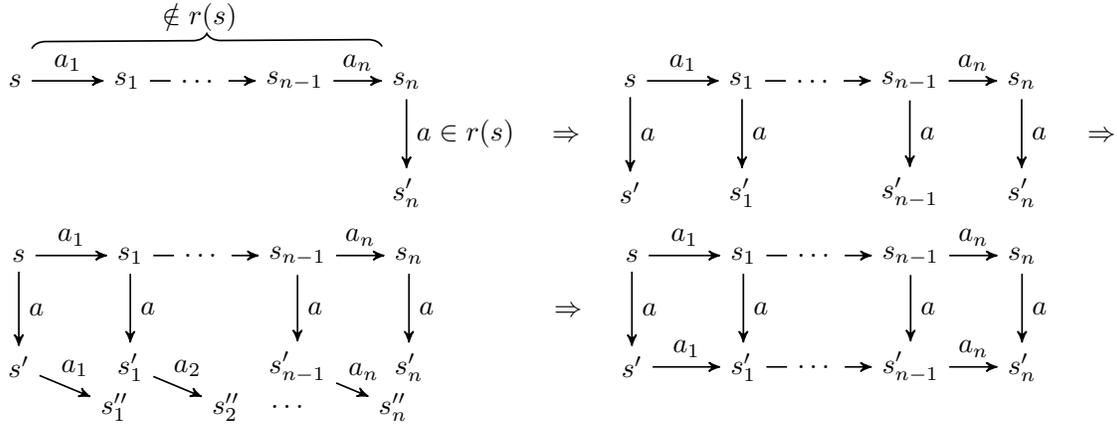

On the other hand, we now show that $t_3$, $t_5$ and $t_6$ have another
property that stubborn set methods exploit: Figure~\ref{fig:PN_D1} holds for
each of them in the role of $a$ and any finite sequence of elements of $\{t_1,
t_2, t_4\}$ in the role of $a_1 \ldots a_n$.
We call $t_1$, $t_2$ and $t_4$ the outside transitions.

Although the outside transitions can disable $t_6$, they cannot enable it
again, because none of them can add tokens to $p_6$.
Therefore, if $t_6$ is enabled after the occurrence of some sequence $a_1
\ldots a_n$ of outside transitions, then it was enabled in the original
marking $s$ and in every marking between $s$ and $s_n$.
This is illustrated by the first implication in Figure~\ref{fig:PN_D1}, with
$t_6$ in the role of $a$.
The first implication applies to $t_3$ as well, because its right-hand side
applies, because $t_3$ is a key action of $\{t_3, t_5, t_6\}$ in $s$.

Neither $t_3$ nor $t_6$ can disable outside transitions, because although they
temporarily consume tokens from $p_3$ or $p_6$, they put the same number of
tokens back to them; and the outside transitions do not need tokens from
$p_4$.
This yields the second implication in the figure.
Furthermore, Petri nets are commutative in the sense that if $m \transition{t'
t} m'$ and $m \transition{t t'} m''$, then $m' = m''$.
The last implication in the figure holds because of this.

The implication chain also applies to $t_5$ as $a$, but for a different
reason: $t_5$ is disabled, and no sequence of outside transitions can enable
it, because only $t_6$ can enable it.
Therefore, no member of the chain holds for $t_5$, so the chain holds
vacuously.

Again, we appealed to particular properties of Petri nets.
To make the ideas applicable to a wide variety of formalisms for representing
systems, we introduce the following condition, which is required to hold for 
all $\redf(s)$.
It is illustrated in Figure~\ref{fig:condition_D1}.
We showed above that it holds for $\redf(s) = \{t_3, t_5, t_6\}$ in
Figure~\ref{fig:D1D2wExample}.

\begin{description}[leftmargin=!,labelwidth=\widthof{\bfseries D1:}]
	\item[\textbf{D1}] For all states $s_1,\dots,s_n,s'_n$ and all $\act 
	\in \redf(s)$ and $\act_1 \notin \redf(s), \dots, \act_n \notin \redf(s)$, 
	if $s \transition{\act_1} \dots \transition{\act_n} s_n \transition{\act} 
	s'_n$, then there are states $s',s'_1,\dots,s'_{n-1}$ such that $s 
	\transition{\act} s' \transition{\act_1} s'_1 \transition{\act_2} \dots 
	\transition{\act_n} s'_n$.
\end{description}

\begin{figure}[t]
	\centering
	\resizebox{\textwidth}{!}{
		\begin{tikzpicture}[->,>=stealth',shorten >=0pt,auto,node 
		distance=2.0cm,semithick]
		\small
		\begin{scope}
			\def\d{1.4}
			\node (s)     at (0,\d)        {$s$};
			\node (s1)    at (\d,\d)       {$s_1$};
			\node (d)     at (1.65*\d,\d)  {$\dots$};
			\node (sn-1)  at (2.5*\d,\d)   {$s_{n-1}$};
			\node (sn)    at (3.5*\d,\d)   {$s_n$};
			\node (s'n)   at (3.5*\d,0)    {$s'_n$};
			\node         at (1.65*\d,\d+0.8) {$\notin \redf(s)$};
			\draw[-] decorate[decoration={name=brace,amplitude=5pt}] {
				($(s.north east) + (0,0.2)$) -- ($(sn.north west) + (0,0.2)$)
			};
			\path
			(s)     edge node {$\act_1$} (s1)
			(d)     edge              (sn-1)
			(sn-1)  edge node {$\act_n$} (sn)
			(sn)    edge node {$\act \in \redf(s)$}   (s'n);
			\path[-]
			(s1)    edge              (d);
		\end{scope}
		\node at (6.9,0.7) {$\Rightarrow$};
		\begin{scope}[xshift=7.7cm]
			\def\d{1.4}
			\node (s)     at (0,\d)        {$s$};
			\node (s1)    at (\d,\d)       {$s_1$};
			\node (d)     at (1.65*\d,\d)  {$\dots$};
			\node (sn-1)  at (2.5*\d,\d)   {$s_{n-1}$};
			\node (sn)    at (3.5*\d,\d)   {$s_n$};
			\node (s')    at (0,0)         {$s'$};
			\node (s'1)   at (\d,0)        {$s'_1$};
			\node (d')    at (1.65*\d,0)   {$\dots$};
			\node (s'n-1) at (2.5*\d,0)    {$s'_{n-1}$};
			\node (s'n)   at (3.5*\d,0)    {$s'_n$};
			\path
			(s)     edge node {$\act_1$} (s1)
			(d)     edge              (sn-1)
			(sn-1)  edge node {$\act_n$} (sn)
			(sn)    edge node {$\act$}   (s'n);
			\path[-]
			(s1)    edge              (d);
			\path
			(s')    edge node {$\act_1$} (s'1)
			(d')    edge              (s'n-1)
			(s'n-1) edge node {$\act_n$} (s'n)
			(s)     edge node {$\act$}   (s');
			\path[-]
			(s'1)   edge              (d');
		\end{scope}
		\end{tikzpicture}
	}
	\caption{Visual representation of condition \textbf{D1}.}
	\label{fig:condition_D1}
\end{figure}
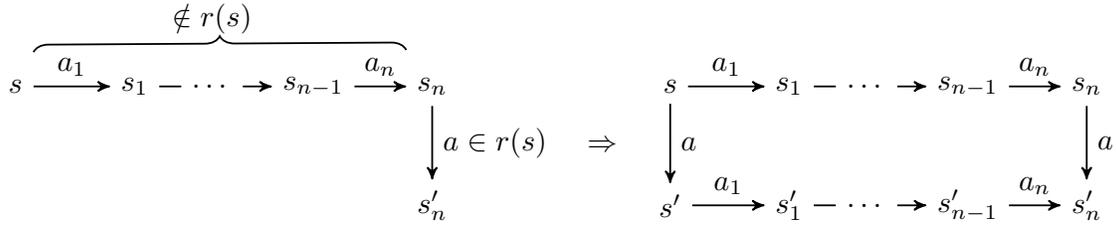

\subsection{Deadlock Detection and Its Implementation}
\label{sec:stub_deadlock}

The conditions \textbf{D2w} and \textbf{D1} are important, because they
suffice for proving that all reachable deadlocks of the full LSTS are present 
also in the reduced LSTS.
Furthermore, the deadlocks can be reached in the reduced LSTS by re-ordering
the actions in the paths in the full LSTS that lead to them.
In the theory below, recall that $\rededgerel$ indicates which transitions 
occur in the reduced LSTS.

\begin{thm}\label{thm:deadlock}
Assume that each $\redf(s)$ obeys \textbf{D1} and \textbf{D2w}.
If $s_0 \in S_r$, $s_n$ is a deadlock in $\TS$, and $s_0 \transition{a_1} s_1
\transition{a_2} \ldots \transition{a_n} s_n$, then there is a permutation
$b_1 \ldots b_n$ of $a_1 \ldots a_n$ such that $s_0 \redtransition{b_1 \ldots
b_n} s_n$.
\end{thm}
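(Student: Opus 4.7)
The plan is to prove this by induction on the length $n$ of the path $s_0 \transition{a_1} \cdots \transition{a_n} s_n$.

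The base case $n = 0$ is immediate: the path is empty, and $s_0 \redtransition{\varepsilon} s_0$ holds trivially since $s_0 \in S_r$. For the inductive step, assume the statement for all shorter paths and consider a path of length $n \geq 1$ ending in the deadlock $s_n$. Because the path has positive length, $\enabled(s_0) \neq \emptyset$, so by \textbf{D2w} the set $\redf(s_0)$ contains a key action $\keyact$. The first key move is to locate the least index $k$ with $a_k \in \redf(s_0)$. I claim such a $k$ exists: if $a_1, \ldots, a_n$ were all outside $\redf(s_0)$, then by the definition of a key action applied to $s_0 \transition{a_1 \cdots a_n} s_n$, we would have $s_n \transition{\keyact}$, contradicting that $s_n$ is a deadlock.

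Next I would apply \textbf{D1} to the prefix $s_0 \transition{a_1} \cdots \transition{a_{k-1}} s_{k-1} \transition{a_k} s_k$, with $a_1, \ldots, a_{k-1} \notin \redf(s_0)$ and $a_k \in \redf(s_0)$. This yields states $s'_0, s'_1, \ldots, s'_{k-2}$ such that
\[
s_0 \transition{a_k} s'_0 \transition{a_1} s'_1 \transition{a_2} \cdots \transition{a_{k-1}} s_k.
\]
Concatenating with the remaining suffix $s_k \transition{a_{k+1}} \cdots \transition{a_n} s_n$ produces a new path of length $n$ from $s_0$ to the same deadlock $s_n$ whose first action $a_k$ lies in $\redf(s_0)$. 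In particular $s_0 \redtransition{a_k} s'_0$, so $s'_0 \in S_r$.

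Now I would apply the induction hypothesis to the suffix path of length $n-1$ from $s'_0$ to $s_n$, obtaining a permutation $b_2 \cdots b_n$ of $a_1, \ldots, a_{k-1}, a_{k+1}, \ldots, a_n$ with $s'_0 \redtransition{b_2 \cdots b_n} s_n$. Prefixing with $a_k$ gives $s_0 \redtransition{a_k b_2 \cdots b_n} s_n$, and $a_k b_2 \cdots b_n$ is a permutation of $a_1 \cdots a_n$, completing the induction. The main obstacle, and the only place the argument is not purely bookkeeping, is justifying the existence of the index $k$: it is exactly here that both \textbf{D2w} (to produce the key action) and the deadlock assumption on $s_n$ (to rule out the possibility that no $a_i$ lies in $\redf(s_0)$) are simultaneously needed.
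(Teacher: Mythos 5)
Your proof is correct and follows essentially the same route as the paper's: induction on $n$, using \textbf{D2w} and the key-action property to force some $a_k \in \redf(s_0)$ (since otherwise the deadlock $s_n$ would enable $\keyact$), then applying \textbf{D1} to the minimal such $k$ to move $a_k$ to the front as a $\redtransition{}$-step, and invoking the induction hypothesis on the resulting shorter path. No gaps.
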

\begin{proof}
We prove the claim by induction on $n$.
If $n = 0$ then $s_n = s_0$ and $a_1 \ldots a_n = \varepsilon$, so the claim
holds trivially with $b_1 \ldots b_n = \varepsilon$.

From now on, let $n > 0$.
We have $s_0 \transition{a_1}$ and thus $\enabled(s_0) \neq \emptyset$.
By \textbf{D2w}, $\redf(s_0)$ contains a key action $\keyact$.
If none of $a_1$, \ldots, $a_n$ is in $\redf(s_0)$, then by definition $s_n
\transition{\keyact}$.
However, that cannot be the case, because we assumed that $s_n$ is a deadlock.
Therefore, there is $1 \leq i \leq n$ such that $a_i \in \redf(s_0)$.
We choose the smallest such $i$, yielding $a_j \notin \redf(s_0)$ for $1 \leq
j < i$.
By this choice, \textbf{D1} applies with $a_i$ in the role of $a$.
So there are states $s'_0$, $s'_1$, \dots, $s'_{i-1}$ such that $s'_{i-1} =
s_i$ and $s_0 \transition{\act_i} s'_0 \transition{\act_1} s'_1
\transition{\act_2} \dots \transition{\act_{i-1}} s'_{i-1}$.
Because $a_i \in \redf(s_0)$ we have $s'_0 \in S_r$ and $(s_0, a_i, s'_0) \in
{\rededgerel}$, that is, $s_0 \redtransition{\act_i} s'_0$.
We remember that $s_i \transition{a_{i+1} \ldots a_n} s_n$, so $s'_0
\transition{a_1 \ldots a_{i-1} a_{i+1} \ldots a_n} s_n$.
This path is one shorter than $s_0 \transition{a_1 \ldots a_n} s_n$.
Therefore, the induction assumption yields a permutation $b_2 \ldots b_n$ of
$a_1 \ldots a_{i-1} a_{i+1} \ldots a_n$ such that $s'_0 \redtransition{b_2 
\ldots b_n} s_n$.
As a consequence, $s_0 \redtransition{a_i b_2 \ldots b_n} s_n$.
\end{proof}

The preservation of deadlocks needs also the following facts, which are easy
to check from the definitions.
\begin{itemize}
\item If $s_0 \redtransition{a_1 \ldots a_n} s_n$, then $s_0 \transition{a_1
\ldots a_n} s_n$.
\item $s \in S_r$ is a deadlock in $\TS_r$ if and only if it is a deadlock in
$\TS$.
\end{itemize}

To implement this deadlock detection method, an algorithm is needed that,
given state~$s$, computes a set $\redf(s)$ that satisfies \textbf{D1} and
\textbf{D2w}.
We already illustrated with Figure~\ref{fig:D1D2wExample} that this may depend
on the details of the formalism used to represent the system under
verification.
Because it is sometimes very difficult to check whether \textbf{D1} and
\textbf{D2w} hold, the algorithms rely on formalism-specific heuristics that
may give a false negative but cannot give a false positive.
The set of all actions satisfies \textbf{D1} and \textbf{D2w}.
While it yields no reduction, it can be used as a fall-back when attempts to
find a better set fail.

The algorithm design problem also involves a trade-off between the time it
takes to compute the set, and the quality of the set: smaller sets tend to
result in smaller reduced LSTSs (although this issue is not
straightforward~\cite{Valmari2017a}).
In the case of 1-safe Petri nets, testing whether a singleton set $\{t\}$ is a
valid $\redf(s)$ for the purpose of preserving all deadlocks is
\textbf{PSPACE}-hard~\cite{ValmariH11}.
This means that there is not much hope of a fast algorithm that always yields
the best possible $r(s)$.

Instead, algorithms range from quick and simple that exploit only the most
obvious reduction possibilities, to very complicated that spend unreasonable
amounts of time and memory in trying to find a set with few enabled actions.
For instance, after finding out that $t_5$ may disable $t_3$ in
Figure~\ref{fig:D1D2wExample}, $\{t_3\}$ must be rejected as a candidate
$\redf(s)$.
A simple algorithm might revert to the set of all actions, while a more
complicated algorithm might try $\{t_3, t_5\}$, detect that $t_6$ might enable
$t_5$, try $\{t_3, t_5, t_6\}$, and find out that it works.

Fortunately, the kind of analysis that led us from $\{t_3\}$ to $\{t_3, t_5,
t_6\}$ is not at all too expensive, if we are okay with some imperfection.
It can be performed in linear time by formulating it as the problem of finding
certain kinds of maximal strongly connected components in a directed graph
whose edges $t \leadsto t'$ represent the notion ``if $t \in \redf(s)$, then
also $t' \in \redf(s)$'' (e.g.,~\cite{Valmari2017a}).
The result is optimal in a sense that is meaningful albeit not
perfect~\cite{ValmariH11}.
(In the light of \textbf{PSPACE}-hardness, we should not expect perfection.)

One of the things that it cannot optimise is which enabled action to choose as
a key action, if many are available.
In our example, it would have been possible to choose $t_1$ or $t_6$ instead
of $t_3$.
Because $t_6$ may be disabled by $t_4$, which is disabled until $t_2$ occurs,
which is disabled until $t_1$ occurs, the choice of $t_6$ introduces the edges
$t_6 \leadsto t_4 \leadsto t_2 \leadsto t_1$.
The resulting $\redf(s)$ would be $\{t_1\}$, because $t_1$ is enabled and does
not compete for tokens with any other transition.
That is, the algorithm is clever enough to drop $t_6$ in favour of $t_1$, but
not clever enough to drop $t_3$ in favour of $t_1$.

The linear time algorithm discussed above makes all enabled actions in
$\redf(s)$ its key actions.
Some other stubborn set methods than the deadlock detection method exploit this
(e.g.,~\cite{Valmari2017b}), so it is a good idea to make it show in the
conditions.
Therefore, an alternative to \textbf{D2w} has been defined that says that all
enabled actions in $\redf(s)$ must be key actions.
To avoid choosing $\redf(s) = \emptyset$ when there are enabled actions, yet
another condition \textbf{D0} is introduced.

\begin{description}[leftmargin=!,labelwidth=\widthof{\bfseries D2:}]
	\item[\textbf{D0}] If $\enabled(s) \neq \emptyset$, then $\redf(s) \cap 
	\enabled(s) \neq \emptyset$.
\item[\textbf{D2}] Every enabled action in $\redf(s)$ is its key action in
$s$.
\end{description}

Clearly \textbf{D0} and \textbf{D2} together imply \textbf{D2w}, and
\textbf{D2w} implies \textbf{D0}.
Methods that build on \textbf{D2} are called \emph{strong} stubborn set
methods, while those only assuming \textbf{D2w} are \emph{weak}.


Please remember that the set $\Act$ of all actions is partitioned to the set
$\Inv \subseteq \Act$ of invisible actions and the set $\Act \setminus \Inv$
of visible actions.
We recall how \textbf{D1} was used in the proof of Theorem~\ref{thm:deadlock}.
The full LSTS contains the path $s_0 \transition{a_1 \ldots a_i} s_i$ where
$s_0 \in S_r$, $a_i \in \redf(s)$ and $a_j \notin \redf(s)$ for $1 \leq j <
i$.
\textbf{D1} implies the existence of $s'_0$ and the path $s_0 \transition{a_i}
s'_0 \transition{a_1 \ldots a_{i-1}} s_i$ such that $s_0 \redtransition{a_i}
s'_0$.
This pattern repeats in many proofs in the stubborn set theory.
The following condition guarantees that when using the pattern, the projection
of the action sequence on the visible actions does not change.
\begin{description}[leftmargin=!,labelwidth=\widthof{\bfseries V:}]
	\item[\textbf{V}] If $\redf(s)$ contains an enabled visible action, then it 
	contains all visible actions.
\end{description}
\begin{lem}\label{lem:D1_vis}
Assume that \textbf{D1} yields $\rho = s \transition{a_i a_1 \ldots a_{i-1}} s'$
from $\pi = s \transition{a_1 \ldots a_{i-1} a_i} s'$.
If \textbf{V} holds, then $\vis(\pi) = \vis(\rho)$.
\end{lem}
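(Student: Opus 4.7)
My plan is to observe that the two paths $\pi$ and $\rho$ use exactly the same multiset of actions, namely $\{a_1,\ldots,a_{i-1},a_i\}$, differing only in where $a_i$ is placed. Since $\vis$ removes the invisible letters while preserving the order of the remaining ones, the two projections can only disagree on where $a_i$ sits relative to the other letters. Hence the proof reduces to a case distinction on whether $a_i$ is visible.

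First I would note the key observation from the hypothesis: the \textbf{D1}-produced path $\rho$ begins with $s\transition{a_i} s'$, so $a_i$ is enabled in $s$. I would also recall that by the form of \textbf{D1}, $a_i\in\redf(s)$ while $a_1,\ldots,a_{i-1}\notin\redf(s)$.

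Next I would split into two cases. In the case $a_i\in\Inv$, the action $a_i$ is erased from both sequences, so trivially $\vis(\pi)=\vis(a_1\ldots a_{i-1})=\vis(\rho)$. In the case $a_i\notin\Inv$, condition \textbf{V} applies, because $\redf(s)$ contains the enabled visible action $a_i$; therefore $\redf(s)$ contains \emph{every} visible action. Since none of $a_1,\ldots,a_{i-1}$ lies in $\redf(s)$, they must all be invisible, so $\vis(\pi)=a_i=\vis(\rho)$.

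I do not expect a genuine obstacle here; the only subtlety is remembering that the existence of $s \transition{a_i} s'$ (guaranteed by the conclusion of \textbf{D1}) is what lets us invoke the ``enabled'' clause in \textbf{V}. If $a_i$ were not enabled in $s$, \textbf{V} would not be triggered and the second case would fail, so it is important to phrase the argument in terms of $\rho$ rather than $\pi$.
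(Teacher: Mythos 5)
Your proof is correct and follows essentially the same route as the paper's: a case split on whether $a_i$ is invisible, with the visible case handled by observing that $a_i \in \redf(s)$ is enabled (via the \textbf{D1}-produced transition $s \transition{a_i} s'$), so \textbf{V} forces all of $a_1,\ldots,a_{i-1}$ to be invisible. Your closing remark about why enabledness of $a_i$ in $s$ is needed to trigger \textbf{V} is a correct and slightly more explicit version of the paper's own justification.
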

\begin{proof}
If $a_i$ is invisible, then $\vis(a_i a_1 \ldots a_{i-1}) = \vis(a_1 \ldots
a_{i-1}) = \vis(a_1 \ldots a_{i-1} a_i)$.
From now on assume that $a_i$ is visible.
Because \textbf{D1} only applies to $s \transition{a_1 \ldots a_{i-1} a_i} s'$ 
when $a_i \in \redf(s)$, $\redf(s)$ contains an enabled visible action.
By \textbf{V}, $\redf(s)$ contains all visible actions.
Because none of $a_1$, \ldots, $a_{i-1}$ is in $\redf(s)$, they must be
invisible.
So $\vis(a_i a_1 \ldots a_{i-1}) = a_i = \vis(a_1 \ldots a_{i-1} a_i)$.
\end{proof}

The application of Lemma~\ref{lem:D1_vis} to the proof of
Theorem~\ref{thm:deadlock} yields the following.
\begin{thm}\label{thm:dl_vis}
Assume that each $\redf(s)$ obeys \textbf{D1}, \textbf{D2w} and \textbf{V}.
If $s \in S_r$ and $s_n$ is a deadlock in $\TS$, then for all paths $\pi = s 
\transition{a_1 \ldots a_n} s_n$, there is a path $\rho = s 
\redtransition{b_1 \ldots b_n} s_n$ such that $\vis(\pi) = \vis(\rho)$.
\end{thm}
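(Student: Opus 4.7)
The plan is to re-run the induction on path length used in the proof of Theorem~\ref{thm:deadlock}, while additionally tracking how the visible projection of the action sequence evolves at each step, invoking Lemma~\ref{lem:D1_vis} to ensure it is preserved by every application of \textbf{D1}.

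The base case $n = 0$ is immediate: both $\pi$ and $\rho$ are the empty path, so their visible projections trivially coincide. For the inductive step, I would repeat the argument of Theorem~\ref{thm:deadlock}: since $\redf(s)$ contains a key action by \textbf{D2w} but $s_n$ is a deadlock, there is a smallest index $i$ with $a_i \in \redf(s)$, and \textbf{D1} applies to the prefix $s \transition{a_1 \ldots a_{i-1} a_i} s_i$ to produce intermediate states with $s \redtransition{a_i} s'_0 \transition{a_1 \ldots a_{i-1}} s_i$. Applying the induction hypothesis to the shorter path $s'_0 \transition{a_1 \ldots a_{i-1} a_{i+1} \ldots a_n} s_n$ yields a reduced path $s'_0 \redtransition{b_2 \ldots b_n} s_n$ whose visible projection matches that of $a_1 \ldots a_{i-1} a_{i+1} \ldots a_n$. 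Prepending $a_i$ then gives the desired $\rho = s \redtransition{a_i b_2 \ldots b_n} s_n$.

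To check $\vis(\pi) = \vis(\rho)$, I would apply Lemma~\ref{lem:D1_vis} to exactly the prefix to which \textbf{D1} was applied, obtaining $\vis(a_1 \ldots a_{i-1} a_i) = \vis(a_i a_1 \ldots a_{i-1})$, i.e.\ $\vis(a_i)$ commutes past $\vis(a_1 \ldots a_{i-1})$. Combining this with the induction hypothesis for the tail and the fact that $\vis$ distributes over sequence concatenation, a short calculation closes the argument.

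I do not expect a genuine obstacle here, as the proof is essentially bookkeeping layered on top of Theorem~\ref{thm:deadlock}. The one subtle point is that Lemma~\ref{lem:D1_vis} is phrased as an equality between the visible projections of two specific paths: it must be applied to the prefix of $\pi$ of length $i$ (not to $\pi$ as a whole), and the equality of $\vis(\pi)$ and $\vis(\rho)$ is then obtained by splitting both paths at position $i$ and using that \textbf{V} forces $a_1,\ldots,a_{i-1}$ to be invisible whenever $a_i$ is visible.
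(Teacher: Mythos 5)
Your proposal is correct and matches the paper's intent exactly: the paper gives no separate proof but states that the theorem is obtained by ``the application of Lemma~\ref{lem:D1_vis} to the proof of Theorem~\ref{thm:deadlock}'', which is precisely your plan of re-running that induction and applying Lemma~\ref{lem:D1_vis} to the prefix $s \transition{a_1 \ldots a_{i-1} a_i}$ at each use of \textbf{D1}. Your closing calculation (splitting at position $i$, using distributivity of $\vis$ over concatenation, and the induction hypothesis on the tail) is the right bookkeeping and introduces no gap.
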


This theorem almost gives item (1) of the list in
Section~\ref{sec:weak_stutter}.
What is missing is that the path $s \redtransition{b_1 \ldots b_n} s_n$ is
deadlocking.
It is, because ${\rededgerel} \subseteq {\edgerel}$, so $\enabled_{\TS_r}(s_n)
\subseteq \enabled_{\TS}(s_n) = \emptyset$.
Item (2) is next to trivial.
If $s \redtransition{b_1 \ldots b_n} s_n$ is deadlocking, then $s
\transition{b_1 \ldots b_n} s_n$, and $s_n$ is a deadlock by \textbf{D2w}.

We now have sufficient background on stubborn sets to illustrate the
inconsistent labelling problem, but insufficient background to illustrate it
in a street-credible context.
Therefore, we continue and develop the LTL$_{-X}$-preserving stubborn set
method in full, and postpone the illustration of the inconsistent labelling
problem to Section~\ref{sec:counter_example}.

\subsection{Infinite Paths}

In the remainder of this paper, we will assume that the reduced LSTS is
finite.
This assumption is needed to make the next lemma hold in the presence of
non-deterministic actions.
It will be used in proving that each infinite path in $\TS$ maps to an
infinite path in $\TS_r$ with certain properties.

\begin{lem}\label{lem:inf_key}
Assume that $\redf(s_0)$ obeys \textbf{D1}, \textbf{D2w} and \textbf{V}, and
the reduced LSTS is finite.
Let $\pi = s_0 \transition{a_1} s_1 \transition{a_2} \ldots$ be any path where 
none of the $a_i$ is in $\redf(s_0)$.
Then there is a path $\rho = s_0 \redtransition{\keyact} s'_0 \transition{a_1} 
s'_1 \transition{a_2} \ldots$ for some action $\keyact$.
If, furthermore, $\keyact$ is visible, then all the $a_i$ are invisible.
\end{lem}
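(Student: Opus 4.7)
The plan is to apply \textbf{D2w} to obtain a key action $\keyact$, use \textbf{D1} on every finite prefix of $\pi$, and assemble the resulting finite witnesses into an infinite path via König's lemma; the visibility claim then follows directly from \textbf{V}.

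For the first part, I would observe that $\enabled(s_0) \neq \emptyset$ (because $s_0 \transition{a_1}$), so \textbf{D2w} supplies a key action $\keyact \in \redf(s_0)$. By the defining property of a key action and the hypothesis that no $a_i$ lies in $\redf(s_0)$, we have $s_n \transition{\keyact}$ for every $n \geq 0$; pick any $t_n$ with $s_n \transition{\keyact} t_n$. Applying \textbf{D1} to the path $s_0 \transition{a_1\cdots a_n} s_n \transition{\keyact} t_n$ then produces intermediate states $s'_0, s'_1, \ldots, s'_{n-1}$ (depending on $n$) such that $s_0 \transition{\keyact} s'_0 \transition{a_1} s'_1 \transition{a_2} \cdots \transition{a_n} t_n$, and by the definition of $\rededgerel$ the first step is in fact a $\redtransition{\keyact}$-step.

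Next I would organise these finite paths into a tree whose vertices at level $n$ are the tuples $(s'_0, \ldots, s'_n)$ produced above, with the natural prefix-extension edges. Every level is non-empty, so the tree is infinite. Its root-branching is bounded because $s'_0 \in S_r$ and $S_r$ is finite by hypothesis, and each higher vertex has branching bounded by the $a_{k+1}$-out-degree of $s'_k$ in $\TS$, which is finite in the image-finite setting customary for stubborn-set theory. König's lemma then delivers an infinite branch, which is exactly the desired $\rho = s_0 \redtransition{\keyact} s'_0 \transition{a_1} s'_1 \transition{a_2} \cdots$.

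For the visibility claim, if the chosen $\keyact$ is visible then $\keyact$ itself is an enabled visible action in $\redf(s_0)$; \textbf{V} then forces $\redf(s_0)$ to contain \emph{every} visible action, and since no $a_i$ lies in $\redf(s_0)$, each $a_i$ must belong to $\Inv$. The main obstacle is the König's-lemma step: making it rigorous genuinely requires the finiteness of $S_r$ (to bound root-branching) together with image-finiteness of $\TS$ above the root, which is where the non-determinism the paper warns about can bite and where I would handle the bookkeeping most carefully.
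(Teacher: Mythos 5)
Your proof is correct and follows essentially the same route as the paper's: \textbf{D2w} yields the key action, the key-action property plus \textbf{D1} produce the finite witness paths, a K\"onig's-lemma/pigeonhole argument (the paper phrases it as choosing, level by level, a state that equals $s'_{i,k}$ for infinitely many $i$) extracts the infinite path, and the visibility claim is read off from \textbf{V} exactly as you do. Your explicit appeal to image-finiteness of $\TS$ above the root is the same implicit assumption the paper makes when its induction step says ``again because there are only finitely many states'' about successors that need not lie in $S_r$.
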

\begin{proof}
We use K\"onig's Lemma type of reasoning~\cite{Konig1927}.
Let $\keyact \in \redf(s_0)$ be some key action for $\redf(s_0)$.
Its existence follows from \textbf{D2w}.
By the key action property there are $s'_{0,0}$, $s'_{1,1}$, \ldots\ such that
$s_i \transition{\keyact} s'_{i,i}$.
If $\keyact$ is visible, then \textbf{V} and $a_n \notin \redf(s_0)$ for $n
\geq 1$ imply that $a_1$, $a_2$, \ldots\ are invisible.
By \textbf{D1}, for each $i$ and each $0 \leq j < i$ there are $s'_{i,j}$ such
that $s_0 \redtransition{\keyact} s'_{i,0} \transition{a_1} s'_{i,1}
\transition{a_2} \ldots \transition{a_i} s'_{i,i}$.
See Figure~\ref{fig:exist_infinite_path}.
We prove by induction that for every $k$, there is $s'_k$ such that $s_0
\redtransition{\keyact} s'_0$ (for $k = 0$) or $s'_{k-1} \transition{a_k} s'_k$
(for $k > 0$), and $s'_k = s'_{i,k}$ for infinitely many values of $i$.

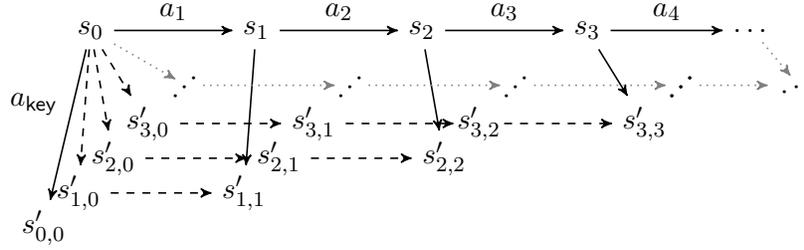
\begin{figure}
	\centering
	\begin{tikzpicture}[->,>=stealth',shorten >=0pt,auto,node 
	distance=2.0cm,semithick,
	every place/.style={draw,minimum size=5mm}]
	
	\def\x{2.2}
	\def\y{2.0}
	\def\z{0.6}
	
	\node (s0) at (0*\x,\y,0) {$s_0$};
	\node (s1) at (1*\x,\y,0) {$s_1$};
	\node (s2) at (2*\x,\y,0) {$s_2$};
	\node (s3) at (3*\x,\y,0) {$s_3$};
	\node (s4) at (4*\x,\y,0) {$\dots$};
	
	\node (s00) at (0*\x,0,2.7*\z)  {$s'_{0,0}$};
	\node (s10) at (0*\x,0,0.7*\z)  {$s'_{1,0}$};
	\node (s11) at (1*\x,0,0.7*\z)  {$s'_{1,1}$};
	\node (s20) at (0*\x,0,-1.3*\z) {$s'_{2,0}$};
	\node (s21) at (1*\x,0,-1.3*\z) {$s'_{2,1}$};
	\node (s22) at (2*\x,0,-1.3*\z) {$s'_{2,2}$};
	\node (s30) at (0*\x,0,-3.3*\z) {$s'_{3,0}$};
	\node (s31) at (1*\x,0,-3.3*\z) {$s'_{3,1}$};
	\node[inner sep=1pt] (s32) at (2*\x,0,-3.3*\z) {$s'_{3,2}$};
	\node (s33) at (3*\x,0,-3.3*\z) {$s'_{3,3}$};
	\node[rotate=45] (s40) at (0*\x,0,-5.5*\z) {$\dots$};
	\node[rotate=45] (s41) at (1*\x,0,-5.5*\z) {$\dots$};
	\node[rotate=45] (s42) at (2*\x,0,-5.5*\z) {$\dots$};
	\node[rotate=45] (s43) at (3*\x,0,-5.5*\z) {$\dots$};
	\node[rotate=45] (s43) at (3*\x,0,-5.5*\z) {$\dots$};
	\node[rotate=23] (s44) at (3.7*\x,0,-5.5*\z) {$\dots$};
	
	\path
	(s0) edge node {$\act_1$} (s1)
	(s1) edge node {$\act_2$} (s2)
	(s2) edge node {$\act_3$} (s3)
	(s3) edge node {$\act_4$} (s4)
	(s0) edge['] node {$\keyact$} (s00)
	(s1) edge (s11)
	(s2) edge (s22)
	(s3) edge (s33);
	
	\path[dashed]
	(s0) edge (s10)
	(s0) edge (s20)
	(s0) edge (s30)
	(s10) edge (s11)
	(s20) edge (s21)
	(s21) edge (s22)
	(s30) edge (s31)
	(s31) edge (s32)
	(s32) edge (s33);
	
	\path[gray,dotted]
	(s4) edge (s44)
	(s0) edge (s40)
	(s40) edge (s41)
	(s41) edge (s42)
	(s42) edge (s43)
	(s43) edge (s44);
	
	\end{tikzpicture}
	\caption{Illustration of the proof of Lemma~\ref{lem:inf_key}.
		Vertical transitions are labelled with $\keyact$; dashed transitions
		have been obtained by applying \textbf{D1}.
	}
	\label{fig:exist_infinite_path}
\end{figure}

Because there are only finitely many states, there is a state $s'_0$ that is
the same as $s'_{i,0}$ for infinitely many values of $i$.
This constitutes the base case.

To prove the induction step, we observe that all or all but one of the
infinitely many $i$ with $s'_{i,k} = s'_k$ satisfy $i > k$, and thus have an
$s'_{i,k+1}$ such that $s'_k \transition{a_{k+1}} s'_{i,k+1}$.
Infinitely many of these $s'_{i,k+1}$ are the same state, again because there
are only finitely many states.
This state qualifies as $s'_{k+1}$.
\end{proof}

\begin{lem}\label{lem:inf_prefix}
Assume that each $\redf(s)$ obeys \textbf{D1}, \textbf{D2w} and \textbf{V},
and the reduced LSTS is finite.
If $s_0 \in S_r$ and $\pi = s_0 \transition{a_1 a_2 \ldots}$, then there is a 
path $\rho = s_0 \redtransition{b_1 b_2 \ldots}$ such that $\vis(\pi)$ is a 
prefix of $\vis(\rho)$ or $\vis(\rho)$ is a prefix of $\vis(\pi)$.
\end{lem}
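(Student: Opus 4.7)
The plan is to build $\rho$ stage by stage, maintaining an invariant that ties its visible projection to that of $\pi$. At stage $k$ I keep a state $t_k \in S_r$ reached via $s_0 \redtransition{b_1 \cdots b_k} t_k$, together with an infinite residual path $t_k \transition{c^{(k)}_1 c^{(k)}_2 \cdots}$ in $\TS$; initially $t_0 = s_0$ and $c^{(0)} = a_1 a_2 \cdots$.

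The step from $k$ to $k+1$ splits according to whether the residual uses any action of $\redf(t_k)$. If some $c^{(k)}_i \in \redf(t_k)$, pick the smallest such $i$; condition \textbf{D1} yields $t_{k+1}$ with $t_k \redtransition{c^{(k)}_i} t_{k+1}$ and $t_{k+1} \transition{c^{(k)}_1 \cdots c^{(k)}_{i-1} c^{(k)}_{i+1} c^{(k)}_{i+2} \cdots}$, so I set $b_{k+1} := c^{(k)}_i$ and take this rearranged tail as $c^{(k+1)}$. Otherwise no $c^{(k)}_j$ lies in $\redf(t_k)$, and Lemma~\ref{lem:inf_key} (here the assumption that the reduced LSTS is finite is used) gives a key action $\keyact$ and $t_{k+1}$ with $t_k \redtransition{\keyact} t_{k+1} \transition{c^{(k)}_1 c^{(k)}_2 \cdots}$; set $b_{k+1} := \keyact$ and $c^{(k+1)} := c^{(k)}$. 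Either case produces a valid reduced edge, so $\rho = s_0 \redtransition{b_1 b_2 \cdots}$ is a well-defined infinite path in $\TS_\redf$.

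The invariant I propose is $\vis(\pi) = \vis(b_1 \cdots b_k) \cdot \vis(c^{(k)}_1 c^{(k)}_2 \cdots)$. In the first case preservation follows from Lemma~\ref{lem:D1_vis}: if $b_{k+1}$ is visible then \textbf{V} forces $c^{(k)}_1, \ldots, c^{(k)}_{i-1}$ to be invisible, while the invisible subcase is trivial. In the second case with invisible $\keyact$ preservation is also immediate. The main obstacle, and the true content of the lemma, is the remaining subcase: Case~2 with a \emph{visible} $\keyact$. Here the invariant necessarily breaks, because $\rho$ acquires a visible letter while the residual does not shrink. However, Lemma~\ref{lem:inf_key} delivers exactly the fact that rescues the statement: every $c^{(k)}_j$ is then invisible, so $\vis(c^{(k)}_1 c^{(k)}_2 \cdots) = \varepsilon$ and the previously valid invariant gives $\vis(\pi) = \vis(b_1 \cdots b_k)$, a finite word.

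Putting this together, if the invariant survives for every $k$ then each $\vis(b_1 \cdots b_k)$ is a prefix of $\vis(\pi)$, whence $\vis(\rho)$ is a prefix of $\vis(\pi)$. Otherwise, at the first stage $k^*$ where the invariant fails, we already have $\vis(\pi) = \vis(b_1 \cdots b_{k^*})$, which is a finite prefix of $\vis(\rho)$. In either case, one of $\vis(\pi)$ and $\vis(\rho)$ is a prefix of the other, which is what the lemma asks for.
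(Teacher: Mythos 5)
Your proposal is correct and follows essentially the same route as the paper's proof: the same two-case construction (moving the first $\redf(t_k)$-action to the front via \textbf{D1} and Lemma~\ref{lem:D1_vis}, or introducing a key action via Lemma~\ref{lem:inf_key}, which is where finiteness of the reduced LSTS enters), with the paper carrying your invariant as disjunct (1) of a two-part induction hypothesis and your ``broken invariant'' case as disjunct (2). The only quibble is a harmless off-by-one in your final sentence ($\vis(\pi)=\vis(b_1\cdots b_{k^*-1})$ rather than $\vis(b_1\cdots b_{k^*})$), which does not affect the conclusion.
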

\begin{proof}
We use induction to prove, for each $i > 0$, the existence of $b_i$ and $s_i$
such that $s_{i-1} \redtransition{b_i} s_i$, and for each $i \geq 0$, the
existence of a path $s_i \transition{a_{i,1} a_{i,2} \ldots}$ and either
\begin{enumerate}
\item $\vis(s_0 \transition{b_1 \ldots b_i} s_i \transition{a_{i,1} a_{i,2} 
\ldots}) = \vis(\pi)$, or
\item $a_{i,1}$, $a_{i,2}$, \ldots\ are invisible and $\vis(\pi)$
is a prefix of $\vis(s_0 \transition{b_1 \ldots b_i} s_i)$.
\end{enumerate}

The base case $i = 0$ of the induction is obtained by choosing $s_0 
\transition{a_{0,1} a_{0,2} \ldots} = \pi$.
Thanks to $b_1 \ldots b_0 = \varepsilon$, (1) holds.

Regarding the induction step, if at least one of $a_{i,1}$, $a_{i,2}$, \ldots\
is in $\redf(s_i)$, then \textbf{D1} can be applied to the first such
$a_{i,j}$, yielding $s_i \redtransition{a_{i,j}} s_{i+1} \transition{a_{i,1}
\ldots a_{i,j-1} a_{i,j+1} \ldots}$.
This specifies $s_{i+1}$, and we choose $b_{i+1} = a_{i,j}$ and $a_{i+1,1}
a_{i+1,2} \ldots = a_{i,1} \ldots a_{i,j-1} a_{i,j+1} \ldots$.
We call this ``moving $a_{i,j}$ to the front''.
If (1) holds, then we apply Lemma~\ref{lem:D1_vis} and the induction hypothesis 
to deduce
\begin{align*}
&\vis(s_0 \transition{b_1 \ldots b_{i+1}} s_{i+1} \transition{a_{i+1,1} 
	a_{i+1,2} \ldots})\\
=\;\; &\vis(s_0 \transition{b_1 \ldots b_i} s_i \transition{a_{i,j}} s_{i+1} 
	\transition{a_{i,1} \ldots a_{i,j-1} a_{i,j+1} \ldots})\\
\stacksmash{(\!L\ref{lem:D1_vis}\!)}{=} &\vis(s_0 \transition{b_1 \ldots b_i} 
s_i \transition{a_{i,1} a_{i,2} 
	\ldots})\\
\stacksmash{(\!\mathit{I\!H}\!)}{=}\; &\vis(\pi)
\end{align*} 
Therefore, $i+1$ satisfies (1).
Otherwise (2) holds, implying $\{b_{i+1}, a_{i+1,1}, a_{i+1,2}, \ldots\} =
\{a_{i,1}, a_{i,2}, \ldots\} \subseteq \Inv$ and $\vis(s_0 \transition{b_1 
\ldots b_{i+1}} s_{i+1}) = \vis(s_0 \transition{b_1 \ldots b_i} s_i)$, of which 
$\vis(\pi)$ is a prefix.
Thus $i+1$ satisfies (2).

In the opposite case none of $a_{i,1}$, $a_{i,2}$, \ldots\ is in $\redf(s_i)$.
By \textbf{D2w}, $\redf(s_i)$ contains at least one key action.
To present later a further result, we choose an invisible key action if
available, and otherwise a visible one.
Lemma~\ref{lem:inf_key} yields $s_{i+1}$ such that $s_i \redtransition{\keyact}
s_{i+1} \transition{a_{i,1} a_{i,2} \ldots}$.
We choose $b_{i+1} = \keyact$ and $a_{i+1,1} a_{i+1,2} \ldots = a_{i,1}
a_{i,2} \ldots$.
We call this ``introducing a key action''.
If $\keyact \in \Inv$, then the equations
\begin{align*}
\vis(s_0 \transition{b_1 \ldots b_{i+1}} s_{i+1} \transition{a_{i+1,1} 
a_{i+1,2} \ldots}) &= \vis(s_0 \transition{b_1 \ldots b_i} s_i 
\transition{a_{i,1} a_{i,2} \ldots})\\
\vis(s_0 \transition{b_1 \ldots b_{i+1}} s_{i+1}) &= \vis(s_0 \transition{b_1 
\ldots b_i} s_i)
\end{align*}
both hold, so (1) or (2) remains valid in the step from $i$ to $i+1$.
Otherwise, $\keyact$ is visible.
Lemma~\ref{lem:inf_key} says that $a_{i,1}$, $a_{i,2}$, \ldots\ are invisible.
Then both (1) and (2) imply that $\vis(\pi)$ is a prefix of
$\vis(s_0 \transition{b_1 \ldots b_i} s_i)$, which is a (proper) prefix of 
$\vis(s_0 \transition{b_1 \ldots b_{i+1}} s_{i+1})$.
Thus also $i+1$ satisfies (2).

If (1) holds for every $i \geq 0$, then $\vis(\rho)$ is a prefix of $\vis(\pi)$.
Otherwise there is $i$ such that (2) holds.
For that and every bigger $i$, $\vis(\pi)$ is a prefix of $\vis(s_0 
\transition{b_1 \ldots b_i} s_i)$.
Therefore, $\vis(\pi)$ is a prefix of $\vis(\rho)$.
\end{proof}

The above result is a step towards item (3) of the list in
Section~\ref{sec:weak_stutter}, but not sufficient as such.
Instead, $\vis(\pi) = \vis(\rho)$ is needed.
We next add a condition, \emph{viz.}\xspace condition~\textbf{I}, guaranteeing 
that $\vis(\rho)$ is a prefix of $\vis(\pi)$.
Then we add another condition (\emph{viz.}~\textbf{L}) for the opposite 
direction.

\begin{description}[leftmargin=!,labelwidth=\widthof{\bfseries I:}]
	\item[\textbf{I}] If an invisible action is enabled, then $\redf(s)$ 
	contains an invisible key action.
\end{description}

\begin{lem}\label{lem:condition_I}
If \textbf{I} is added to the assumptions of Lemma~\ref{lem:inf_prefix}, then
$\vis(\rho)$ is a prefix of $\vis(\pi)$.
\end{lem}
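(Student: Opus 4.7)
The plan is to re-examine the induction in the proof of Lemma~\ref{lem:inf_prefix} and show that condition \textbf{I} rules out every situation in which the invariant passes from case (1) to case (2). Since case (1) gives $\vis(s_0 \transition{b_1 \ldots b_i} s_i \transition{a_{i,1} a_{i,2} \ldots}) = \vis(\pi)$, maintaining it at every index implies that $\vis(s_0 \transition{b_1 \ldots b_i} s_i)$ is a prefix of $\vis(\pi)$ for all $i$, and taking $i \to \infty$ yields that $\vis(\rho)$ is a prefix of $\vis(\pi)$.

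Inspecting the proof of Lemma~\ref{lem:inf_prefix}, the only way to jump from case (1) to case (2) is the "introducing a key action" branch with a \emph{visible} $\keyact$; the "moving an action to the front" branch preserves whichever case is active, and introducing an invisible key action also preserves it. So the central step is to show: under \textbf{I}, whenever we take the introducing branch at state $s_i$, the key action the construction selects is invisible. The action $a_{i,1}$ is enabled at $s_i$, so two subcases arise. If $a_{i,1}$ is invisible, then \textbf{I} applied to $\redf(s_i)$ guarantees that $\redf(s_i)$ contains an invisible key action, and the selection rule in Lemma~\ref{lem:inf_prefix} (``choose an invisible key action if available'') picks one. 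If instead $a_{i,1}$ is visible, suppose for contradiction that a visible $\keyact$ is chosen; then the second clause of Lemma~\ref{lem:inf_key} forces all $a_{i,j}$ to be invisible, contradicting visibility of $a_{i,1}$.

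With that subclaim established, every introducing step contributes an invisible $\keyact$, so case (1) is preserved at each induction step; the base case $i = 0$ is already in case (1), and the conclusion follows as indicated above. The principal obstacle is exactly the visible-$\keyact$ subcase of the introducing branch: \textbf{I} alone does not directly forbid picking a visible key action, but combined with the selection rule and the second conclusion of Lemma~\ref{lem:inf_key} it makes that subcase vacuous.
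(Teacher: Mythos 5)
Your proposal is correct and follows essentially the same route as the paper's proof: both identify the introduction of a visible key action as the only step that can break invariant (1) of Lemma~\ref{lem:inf_prefix}, and both use the dichotomy from Lemma~\ref{lem:inf_key} together with condition \textbf{I} and the ``prefer an invisible key action'' selection rule to rule that step out. Your case split on the visibility of $a_{i,1}$ (with a small contradiction argument in the visible subcase) is just a minor reorganisation of the paper's direct use of the disjunction ``either $\keyact$ or $a_{i,1}$ is invisible.''
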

\begin{proof}
Consider the proof of Lemma~\ref{lem:inf_prefix}.
By Lemma~\ref{lem:inf_key}, when none of $a_{i,1}$, $a_{i,2}$, \ldots\ is in
$r(s_i)$, then either $\keyact$ or $a_{i,1}$ is invisible.
Obviously $a_{i,1}$ is enabled in $r(s_i)$.
So \textbf{I} guarantees that there is an invisible key action.
This makes (1) remain true throughout the proof of Lemma~\ref{lem:inf_prefix},
from which the claim follows.
\end{proof}

Both \textbf{V} and \textbf{I} are easy to take into account in
$\leadsto$-based algorithms for computing strong stubborn sets.
It is much harder to ensure that $\vis(\pi)$ is a prefix of $\vis(\rho)$.
The following condition is more or less the best known.
It is usually implemented by constructing the reduced LSTS in depth-first
order so that cycles can be recognised, and using a set that contains all
visible actions as $r(s)$ in one or the other end of the edge that closes the
cycle.
\begin{description}[leftmargin=!,labelwidth=\widthof{\bfseries L:}]
	\item[\textbf{L}] For every visible action $\act$, every cycle in the 
	reduced LSTS contains a state $s$ such that $\act \in \redf(s)$.
\end{description}

\begin{lem} \label{lem:condition_L}
If \textbf{L} is added to the assumptions of Lemma~\ref{lem:inf_prefix}, then
$\vis(\pi)$ is a prefix of $\vis(\rho)$.
\end{lem}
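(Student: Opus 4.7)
The plan is to argue by contradiction, reusing the construction from the proof of Lemma~\ref{lem:inf_prefix} rather than redoing it from scratch. That proof establishes a dichotomy: either invariant (2) holds at some step $i$, in which case $\vis(\pi)$ is already a prefix of $\vis(s_0 \transition{b_1 \ldots b_i} s_i)$ and hence of $\vis(\rho)$ and we are done immediately, or invariant (1) holds forever, in which case the original lemma only gives $\vis(\rho)$ as a prefix of $\vis(\pi)$. So I only need to show, using \textbf{L} and finiteness of $\TS_r$, that the second scenario forces $\vis(\rho) = \vis(\pi)$.

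Assume then (1) forever, and suppose for contradiction $\vis(\rho) = v_1 \ldots v_k$ is a strict (hence finite) prefix of $\vis(\pi)$. Let $v_{k+1}$ be the next visible action of $\pi$, and pick $N$ large enough that $\vis(b_1 \ldots b_N) = v_1 \ldots v_k$. For every $i \geq N$, the action $b_{i+1}$ is invisible, and (1) forces the tail $a_{i,1} a_{i,2} \ldots$ at $s_i$ to contain $v_{k+1}$ as its first visible action; write $m_i$ for its position. The first step is to show that $m_i$ is non-increasing. In the ``move $a_{i,j}$ to front'' sub-case, $b_{i+1} = a_{i,j}$ is invisible, so if $a_{i,j}$ is visible it equals $v_{k+1}$, which is excluded; and by \textbf{V}, the first $j$ with $a_{i,j} \in \redf(s_i)$ is either $\geq m_i$ (tail unchanged up to $v_{k+1}$, so $m_{i+1} = m_i$) or $< m_i$ (removing an invisible action before $v_{k+1}$, so $m_{i+1} = m_i - 1$). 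In the ``introduce a key action'' sub-case, the tail is unchanged, and Lemma~\ref{lem:inf_key} combined with the presence of the visible $v_{k+1}$ in the tail rules out choosing a visible key, so the chosen $\keyact$ is invisible and $m_{i+1} = m_i$.

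The second step uses \textbf{L}: since $\TS_r$ is finite and $\rho$ is infinite, $\rho$ eventually stays inside a recurrent cycle of $\TS_r$, and by \textbf{L} this cycle contains a state $s^*$ with $v_{k+1} \in \redf(s^*)$. Hence $s^*$ is visited infinitely often in $\rho$, say at steps $i^* \geq N$. At any such step, $a_{i^*, m_{i^*}} = v_{k+1}$ is in $\redf(s^*)$, so we are in the ``move to front'' sub-case and the first index $j$ with $a_{i^*,j} \in \redf(s^*)$ satisfies $j \leq m_{i^*}$. Since $j = m_{i^*}$ would make $b_{i^*+1} = v_{k+1}$ visible, we must have $j < m_{i^*}$ and therefore $m_{i^*+1} = m_{i^*} - 1$. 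Infinitely many strict decreases of the positive integer $m_i$ is the contradiction that completes the proof.

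The main obstacle, I expect, is tracking the bookkeeping cleanly and using \textbf{V} at exactly the right spot: without \textbf{V}, a visible action other than $v_{k+1}$ could appear in $\redf(s_i)$ and be moved to the front, which would either break the $\vis(\rho) = v_1 \ldots v_k$ stability needed to define $m_i$, or place something other than $v_{k+1}$ as the first visible action of the tail, invalidating the monotonicity argument. Once \textbf{V} is applied to rule this out and Lemma~\ref{lem:inf_key} handles the ``introduce'' sub-case, the rest is the well-founded descent argument above.
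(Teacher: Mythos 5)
Your proof is correct and follows essentially the same strategy as the paper's: reuse the construction of Lemma~\ref{lem:inf_prefix}, reduce to the case where invariant (1) holds forever and $\vis(\rho)$ is a finite proper prefix of $\vis(\pi)$, track the position of the first visible action in the residual tail, show it is non-increasing, and use finiteness of $S_r$ together with \textbf{L} to force it to strictly decrease until a visible action would be moved to the front. The only difference is cosmetic: the paper derives an explicit bound ($b_{h+1}=a_{i,v}$ for some $h < i + v\,\cardinality{S_r}$) via the pigeonhole principle, whereas you phrase it as an infinite descent of the positive integer $m_i$; both yield the same contradiction.
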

\begin{proof}
To derive a contradiction, assume that $\vis(\pi)$ is not a prefix of 
$\vis(\rho)$.
By Lemma~\ref{lem:inf_prefix}, $\vis(\rho)$ is a proper prefix of $\vis(\pi)$.
Therefore, $\vis(\rho)$ is finite, that is, there is $i$ such that
$\vis(\rho)$ = $\vis(s_0 \transition{b_1 \ldots b_i} s_i)$.
These contradict (2) in the proof of the lemma, so (1) holds.
By it and the proper prefix property, there is $v$ such that $a_{i,v}$ is
visible.
We use the smallest such $v$.

Observe that if \textbf{D1} is applied at $s_i$ to move action $a_{i,j}$ to the 
front, where $j > v$, or \textbf{D2w} is applied, then $a_{i+1,k} = a_{i,k}$ 
for $1 \leq k \leq v$.
If the same also happens at $s_{i+1}$ then $a_{i+2,k} = a_{i,k}$ for $1 \leq k 
\leq v$, and so on, either forever or until \textbf{D1} is applied such that $j 
\leq v$, whichever comes first.
We show next that the latter comes first.

Because $S_r$ is finite, we may let $n = i + \cardinality{S_r}$.
By the pigeonhole principle, $s_i$, \ldots, $s_n$ cannot all be distinct.
So the path $s_i \transition{b_{i+1} \ldots b_{n}} s_n$ contains a cycle.
\textbf{L} implies that there is $i \leq \ell < n$ such that $a_{i,v} \in
\redf(s_\ell)$.
This guarantees that there is the smallest $h$ such that $i \leq h < i + 
\cardinality{S_r}$ and $\{a_{i,1}, \dots, a_{i,v}\} \cap \redf(s_h) \neq 
\emptyset$.
Observe that at any step $i \leq i' < h$, whether \textbf{D1} is applied to 
move $a_{i',j}$ forward, where $j > v$, or \textbf{D2w} is applied to introduce 
a key action, we have $a_{i'+1,v} = a_{i',v}$.
By \textbf{D1}, $b_{h+1}$ is one of $a_{h,1}$, \ldots, $a_{h,v}$.
So either $b_{h+1} = a_{i,v}$ or $a_{i,v} = a_{h+1,v-1}$.

Repeating the argument at most $v$ times proves that there is $i \leq h <
i + v \cardinality{S_r}$ such that $b_{h+1} = a_{i,v}$.
Because $a_{i,v}$ is visible, this contradicts $\vis(\rho)$ =
$\vis(s_0 \transition{b_1 \ldots b_i} s_i)$.
\end{proof}

We have proven the following.

\begin{thm}\label{thm:inf_vis}
Assume that each $\redf(s)$ obeys \textbf{D1}, \textbf{D2w}, \textbf{V},
\textbf{I} and \textbf{L}.
For all $s \in S_r$ and $\pi = s \transition{a_1 a_2 \ldots}$, there is a path 
$\rho = s \redtransition{b_1 b_2 \ldots}$ such that $\vis(\pi) = \vis(\rho)$.
\end{thm}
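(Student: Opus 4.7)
The plan is to observe that Theorem~\ref{thm:inf_vis} is essentially just the conjunction of Lemmas~\ref{lem:condition_I} and~\ref{lem:condition_L}, both of which apply on top of Lemma~\ref{lem:inf_prefix}. First I would invoke Lemma~\ref{lem:inf_prefix} on the given infinite path $\pi = s \transition{a_1 a_2 \ldots}$ under the assumptions \textbf{D1}, \textbf{D2w}, \textbf{V} (which are included in our hypotheses), obtaining a reduced-LSTS path $\rho = s \redtransition{b_1 b_2 \ldots}$ such that one of $\vis(\pi)$, $\vis(\rho)$ is a prefix of the other. Crucially, this $\rho$ is built by the concrete procedure described in that proof — moving actions of $\redf(s_i)$ to the front via \textbf{D1} whenever possible, and otherwise introducing a key action from $\redf(s_i)$ by Lemma~\ref{lem:inf_key}, preferring an invisible key action when the choice exists.

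Next I would apply the conditions \textbf{I} and \textbf{L} to this very same $\rho$. Lemma~\ref{lem:condition_I} adds \textbf{I} to the premises of Lemma~\ref{lem:inf_prefix} and concludes that $\vis(\rho)$ is a prefix of $\vis(\pi)$; its proof refers back to alternative~(1) in the construction of $\rho$, so it uses the same path. Lemma~\ref{lem:condition_L}, under the additional assumption \textbf{L}, yields the opposite inclusion $\vis(\pi)$ is a prefix of $\vis(\rho)$, again working with the same $\rho$. Combining these two prefix relations forces $\vis(\pi) = \vis(\rho)$, finishing the theorem.

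The only genuine subtlety — and the step I would double-check — is that the path $\rho$ to which Lemma~\ref{lem:condition_I} applies is literally the same path to which Lemma~\ref{lem:condition_L} applies. This is not automatic from the statements alone, but it holds because both lemmas are phrased as strengthenings of Lemma~\ref{lem:inf_prefix} and neither one alters the construction: Lemma~\ref{lem:condition_I} exploits the ``prefer an invisible key action'' rule already built into Lemma~\ref{lem:inf_prefix}, and Lemma~\ref{lem:condition_L} only analyses the cycle structure in $\TS_r$ produced by that same construction. Hence the hard work has already been done in the preceding lemmata, and the proof of Theorem~\ref{thm:inf_vis} reduces to an immediate citation of them, followed by mutual-prefix reasoning to conclude equality of the visible projections.
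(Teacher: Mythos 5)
Your proposal is correct and matches the paper exactly: the paper presents Theorem~\ref{thm:inf_vis} with the remark ``We have proven the following,'' meaning it is obtained precisely by combining Lemma~\ref{lem:inf_prefix} with Lemmata~\ref{lem:condition_I} and~\ref{lem:condition_L} applied to the same constructed path $\rho$, and concluding $\vis(\pi) = \vis(\rho)$ from the two mutual prefix relations. Your observation that both lemmata must refer to the \emph{same} $\rho$ (and do, since they are phrased as strengthenings of the construction in the proof of Lemma~\ref{lem:inf_prefix}) is exactly the right point to verify.
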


This theorem gives item (3) of the list in Section~\ref{sec:weak_stutter}.
Item (4) follows immediately from ${\rededgerel} \subseteq {\edgerel}$.
We have proven items (1) to (4) of the list in Section~\ref{sec:weak_stutter}.
Before we continue with an example of the conditions at work, we restate them 
for convenience.

\begin{description}[leftmargin=!,labelwidth=\widthof{\bfseries D2w:}]
	\item[\textbf{D0}] If $\enabled(s) \neq \emptyset$, then $\redf(s) \cap 
	\enabled(s) \neq \emptyset$.
	\item[\textbf{D1}] For all states $s_1,\dots,s_n,s'_n$ and all $\act 
	\in \redf(s)$ and $\act_1 \notin \redf(s), \dots, \act_n \notin \redf(s)$, 
	if $s \transition{\act_1} \dots \transition{\act_n} s_n \transition{\act} 
	s'_n$, then there are states $s',s'_1,\dots,s'_{n-1}$ such that $s 
	\transition{\act} s' \transition{\act_1} s'_1 \transition{\act_2} \dots 
	\transition{\act_n} s'_n$.
	\item[\textbf{D2}] Every enabled action in $\redf(s)$ is its key action in
	$s$.
	\item[\textbf{D2w}] If $\enabled(s) \neq \emptyset$, then $\redf(s)$ 
	contains a key action in $s$.
	\item[\textbf{V}] If $\redf(s)$ contains an enabled visible action, then it 
	contains all visible actions.
	\item[\textbf{I}] If an invisible action is enabled, then $\redf(s)$ 
	contains an invisible key action.
	\item[\textbf{L}] For every visible action $\act$, every cycle in the 
	reduced LSTS contains a state $s$ such that $\act \in \redf(s)$.
\end{description}

Recall that weak stubborn sets assume that conditions \textbf{D1}, 
\textbf{D2w}, \textbf{V}, \textbf{I} and \textbf{L} hold for all $\redf(s)$, 
while strong stubborn sets assume \textbf{D0}, \textbf{D1}, \textbf{D2}, 
\textbf{V}, \textbf{I} and \textbf{L} for all $\redf(s)$.

\subsection{An Example}\label{sec:exa}

Consider the Petri net and its LSTS in Figure~\ref{fig:example_por}.
We choose $\AP = \{q\}$, and $L(m) = \{q\}$ if and only if $m(p_4) > 0$
(otherwise $L(m) = \emptyset$), and illustrate this choice with grey colour on
$p_4$ and on those states where $q$ holds.
The dashed states and transitions are present in the original LSTS, but not
in the reduced version.
Other LSTSs later in this paper are visualised in a similar way.

\begin{figure}
	\centering
	\begin{tikzpicture}[->,>=stealth',shorten >=0pt,auto,node 
	distance=2.0cm,semithick,
	every place/.style={draw,minimum size=4.5mm}]
	
	\begin{scope}[label distance=-0.1cm]
	\tikzstyle{vtransition} = [fill,inner sep=0pt,minimum 
	width=1.4mm,minimum height=5mm]
	\tikzstyle{htransition} = [fill,inner sep=0pt,minimum 
	width=5mm,minimum height=1.4mm]
	
	\def\d{1.05}
	\node[place,label={above:$p_1$},tokens=1]       (p1) at (0,2*\d) {};
	\node[place,label={above:$p_2$}]                (p2) at (2*\d,2*\d) {};
	\node[place,label={above:$p_3$}]                (p3) at (4*\d,2*\d) {};
	\node[place,label={above:$p_4$},fill=lightgray] (p4) at (2*\d,\d) {};
	\node[place,label={above:$p_5$},tokens=1]       (p5) at (0,0) {};
	\node[place,label={above:$p_6$}]                (p6) at (2*\d,0) {};
	\node[place,label={right:$p_7$}]                (p7) at (4*\d,0) {};
	
	\node[vtransition,label={above:$a$}] (a) at (\d,2*\d) {};
	\node[vtransition,label={above:$w$}] (w) at (3*\d,2*\d) {};
	\node[vtransition,label={above:$b$}] (b) at (3*\d,\d) {};
	\node[htransition,label={right:$d$}] (d) at (4*\d,\d) {};
	\node[vtransition,label={above:$v$}] (v) at (\d,0) {};
	\node[vtransition,label={above:$c$}] (c) at (3*\d,0) {};
	
	\path
	(p1) edge (a) (a) edge (p2) (p2) edge (w) (w) edge (p3)
	(p5) edge (v) (v) edge (p4) (v)  edge (p6)
	(p4) edge (w) (p6) edge (b) (b)  edge (p7) (p7) edge (c)
	(c) edge (p6)
	(p3) edge[bend right=10] (d.120) (d.60) edge[bend right=10] (p3)
	(p7) edge[bend right=10] (d.310) (d.240) edge[bend right=10] (p7)
	;
	
	\end{scope}
	
	\begin{scope}[xshift=6.8cm,yshift=2.8cm]
	\tikzstyle{state}=[draw,inner sep=4pt,circle]
	\def\d{1.6}

	\node[state,label={above:$s_1$}]                 (s00) at (0,0) {};
	\node[state,label={above:$s_2$}]                 (s10) at (\d,0) {};
	\node[state,fill=lightgray,dashed]               (s01) at (0*\d,-1*\d) 
	{};
	\node[state,label={85:$s_3$},fill=lightgray]     (s11) at (1*\d,-1*\d) 
	{};
	\node[state,label={above:$s_6$}]                 (s21) at (2*\d,-1*\d) 
	{};
	\node[state,fill=lightgray,dashed]               (s02) at (0*\d,-2*\d) 
	{};
	\node[state,label={below:$s_4$},fill=lightgray]  (s12) at (1*\d,-2*\d) 
	{};
	\node[state,label={below:$s_5$}]                 (s22) at (2*\d,-2*\d) 
	{};

	\path
	(-0.5,0) edge (s00)
	(s00) edge    node {$a$} (s10)
	(s10) edge['] node {$v$} (s11)
	(s22) edge[loop right] node {$d$} (s22)
	(s11) edge[',bend right=9] node {$b$} (s12)
	(s12) edge[',bend right=9] node {$c$} (s11)
	(s21) edge[',bend right=9] node {$b$} (s22)
	(s22) edge[',bend right=9] node {$c$} (s21)
	(s12) edge node {$w$} (s22)
	;
	\path[dashed]
	(s01) edge node {$a$} (s11)
	(s02) edge node {$a$} (s12)
	(s00) edge['] node {$v$} (s01)
	(s01) edge[',bend right=9] node {$b$} (s02)
	(s02) edge[',bend right=9] node {$c$} (s01)
	(s11) edge node {$w$} (s21)
	;
	\end{scope}
	\end{tikzpicture}
	\caption{
		Example of a Petri net and its corresponding LSTS, which is reduced 
		under \textbf{D1}, \textbf{D2w}, \textbf{V}, \textbf{I} and \textbf{L}.
	}
	\label{fig:example_por}
\end{figure}
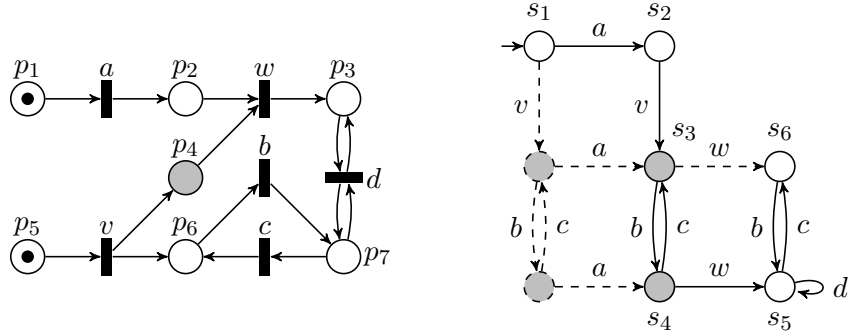

Actions $v$ and $w$ must be declared visible, because they may change the
truth value of $q$ ($v$ from false to true and $w$ in the opposite direction).
In the LSTS such events manifest themselves as transitions whose one end state
is white and the opposite end state is grey, labelled with $v$ or $w$.
Please notice that not every occurrence of a visible action must change the
truth value.
For instance, if there were initially two tokens in $p_5$, then both $\hat m
\transition{avvw}$ and $\hat m \transition{avwv}$ would be possible, the first
one inducing the label sequence $\emptyset \emptyset \{q\} \{q\} \{q\}$, and
the second $\emptyset \emptyset \{q\} \emptyset \{q\}$.

Actions $a$, $b$, $c$ and $d$ may be invisible.
In this case, we choose the set of invisible actions to be maximal, \ie, $\Inv
= \{a,b,c,d\}$.
In the initial state $s_1$, we have $\redf(s_1) = \{a\}$.
Remark that $a$ is a key action in $s_1$, since for all prefixes $\pi$ of
$v(bc)^\omega$, we have $s_1 \transition{\pi a}$.
That is, $\{a\}$ satisfies \textbf{D2w} in $s_1$.
It also satisfies \textbf{D1}, because it is easy to check that for those
$\pi$ and $s$ for which $s_1 \transition{\pi a} s$ holds, also $s_1
\transition{a \pi} s$ holds.

In states $s_3$ and $s_4$ we must have $b \in \redf(s_3)$, respectively $c \in
\redf(s_4)$, by condition \textbf{I}.
Condition \textbf{L} can be satisfied in the cycle consisting of $s_3$ and
$s_4$ by either setting $w \in \redf(s_3)$ or $w \in \redf(s_4)$; here we have
opted for the latter.
Actually, $w \in \redf(s_4)$ is also enforced by \textbf{D1}, since we have
$s_4 \transition{w d c} s_6$ and $s_4 \transition{w d} s_5$, but not $s_4
\transition{c w d} s_6$ or $s_4 \transition{d w} s_5$.
Consequently, $\{ c \}$ and $\{ c, d\}$ are not stubborn sets in $s_4$.

\section{Counter-Example}
\label{sec:counter_example}
Consider the LSTS in Figure~\ref{fig:counter_example}, which we will refer to 
as $\TS^C$.
There is only one atomic proposition $q$, which holds in the grey states and is 
false in the other states.
The initial state $\init{s}$ is marked with an incoming arrow.
First, note that this LSTS is deterministic.
The actions $\act_1$, $\act_2$ and $\act_3$ are visible and $\act$ and 
$\keyact$ are invisible.

In the initial state, we choose $\redf(\init{s}) = \{\act,\keyact\}$, which is 
a weak stubborn set by the following reasoning.
Conditions \textbf{D2w} and \textbf{I} are satisfied, since $\keyact$ is an 
invisible key action in $\init{s}$.
The path $\init{s} \transition{\act_1 \act_2}$ commutes with both $\act$ and 
$\keyact$ (and $\init{s} \transition{\act_1}$ furthermore commutes with 
$\keyact$), satisfying \textbf{D1}.
Conditions \textbf{V} and \textbf{L} are trivially true.
In all other states $s$, we choose $\redf(s) = \Act$.

As a result, we obtain a reduced LSTS $\TS^C_r$ that does not contain the 
dashed states and transitions.
The original LSTS contains the trace $\emptyset \{q\} \emptyset \emptyset 
\{q\}^\omega$, obtained by following the path with actions $\act_1 \act_2 \act 
\act_3^\omega$.
However, the reduced LSTS does not contain a stutter equivalent trace.
This is also witnessed by the LTL$_{-X}$ formula $\square (q \Rightarrow 
\square(q \lor \square \neg q))$, which holds for $\TS^C_r$, but not for 
$\TS^C$.

\begin{figure}
	\centering
	\begin{tikzpicture}[->,>=stealth',shorten >=0pt,auto,node 
	distance=2.0cm,semithick]
	\tikzstyle{state}=[draw,inner sep=4pt,circle]
	\def\x{1.7}
	\def\y{1.5}
	
	\node[state,label={above left:$\init{s}$}] (1)  at (0,\y)      {};
	\node[state,dashed,fill=lightgray]         (2)  at (\x,\y)     {};
	\node[state,dashed]                        (3)  at (2*\x,\y)   {};
	\node[state]                               (4)  at (0,0)       {};
	\node[state]                               (5)  at (\x,0)      {};
	\node[state]                               (6)  at (2*\x,0)    {};
	\node[state]                               (7)  at (0,2*\y)    {};
	\node[state,fill=lightgray]                (8)  at (\x,2*\y)   {};
	\node[state]                               (9)  at (2*\x,2*\y) {};
	\node[state,fill=lightgray]                (10) at (3*\x,0)    {};
	\path
		(-0.5,\y) edge (1)
		(1)  edge             node {$\act$}   (4)
		(4)  edge             node {$\act_1$} (5)
		(5)  edge             node {$\act_2$} (6)
		(1)  edge[']          node {$\keyact$} (7)
		(7)  edge             node {$\act_1$} (8)
		(8)  edge             node {$\act_2$} (9)
		(9)  edge[loop right] node {$\act_3$} (9)
		(6)  edge             node {$\act_3$} (10)
		(10) edge[loop right] node {$\act_3$} (10)
	;
	\path[dashed]
		(1)  edge             node {$\act_1$} (2)
		(2)  edge             node {$\act_2$} (3)
		(3)  edge             node {$\act$}   (6)
		(3)  edge[']          node {$\keyact$} (9)
		(2)  edge[']          node {$\keyact$} (8)
	;
	\end{tikzpicture}
	\caption{Counter-example showing that stubborn sets do not preserve 
	stutter-trace equivalence.
	Grey states are labelled with $\{ q \}$.
	The dashed transitions and states are not present in the reduced LSTS.}
	\label{fig:counter_example}
\end{figure}
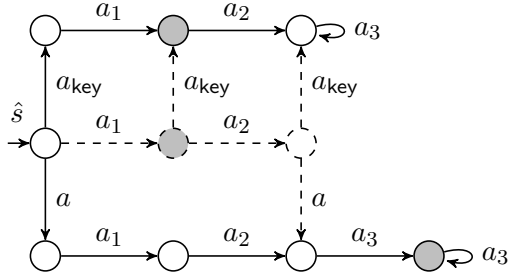

A very similar example can be used to show that strong stubborn sets suffer 
from the same problem.
Consider again the LSTS in Figure~\ref{fig:counter_example}, but assume that $a 
= \keyact$, making the LSTS no longer deterministic.
Now, $r(\init{s}) = \{\act\}$ is a strong stubborn set: \textbf{D0} is 
satisfied because $\redf(s) \cap \enabled(s) = \{a\}$ and \textbf{D2} and 
\textbf{I} are satisfied because $\act$ is an invisible key action.
Condition \textbf{D1} holds as well, since there is path $\init{s} 
\transition{\act \act_1 \act_2} s'$ (resp $\init{s} \transition{\act \act_1} 
s'$) for every path of the shape $\init{s} \transition{\act_1 \act_2 \act} s'$ 
(resp. $\init{s} \transition{\act_1 \act} s'$).
Conditions \textbf{V} and \textbf{L} are trivially true as before.
Again, the trace $\emptyset \{q\} \emptyset \emptyset \{q\}^\omega$ is not 
preserved in the reduced LSTS.
In Section~\ref{sec:deterministic_lstss}, we will see why the inconsistent 
labelling problem does not occur for deterministic systems under strong 
stubborn sets.

The core of the problem lies in the fact that condition \textbf{D1}, even when 
combined with \textbf{V}, does not enforce that the two paths it considers are 
stutter equivalent.
Consider the paths $s \transition{\act}$ and $s \transition{\act_1 \act_2 
\act}$ and assume that $\act \in \redf(s)$ and $\act_1 \notin \redf(s), \act_2 
\notin \redf(s)$.
Condition \textbf{V} ensures that at least one of the following two holds:
\begin{enumerate*}[label=\textnormal{(\roman*)}]
	\item $\act$ is invisible, or
	\item $\act_1$ and $\act_2$ are invisible
\end{enumerate*}.
Half of the possible scenarios are depicted in Figure~\ref{fig:paths_d1}; the 
other half are symmetric.
Again, the grey states (and only those states) are labelled with $\{ q \}$.

\begin{figure}
	\centering
	\begin{tikzpicture}[->,>=stealth',shorten >=0pt,auto,node 
	distance=2.0cm,semithick,scale=0.95]
	\tikzstyle{state}=[draw,inner sep=3.5pt,circle]
	\small
	\def\x{1.4}
	\def\y{1.3}
	\def\xd{2.9*\x}
	\def\yd{1.7*\y}
	
	\begin{scope}
	\node[state,fill=lightgray,label={left:$s$}]   (1)  at (0,\y)      {};
	\node[state]                                   (2)  at (\x,\y)     {};
	\node[state]                                   (3)  at (2*\x,\y)   {};
	\node[state,fill=lightgray]                    (4)  at (0,0)       {};
	\node[state]                                   (5)  at (\x,0)      {};
	\node[state,label={right:$s'$}]                (6)  at (2*\x,0)    {};
	\path
	(1)  edge node {$\act$}   (4)
	(4)  edge node {$\act_1$} (5)
	(5)  edge node {$\act_2$} (6)
	(1)  edge node {$\act_1$} (2)
	(2)  edge node {$\act_2$} (3)
	(3)  edge node {$\act$}   (6)
	;
	\end{scope}
	
	\begin{scope}[xshift=\xd cm]
	\node[state,label={left:$s$}]   (1)  at (0,\y)      {};
	\node[state,fill=lightgray]     (2)  at (\x,\y)     {};
	\node[state]                    (3)  at (2*\x,\y)   {};
	\node[state]                    (4)  at (0,0)       {};
	\node[state]                    (5)  at (\x,0)      {};
	\node[state,label={right:$s'$}] (6)  at (2*\x,0)    {};
	\path
	(1)  edge node {$\act$}   (4)
	(4)  edge node {$\act_1$} (5)
	(5)  edge node {$\act_2$} (6)
	(1)  edge node {$\act_1$} (2)
	(2)  edge node {$\act_2$} (3)
	(3)  edge node {$\act$}   (6)
	;
	\end{scope}
	
	\begin{scope}[xshift=2*\xd cm]
	\node[state,label={left:$s$}]                  (1)  at (0,\y)      {};
	\node[state,fill=lightgray]                    (2)  at (\x,\y)     {};
	\node[state,fill=lightgray]                    (3)  at (2*\x,\y)   {};
	\node[state]                                   (4)  at (0,0)       {};
	\node[state]                                   (5)  at (\x,0)      {};
	\node[state,fill=lightgray,label={right:$s'$}] (6)  at (2*\x,0)    {};
	\path
	(1)  edge node {$\act$}   (4)
	(4)  edge node {$\act_1$} (5)
	(5)  edge node {$\act_2$} (6)
	(1)  edge node {$\act_1$} (2)
	(2)  edge node {$\act_2$} (3)
	(3)  edge node {$\act$}   (6)
	;
	\end{scope}
	
	\begin{scope}[yshift=-\yd cm]
	\node[state,label={left:$s$}]   (1)  at (0,\y)      {};
	\node[state,fill=lightgray]     (2)  at (\x,\y)     {};
	\node[state]                    (3)  at (2*\x,\y)   {};
	\node[state]                    (4)  at (0,0)       {};
	\node[state,fill=lightgray]     (5)  at (\x,0)      {};
	\node[state,label={right:$s'$}] (6)  at (2*\x,0)    {};
	\path
	(1)  edge node {$\act$}   (4)
	(4)  edge node {$\act_1$} (5)
	(5)  edge node {$\act_2$} (6)
	(1)  edge node {$\act_1$} (2)
	(2)  edge node {$\act_2$} (3)
	(3)  edge node {$\act$}   (6)
	;
	\end{scope}
	
	\begin{scope}[xshift=\xd cm,yshift=-\yd cm]
	\node[state,label={left:$s$}]   (1)  at (0,\y)      {};
	\node[state]                    (2)  at (\x,\y)     {};
	\node[state]                    (3)  at (2*\x,\y)   {};
	\node[state]                    (4)  at (0,0)       {};
	\node[state,fill=lightgray]     (5)  at (\x,0)      {};
	\node[state,label={right:$s'$}] (6)  at (2*\x,0)    {};
	\path
	(1)  edge node {$\act$}   (4)
	(4)  edge node {$\act_1$} (5)
	(5)  edge node {$\act_2$} (6)
	(1)  edge node {$\act_1$} (2)
	(2)  edge node {$\act_2$} (3)
	(3)  edge node {$\act$}   (6)
	;
	\end{scope}
	
	\begin{scope}[xshift=2*\xd cm,yshift=-\yd cm]
	\node[state,label={left:$s$}]                  (1)  at (0,\y)      {};
	\node[state]                                   (2)  at (\x,\y)     {};
	\node[state,fill=lightgray]                    (3)  at (2*\x,\y)   {};
	\node[state]                                   (4)  at (0,0)       {};
	\node[state,fill=lightgray]                    (5)  at (\x,0)      {};
	\node[state,fill=lightgray,label={right:$s'$}] (6)  at (2*\x,0)    {};
	\path
	(1)  edge node {$\act$}   (4)
	(4)  edge node {$\act_1$} (5)
	(5)  edge node {$\act_2$} (6)
	(1)  edge node {$\act_1$} (2)
	(2)  edge node {$\act_2$} (3)
	(3)  edge node {$\act$}   (6)
	;
	\end{scope}
	
	\begin{scope}[yshift=-2*\yd cm]
	\node[state,label={left:$s$}]                  (1)  at (0,\y)      {};
	\node[state]                                   (2)  at (\x,\y)     {};
	\node[state,fill=lightgray]                    (3)  at (2*\x,\y)   {};
	\node[state]                                   (4)  at (0,0)       {};
	\node[state]                                   (5)  at (\x,0)      {};
	\node[state,fill=lightgray,label={right:$s'$}] (6)  at (2*\x,0)    {};
	\path
	(1)  edge node {$\act$}   (4)
	(4)  edge node {$\act_1$} (5)
	(5)  edge node {$\act_2$} (6)
	(1)  edge node {$\act_1$} (2)
	(2)  edge node {$\act_2$} (3)
	(3)  edge node {$\act$}   (6)
	;
	\end{scope}
	
	\begin{scope}[xshift=\xd cm,yshift=-2*\yd cm]
	\node[state,label={left:$s$}]    (1)  at (0,\y)      {};
	\node[state]                     (2)  at (\x,\y)     {};
	\node[state]                     (3)  at (2*\x,\y)   {};
	\node[state]                     (4)  at (0,0)       {};
	\node[state]                     (5)  at (\x,0)      {};
	\node[state,label={right:$s'$}]  (6)  at (2*\x,0)    {};
	\path
	(1)  edge node {$\act$}   (4)
	(4)  edge node {$\act_1$} (5)
	(5)  edge node {$\act_2$} (6)
	(1)  edge node {$\act_1$} (2)
	(2)  edge node {$\act_2$} (3)
	(3)  edge node {$\act$}   (6)
	;
	\end{scope}
	
	\begin{scope}[xshift=2*\xd cm,yshift=-2*\yd cm]
	\node[state,label={left:$s$}]                  (1)  at (0,\y)      {};
	\node[state]                                   (2)  at (\x,\y)     {};
	\node[state]                                   (3)  at (2*\x,\y)   {};
	\node[state,fill=lightgray]                    (4)  at (0,0)       {};
	\node[state,fill=lightgray]                    (5)  at (\x,0)      {};
	\node[state,fill=lightgray,label={right:$s'$}] (6)  at (2*\x,0)    {};
	\path
	(1)  edge node {$\act$}   (4)
	(4)  edge node {$\act_1$} (5)
	(5)  edge node {$\act_2$} (6)
	(1)  edge node {$\act_1$} (2)
	(2)  edge node {$\act_2$} (3)
	(3)  edge node {$\act$}   (6)
	;
	\end{scope}
	
	\def\xmargin{0.63}
	\def\ymargin{0.455}
	\draw[dash pattern=on 4pt off 4pt,rounded corners=8pt] 
	(\xd-\xmargin+0.05,-2*\yd-\ymargin+0.1) rectangle 
	(3*\xd-\xmargin+0.05,-1*\yd-\ymargin-0.03);
	\node at (2*\xd+1.9,-2*\yd-0.55) {$\act_1$ and $\act_2$ invisible};
	\draw[line width=0.9pt,dash pattern=on 1pt off 2pt,rounded corners=8pt] 
	(-\xmargin-0.05,-2*\yd-\ymargin-0.1) -- ++(2*\xd+0.10,0) -- ++(0,\yd+0.25) 
	-- ++(\xd+0.15,0) -- ++(0,2*\yd+0.45) -- ++(-3*\xd-0.2,0) -- cycle;
	\node at (0.5,\yd+0.32) {$\act$ invisible};
	\draw[rounded corners=8pt] (\xd-\xmargin+0.05,-1*\yd-\ymargin+0.15) 
	rectangle (2*\xd-\xmargin,\yd-\ymargin-0.03);
	\node at (\xd+1.3,\yd-0.3) {inconsistent labelling};
	\end{tikzpicture}
	\caption{Nine possible scenarios when $\act \in \redf(s)$ and $\act_1 
	\notin \redf(s), \act_2 \notin \redf(s)$, according to conditions 
	\textbf{D1} and \textbf{V}.
	The dotted and dashed lines indicate when $\act$ or $\act_1,\act_2$ are 
	invisible, respectively.
	}
	\label{fig:paths_d1}
\end{figure}
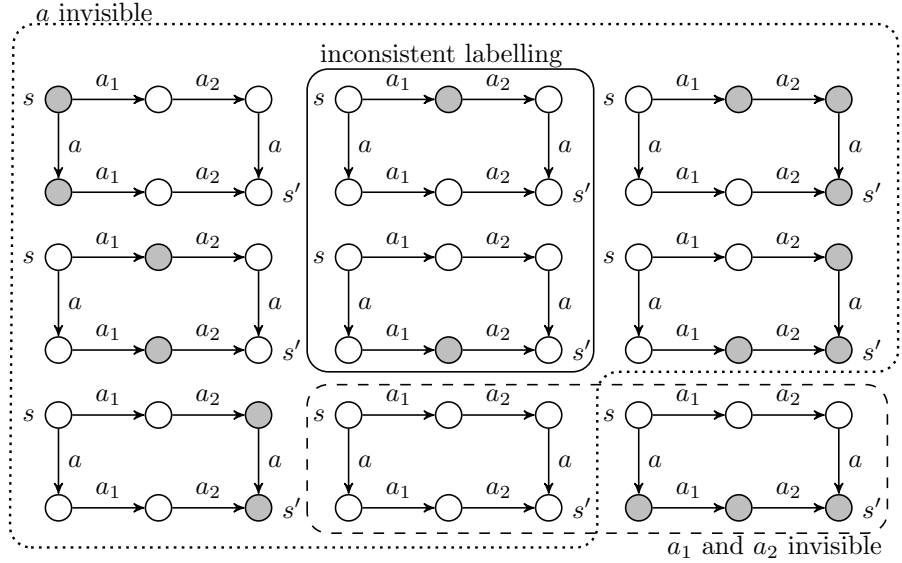

The two cases delimited with a solid line are problematic.
In both LSTSs, the paths $s \transition{\act_1 \act_2 \act} s'$ and $s 
\transition{\act \act_1 \act_2} s'$ are weakly equivalent, since $a$ is 
invisible.
However, they are not stutter equivalent, and therefore these LSTSs are not 
labelled consistently.
The topmost of these two LSTSs forms the core of the counter-example $\TS^C$, 
with the rest of $\TS^C$ serving to satisfy condition \textbf{D2}/\textbf{D2w}.

\section{Strengthening Condition D1}
\label{sec:strengthen_d1}
To fix the issue with inconsistent labelling, we propose to strengthen 
condition \textbf{D1} as follows\footnote{
	Based on a comment by one of the journal's reviewers, we noticed that 
	condition \textbf{D1'} can be weakened further, by changing the last 
	sentence to: ``Furthermore, if none of $\act_1,\dots,\act_n$ is visible, 
	then $s_i \transition{\act} s'_i$ for every $1 \leq i < n$.''
	This weakening additionally allows a reduction in the bottom-middle LSTS of 
	Figure~\ref{fig:paths_d1}, although this is hard to exploit in practice 
	(see Section~\ref{sec:implementation}).
	However, given the nature of this study, we chose to not make any 
	last-minute changes to avoid making new mistakes.
	This choice was further motivated by the following remark by another 
	reviewer (for which we are grateful): ``I really carefully checked all the 
	results and proofs and can accept the arguments and conclusions.''
}.

\begin{description}[leftmargin=!,labelwidth=\widthof{\bfseries D1':}]
	\item[\textbf{D1'}] For all states $s_1,\dots,s_n,s'_n$ and all $\act \in 
	\redf(s)$ and $\act_1 \notin \redf(s), \dots, \act_n \notin \redf(s)$, if 
	$s \transition{\act_1} s_1 \transition{\act_2} \dots \transition{\act_n} 
	s_n \transition{\act} s'_n$, then there are states $s',s'_1,\dots,s'_{n-1}$ 
	such that $s \transition{\act} s' \transition{\act_1} s'_1 
	\transition{\act_2} \dots \transition{\act_n} s'_n$.
	Furthermore, if $\act$ is invisible, then $s_i \transition{\act} s'_i$ for 
	every $1 \leq i < n$.
\end{description}

On top of what is stated in \textbf{D1}, the new condition \textbf{D1'} 
requires the presence of intermediate vertical transitions $s_i 
\transition{\act} s'_i$ whenever $\act$ is invisible.
In this case, \textbf{V} implies that $\act$ is invisible and, consequently, 
the presence of transitions $s_i \transition{\act} s'_i$ implies $L(s_i) = 
L(s'_i)$.
Thus, condition \textbf{D1'} provides a form of \emph{local} consistent 
labelling.
Hence, the problematic cases of Figure~\ref{fig:paths_d1} are resolved; a 
correctness proof is given below.

Condition \textbf{D1'} is very similar to condition 
\textbf{C1}~\cite{Gerth1999}, which is common in the context of ample sets.
However, \textbf{C1} requires that action $\act$ is \emph{globally} independent 
of each of the actions $\act_1,\dots,\act_n$, while \textbf{D1'} merely 
requires a kind of \emph{local} independence.
Persistent sets~\cite{Godefroid1996} also rely on a condition similar to 
\textbf{D1'}, and require local independence.
Thus, under ample sets and persistent sets, the vertical transitions $s_i 
\transition{\act} s'_i$ are always present, and hence they do not suffer from 
the inconsistent labelling problem.

\subsection{Correctness}
\label{sec:correctness}
To show that \textbf{D1'} indeed resolves the inconsistent labelling problem, 
we amend the lemmata and proofs of Section~\ref{sec:stubborn_sets}.
The core of the revised argument lies in a new version of 
Lemma~\ref{lem:D1_vis} that relates the state labels of the two paths 
considered by \textbf{D1'}.

\begin{lem}\label{lem:D1'_labels}
	Assume that \textbf{D1'} yields $\rho = s_0 \transition{a} s'_0 
	\transition{a_1} s'_1 \transition{a_2} \dots \transition{a_n} s'_n$	from 
	$\pi = s_0 \transition{a_1} s_1 \transition{a_2} \dots \transition{a_n} s_n 
	\transition{a} s'_n$.
	If \textbf{V} holds, then $\pi \stuteq \rho$.
\end{lem}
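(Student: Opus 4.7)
The plan is to prove $\pi \stuteq \rho$ by a case split on the visibility of $\act$, since the information supplied by \textbf{D1'} differs in the two cases.

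First, I would treat the case that $\act$ is invisible. The extra clause of \textbf{D1'} then provides transitions $s_i \transition{\act} s'_i$ for every $1 \le i < n$, and together with the transitions $s_0 \transition{\act} s'_0$ and $s_n \transition{\act} s'_n$ occurring already in $\rho$ and $\pi$ this gives $L(s_i) = L(s'_i)$ for all $0 \le i \le n$ by the definition of $\Inv$. Substituting these equalities, the trace of $\rho$ becomes $L(s_0)\, L(s_0)\, L(s_1) \cdots L(s_{n-1})\, L(s_n)$ and the trace of $\pi$ becomes $L(s_0)\, L(s_1) \cdots L(s_n)\, L(s_n)$. Each is obtained from $L(s_0)\, L(s_1) \cdots L(s_n)$ by repeating a single end-label once, so both yield the same no-stutter trace.

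Second, I would treat the case that $\act$ is visible. Because $\act \in \redf(s_0)$ is enabled at $s_0$, condition \textbf{V} forces every visible action into $\redf(s_0)$, so none of $\act_1,\dots,\act_n$ is visible. By invisibility this gives $L(s_0) = L(s_1) = \cdots = L(s_n)$ along $\pi$ and $L(s'_0) = L(s'_1) = \cdots = L(s'_n)$ along the tail of $\rho$. The no-stutter trace of $\pi$ therefore reduces to $L(s_0)$ or to $L(s_0)\, L(s'_n)$ according to whether these labels coincide, and the no-stutter trace of $\rho$ reduces to $L(s_0)$ or to $L(s_0)\, L(s'_0)$; the two answers agree because $L(s'_0) = L(s'_n)$.

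I do not anticipate a deep obstacle; the work is essentially bookkeeping. The only subtlety to watch is that the new clause of \textbf{D1'} covers only the strictly interior indices $1 \le i < n$, so the endpoint equalities $L(s_0) = L(s'_0)$ and $L(s_n) = L(s'_n)$ must be drawn from the transitions $s_0 \transition{\act} s'_0$ and $s_n \transition{\act} s'_n$ supplied respectively by the main clause of \textbf{D1'} and by the hypothesis path $\pi$.
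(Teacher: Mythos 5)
Your proof is correct and follows essentially the same route as the paper's: a case split on the visibility of $\act$, using the new clause of \textbf{D1'} to get $L(s_i)=L(s'_i)$ when $\act$ is invisible, and using \textbf{V} to conclude that all of $\act_1,\dots,\act_n$ are invisible when $\act$ is visible. Your explicit handling of the endpoint indices $i=0$ and $i=n$ (drawn from the transitions already present in $\rho$ and $\pi$) is a small point of extra care that the paper's proof leaves implicit.
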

\begin{proof}
	If $a$ is invisible, then \textbf{D1'} enforces that $s_i 
	\transition{\act} s'_i$ for every $1 \leq i < n$.
	Thus, we have $L(s_i) = L(s'_i)$ for $1 \leq i \leq n$ and $\pi \stuteq 
	\rho$ follows.
	From now on assume that $a$ is visible.
	Because \textbf{D1'} only applies if $a \in \redf(s)$, $\redf(s)$ contains 
	an enabled visible action.
	By \textbf{V}, $\redf(s)$ contains all visible actions.
	Because none of $a_1$, \ldots, $a_n$ is in $\redf(s)$, they must be
	invisible and we have $L(s_0) = L(s_1) = \ldots = (L_n)$ and $L(s'_0) = 
	L(s'_1) = \ldots = L(s'_n)$.
	So the traces of $\pi$ and $\rho$ are $L(s_0)^{n+1} L(s'_n)$ and $L(s_0) 
	L(s'_n)^{n+1}$, respectively.
	We conclude that $\pi \stuteq \rho$.
\end{proof}

We use the same reasoning to derive the existence of a transition $s_k 
\transition{\keyact} s'_k$ for every $k > 0$ in the proof of 
Lemma~\ref{lem:inf_key}, which yields the stronger result that, if $\keyact$ is 
invisible, $\nostut(\pi) = \nostut(\rho)$.
The other lemmata are changed by replacing every occurrence of $\vis$ by 
$\nostut$.
Furthermore, in the proof of Lemma~\ref{lem:condition_L}, we reason about a 
visible action $a_{i,v}$ that actually changes the state labelling.
This results in the following two theorems that replace 
Theorems~\ref{thm:dl_vis} and~\ref{thm:inf_vis} respectively.

\begin{thm}\label{thm:dl_nostut}
	Assume that each $\redf(s)$ obeys \textbf{D1'}, \textbf{D2w} and \textbf{V}.
	If $s \in S_r$, $s_n$ is a deadlock in $\TS$, then for all paths $\pi = s 
	\transition{a_1 \ldots a_n} s_n$, there is a path $\rho = s 
	\redtransition{b_1 \ldots b_n} s_n$ such that $\nostut(\pi) = 
	\nostut(\rho)$.
\end{thm}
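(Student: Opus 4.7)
The plan is to mirror the induction-on-length argument used for Theorem~\ref{thm:deadlock}, replacing \textbf{D1} by \textbf{D1'} and Lemma~\ref{lem:D1_vis} by Lemma~\ref{lem:D1'_labels}, so that each rearrangement of the path preserves not merely the projection on visible actions but the entire no-stutter trace. The crux is that the strengthened condition gives a stutter-equivalent rearrangement at every local swap.

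I would induct on $n$. The base case $n=0$ is trivial, since $s = s_n$ and both $\pi$ and $\rho$ are empty, so $\nostut(\pi) = L(s) = \nostut(\rho)$. For the inductive step, I would reproduce verbatim the selection argument from the proof of Theorem~\ref{thm:deadlock}: by \textbf{D2w}, $\redf(s)$ contains a key action $\keyact$; since $s_n$ is a deadlock, the key-action property forces at least one $a_j$ to lie in $\redf(s)$, because otherwise $s_n \transition{\keyact}$, contradicting $\enabled(s_n) = \emptyset$. Pick the smallest such index $i$, so $a_1, \dots, a_{i-1} \notin \redf(s)$ and $a_i \in \redf(s)$. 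Condition \textbf{D1'} then delivers states $s'_0, s'_1, \dots, s'_{i-1}$ with $s'_{i-1} = s_i$ and $s \transition{a_i} s'_0 \transition{a_1} \dots \transition{a_{i-1}} s_i$; moreover $s \redtransition{a_i} s'_0$ because $a_i \in \redf(s)$.

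The new ingredient is to invoke Lemma~\ref{lem:D1'_labels} on the two prefixes $s \transition{a_1 \cdots a_i} s_i$ and $s \transition{a_i a_1 \cdots a_{i-1}} s_i$, obtaining that they have the same no-stutter trace. Concatenating the common suffix $s_i \transition{a_{i+1} \cdots a_n} s_n$ to both preserves this equality, so $\pi \stuteq \pi'$, where $\pi' = s \transition{a_i} s'_0 \transition{a_1 \cdots a_{i-1} a_{i+1} \cdots a_n} s_n$. Since the sub-path $s'_0 \transition{a_1 \cdots a_{i-1} a_{i+1} \cdots a_n} s_n$ has length $n-1$, ends in the deadlock $s_n$, and begins in $s'_0 \in S_r$, the induction hypothesis yields a reduced reordering $s'_0 \redtransition{b_2 \cdots b_n} s_n$ with matching no-stutter trace. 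Prepending $s \redtransition{a_i} s'_0$ produces $\rho = s \redtransition{a_i b_2 \cdots b_n} s_n$ with $\nostut(\rho) = \nostut(\pi') = \nostut(\pi)$.

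The step that needs the most care is the routine but easy-to-mis-state fact that gluing a common suffix onto two stutter-equivalent paths preserves stutter equivalence. Here the label of the shared join state ($s_i$, respectively $s'_0$) occurs at the end of one segment and the start of the next, and one must check from Definition~\ref{def:stutter_equivalence} that this does not introduce a phantom stutter block on either side. Aside from this bookkeeping, the proof is a clean transcription of Theorem~\ref{thm:deadlock} with \textbf{D1'} and Lemma~\ref{lem:D1'_labels} supplying the stutter-preserving swap at each application, in place of \textbf{D1} and Lemma~\ref{lem:D1_vis} which only preserved the visible projection.
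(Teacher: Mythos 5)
Your proposal is correct and matches the paper's intended argument: the paper proves this theorem by rerunning the induction of Theorem~\ref{thm:deadlock} with \textbf{D1'} in place of \textbf{D1} and Lemma~\ref{lem:D1'_labels} in place of Lemma~\ref{lem:D1_vis}, replacing $\vis$ by $\nostut$ throughout, which is exactly what you do. Your explicit check that appending a common suffix (and prepending the common transition $s \redtransition{a_i} s'_0$) preserves equality of no-stutter traces is a detail the paper leaves implicit, and it goes through because the glued paths agree on the label of the join state.
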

\begin{thm}\label{thm:inf_nostut}
	Assume that each $\redf(s)$ obeys \textbf{D1'}, \textbf{D2w}, \textbf{V},
	\textbf{I} and \textbf{L}.
	For all $s \in S_r$ and $\pi = s \transition{a_1 a_2 \ldots}$, there is a 
	path 
	$\rho = s \redtransition{b_1 b_2 \ldots}$ such that $\nostut(\pi) = 
	\nostut(\rho)$.
\end{thm}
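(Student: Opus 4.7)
The plan is to mirror the proof chain that led from Lemma~\ref{lem:inf_key} through Lemmas~\ref{lem:inf_prefix}, \ref{lem:condition_I} and~\ref{lem:condition_L} up to Theorem~\ref{thm:inf_vis}, systematically replacing $\vis$ by $\nostut$ and invoking Lemma~\ref{lem:D1'_labels} wherever the original proofs invoked Lemma~\ref{lem:D1_vis}. The substitution is legitimate because \textbf{D1'} provides two upgrades over \textbf{D1}: when the action being pulled to the front is invisible, \textbf{D1'} supplies the ``vertical'' transitions $s_i \transition{a} s'_i$ at every intermediate state, and together with \textbf{V} this turns the two commuted paths from weakly equivalent into stutter equivalent (that is the content of Lemma~\ref{lem:D1'_labels}). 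This is exactly the ingredient that was missing in the original proof, and it is what makes the analog of every lemma in this section go through for $\nostut$.

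First, I would strengthen Lemma~\ref{lem:inf_key}. In the grid of Figure~\ref{fig:exist_infinite_path}, when $\keyact$ is invisible, \textbf{D1'} contributes, for every row $i > 0$, a vertical transition $s_i \transition{\keyact} s'_{i,i}$, so that the labels of the top and bottom endpoint of every column agree. The same K\"onig-style extraction then yields a bottom path $\rho' = s'_0 \transition{a_1} s'_1 \transition{a_2} \ldots$ that is in fact stutter equivalent to $\pi$, giving $\nostut(\rho') = \nostut(\pi)$. When $\keyact$ is visible, the original conclusion that every $a_j$ is invisible is unchanged, so the labels $L(s_i)$ and $L(s'_i)$ stabilise and can be compared directly.

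Next I would replay the induction of Lemma~\ref{lem:inf_prefix} with invariants (1) and (2) rewritten in terms of $\nostut$ instead of $\vis$. The ``move $a_{i,j}$ to the front'' step is justified by Lemma~\ref{lem:D1'_labels}: the replaced finite prefix of the infinite path has the same no-stutter trace as the original, so invariant (1) is preserved; and the ``introduce a key action'' step is justified by the strengthened Lemma~\ref{lem:inf_key}, because inserting an invisible $\keyact$ leaves the labels of the residual path unchanged. Condition \textbf{I} guarantees that an invisible key action is always available at this step (as in Lemma~\ref{lem:condition_I}), so invariant (1) holds throughout the induction, and passing to the limit yields that $\nostut(\rho)$ is a prefix of $\nostut(\pi)$.

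The main obstacle is adapting Lemma~\ref{lem:condition_L} to obtain the converse prefix relation, since $\nostut$ tracks label changes rather than visible actions, and a visible action need not advance the no-stutter trace. I would proceed by contradiction as in the original: assume $\nostut(\rho)$ is a proper prefix of $\nostut(\pi)$, so invariant (1) forces the existence of a smallest index $v$ in the residual $a_{i,1}, a_{i,2}, \ldots$ whose execution produces the first missing label of $\nostut(\pi)$. In particular $a_{i,v}$ is visible. The pigeonhole and cycle argument of Lemma~\ref{lem:condition_L} carries over verbatim, using \textbf{L} to find an index $h$ within $v\cdot\lvert S_r\rvert$ steps where some $a_{i,k}$ with $k \le v$ lies in $\redf(s_h)$; \textbf{D1'} then transports this action forward, and because each commutation step preserves stutter equivalence of the residual path (Lemma~\ref{lem:D1'_labels}), the label change associated with $a_{i,v}$ is forced into $\rho$ itself, contradicting the assumed properness. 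Combining the two prefix inclusions gives $\nostut(\pi) = \nostut(\rho)$, which is the theorem.
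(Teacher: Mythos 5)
Your proposal is correct and follows essentially the same route as the paper, which likewise obtains this theorem by re-running the chain from Lemma~\ref{lem:inf_key} through Lemma~\ref{lem:condition_L} with $\vis$ replaced by $\nostut$, substituting Lemma~\ref{lem:D1'_labels} for Lemma~\ref{lem:D1_vis}, strengthening Lemma~\ref{lem:inf_key} via the additional transitions $s_k \transition{\keyact} s'_k$ supplied by \textbf{D1'}, and reasoning in the Lemma~\ref{lem:condition_L} step about a visible $a_{i,v}$ that actually changes the state labelling. The only slip is cosmetic: in the K\"onig grid the diagonal transitions $s_i \transition{\keyact} s'_{i,i}$ already come from the key-action property, whereas what \textbf{D1'} newly contributes are the intermediate vertical transitions $s_j \transition{\keyact} s'_{i,j}$ for $j < i$ --- which is what your column-by-column label argument in fact uses.
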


With $\mathnormal{\rededgerel} \subseteq \mathnormal{\edgerel}$, it follows 
immediately that the replacement of condition \textbf{D1} by \textbf{D1'} is 
sufficient to ensure the reduced transition system $\TS_\redf$ is stutter-trace 
equivalent to the original transition system $\TS$.
Thus, the problem with Theorem~\ref{thm:d1_preserve_stutter_trace_equivalence} 
is resolved.

\subsection{Implementation}
\label{sec:implementation}
As discussed in Section~\ref{sec:stub_deadlock}, most, if not all, 
implementations of stubborn sets approximate \textbf{D1} based on a binary 
relation $\leadsto$ on actions.
This relation may even (partly) depend on the current state $s$, in which case 
we write $\leadsto_s$, and it should be such that condition \textbf{D1} is 
satisfied whenever $\act \in \redf(s)$ and $\act \leadsto_s \act'$ together 
imply $\act' \in \redf(s)$.
A set satisfying \textbf{D0}, \textbf{D1}, \textbf{D2}, \textbf{V} and 
\textbf{I} or \textbf{D1}, \textbf{D2w}, \textbf{V} and \textbf{I} can be found 
by searching for a suitable \emph{strongly connected component} in the graph 
$(\Act,\leadsto_s)$.
Condition \textbf{L} is dealt with by other techniques.

Practical implementations construct $\leadsto_s$ by analysing how any two 
actions $\act$ and $\act'$ interact.
If $a$ is enabled, the simplest (but not necessarily the best possible) 
strategy is to make $a \leadsto_s a'$ if and only if $a$ and $a'$ access at 
least one place (in the case of Petri nets) or variable (in the more general 
case) in common.
This can be relaxed, for instance, by not considering commutative accesses, 
such as writing to and reading from a FIFO buffer.
As a result, $\leadsto_s$ can only detect reduction opportunities in 
(sub)graphs of the shape
\begin{center}
\begin{tikzpicture}[->,>=stealth',shorten >=0pt,auto,node 
distance=2.0cm,semithick,scale=0.95]
\def\d{1.4}
\node (s)     at (0,\d)        {$s$};
\node (s1)    at (\d,\d)       {$s_1$};
\node (d)     at (1.65*\d,\d)  {$\dots$};
\node (sn-1)  at (2.5*\d,\d)   {$s_{n-1}$};
\node (sn)    at (3.5*\d,\d)   {$s_n$};
\node (s')    at (0,0)         {$s'$};
\node (s'1)   at (\d,0)        {$s'_1$};
\node (d')    at (1.65*\d,0)   {$\dots$};
\node (s'n-1) at (2.5*\d,0)    {$s'_{n-1}$};
\node (s'n)   at (3.5*\d,0)    {$s'_n$};
\path
(s)     edge node {$\act_1$} (s1)
(d)     edge              (sn-1)
(sn-1)  edge node {$\act_n$} (sn)
(sn)    edge node {$\act$}   (s'n);
\path[-]
(s1)    edge              (d);
\path
(s')    edge node {$\act_1$} (s'1)
(d')    edge              (s'n-1)
(s'n-1) edge node {$\act_n$} (s'n)
(s)     edge node {$\act$}   (s');
\path[-]
(s'1)   edge              (d');
\path
(s1)    edge node {$\act$}   (s'1)
(sn-1)    edge node {$\act$}   (s'n-1)
;
\end{tikzpicture}
\end{center}
where $\act \in \redf(s)$ and $\act_1 \notin \redf(s),\dots,\act_n \notin 
\redf(s)$.
The presence of the vertical $\act$ transitions in $s_1,\dots,s_{n-1}$ implies 
that \textbf{D1'} is also satisfied by such implementations.

\subsection{Deterministic LSTSs}
\label{sec:deterministic_lstss}
As already noted in Section~\ref{sec:counter_example}, strong stubborn sets for 
deterministic systems do not suffer from the inconsistent labelling problem.
The following lemma, which also appeared as~\cite[Lemma 4.2]{Valmari2017b}, 
shows why.

\begin{lem}
	\label{lmm:det_lsts_d1'_implied}
	For deterministic LSTSs, conditions \textbf{D1} and \textbf{D2} together 
	imply \textbf{D1'}.
\end{lem}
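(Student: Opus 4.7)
The plan is to derive the extra conclusion of \textbf{D1'}, namely the intermediate vertical transitions $s_i \transition{\act} s'_i$ for $1 \leq i < n$, by combining the key-action property provided by \textbf{D2} with a second application of \textbf{D1} to a prefix of the bottom path, and then collapsing the two resulting copies by determinism. Note that \textbf{D1} is included as a hypothesis of \textbf{D1'}, so the ``horizontal'' part of the square is already given; only the vertical $\act$-transitions are new. In fact the argument does not use the invisibility of $\act$, so the lemma actually yields a slightly stronger statement than required.

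In detail, fix $s, \act, \act_1, \dots, \act_n, s_1, \dots, s_n, s'_n$ satisfying the antecedent of \textbf{D1'}, and let $s', s'_1, \dots, s'_{n-1}$ be the witnesses supplied by \textbf{D1}. If $n \leq 1$, there is nothing to add. Otherwise fix $1 \leq i \leq n-1$. Since $\act \in \redf(s)$ is enabled in $s$ (witnessed by $s \transition{\act} s'$), \textbf{D2} tells us that $\act$ is a key action of $\redf(s)$ in $s$. Applying the key-action property to the prefix $s \transition{\act_1 \dots \act_i} s_i$, whose labels all lie outside $\redf(s)$, yields a state $t$ with $s_i \transition{\act} t$.

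Now I would apply \textbf{D1} a second time, this time to the path $s \transition{\act_1} s_1 \transition{\act_2} \dots \transition{\act_i} s_i \transition{\act} t$. This yields states $u, u_1, \dots, u_{i-1}$ with $s \transition{\act} u \transition{\act_1} u_1 \transition{\act_2} \dots \transition{\act_i} t$. I then use determinism inductively to identify this new path with the corresponding prefix of the path produced by the first invocation of \textbf{D1}: from $s \transition{\act} s'$ and $s \transition{\act} u$ determinism gives $u = s'$; from $s' \transition{\act_1} s'_1$ and $u \transition{\act_1} u_1$ it gives $u_1 = s'_1$; continuing along the chain, $u_j = s'_j$ for all $0 \leq j \leq i-1$. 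Finally, from $s'_{i-1} \transition{\act_i} s'_i$ and $u_{i-1} \transition{\act_i} t$ together with $u_{i-1} = s'_{i-1}$, determinism forces $t = s'_i$. Hence $s_i \transition{\act} s'_i$, as required.

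There is no real obstacle here; the only point that must be handled with care is to apply \textbf{D1} to the right prefix (ending at $s_i$ rather than $s_n$) so that the determinism cascade lines up exactly with the witnesses supplied by the original application of \textbf{D1}. If the cascade is written as a short induction on $j = 0, 1, \dots, i$, the equalities $u_j = s'_j$ (with $u_0 = u$ and $s'_0 = s'$) fall out cleanly and the proof is a few lines.
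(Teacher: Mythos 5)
Your proposal is correct and follows essentially the same route as the paper's proof: use \textbf{D2} to obtain the key-action property of $\act$ and hence a transition $s_i \transition{\act} t$, re-apply \textbf{D1} to the prefix ending at $s_i$, and collapse the resulting path onto the prefix of the original \textbf{D1}-witness path by determinism. The only difference is presentational --- you spell out the determinism cascade that the paper compresses into ``every path $\pi_i$ must coincide with a prefix of $\pi'$'' --- and your observation that invisibility of $\act$ is never used is accurate.
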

\begin{proof}
	Let \TS be a deterministic LSTS, $\pi = s_0 \transition{\act_1} s_1 
	\transition{\act_2} \dots \transition{\act_n} s_n \transition{\act} s'_n$ a 
	path in \TS and $\redf$ a reduction function that satisfies \textbf{D1} and 
	\textbf{D2}.
	Furthermore, assume that $\act \in \redf(s_0)$ and $\act_1 \notin 
	\redf(s_0), \dots, \act_n \notin \redf(s_0)$.
	By applying \textbf{D1}, we obtain the path $\pi' = s_0 \transition{\act} 
	s'_0 \transition{\act_1} \dots \transition{\act_n} s'_n$, which satisfies 
	the first part of condition \textbf{D1'}.
	With \textbf{D2}, we have $s_i \transition{a} s^i_i$ for every $1 \leq i 
	\leq n$.
	Then, we can also apply \textbf{D1} to every path $s_0 \transition{\act_1} 
	\dots \transition{\act_i} s_i \transition{a} s^i_i$ to obtain, for all $1 
	\leq i \leq n$, paths $\pi_i = s_0 \transition{\act} s^i_0 
	\transition{\act_1} s^i_1 \transition{\act_2} \dots \transition{\act_i} 
	s^i_i$.
	Since \TS is deterministic, every path $\pi_i$ must coincide with a prefix 
	of $\pi'$.
	We conclude that $s^i_i = s'_i$ and so the requirement that $s_i 
	\transition{\act} s'_i$ for every $1 \leq i \leq n$ is also satisfied.
\end{proof}

\section{Safe Logics}
\label{sec:safe_formalisms}
In this section, we will identify two logics, \viz reachability and CTL$_{-X}$, 
which are not affected by the inconsistent labelling problem.
This is either due to their limited expressivity or the additional POR 
conditions that are required on top of the conditions we have introduced so far.

\subsection{Reachability properties}
Although the counter-example of Section~\ref{sec:counter_example} shows that 
stutter-trace equivalence is in general not preserved by stubborn sets, some 
fragments of LTL$_{-X}$ are preserved.
One such class of properties is reachability properties, which are of the shape 
$\square f$ or $\Diamond f$, where $f$ is a formula not containing temporal 
operators.

\begin{thm}
	\label{thm:preserve_reachability}
	Let $\TS$ be an LSTS, $\redf$ a reduction function that satisfies either 
	\textbf{D0}, \textbf{D1}, \textbf{D2}, \textbf{V} and \textbf{L} or 
	\textbf{D1}, \textbf{D2w}, \textbf{V} and \textbf{L} and $\TS_r$ the 
	reduced LSTS.
	For all possible labellings $l \subseteq \AP$, $\TS$ contains an initial 
	path to a state $s$ such that $L(s) = l$ iff $\TS_r$ contains an initial 
	path to a state $s'$ such that $L(s') = l$.
\end{thm}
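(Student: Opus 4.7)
The `if' direction is immediate since $\TS_r$ is a subgraph of $\TS$ with $L_r$ the restriction of $L$ to $S_r$: any initial path in $\TS_r$ ending in a state with label $l$ is also an initial path in $\TS$ ending in that same state. For the `only if' direction I would prove the slightly stronger claim that for every $s_0 \in S_r$ and every path $\pi = s_0 \transition{a_1 \ldots a_n} s_n$ in $\TS$ with $L(s_n) = l$, some path from $s_0$ in $\TS_r$ reaches a state with label $l$. Since \textbf{D0} together with \textbf{D2} implies \textbf{D2w}, it suffices to treat the weak case \textbf{D1}, \textbf{D2w}, \textbf{V}, \textbf{L}. The construction mirrors that of Lemma~\ref{lem:inf_prefix}: maintain a prefix $\rho_i = s_0 \redtransition{b_1 \ldots b_i} s'_i$ in $\TS_r$ together with a residual path $\pi_i = s'_i \transition{c_1 \ldots c_{m_i}} t_i$ in $\TS$ with $L(t_i) = l$, starting from $\rho_0 = \varepsilon$ and $\pi_0 = \pi$.

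At each step, if $L(s'_i) = l$ we are done. Otherwise $t_i \neq s'_i$, so $\pi_i$ is non-empty and $\enabled(s'_i) \neq \emptyset$; branch on whether $\redf(s'_i)$ intersects the actions of $\pi_i$. If some $c_j$ lies in $\redf(s'_i)$, take the smallest such $j$ and apply \textbf{D1} to move $c_j$ to the front, extending $\rho$ by $c_j$ and shortening the residual path by one. If no $c_j$ lies in $\redf(s'_i)$ and $\pi_i$ contains no visible action, then invisibility of all $c_j$ forces $L(s'_i) = L(t_i) = l$, and we are done. Otherwise $\pi_i$ contains a visible action $v \notin \redf(s'_i)$; by the contrapositive of \textbf{V}, $\redf(s'_i)$ contains no enabled visible action, so the key action $\keyact$ supplied by \textbf{D2w} must be invisible. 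The key-action property gives $t_i \transition{\keyact}$, whence \textbf{D1} applied to this extension yields $s'_i \transition{\keyact} s^*_i \transition{c_1 \ldots c_{m_i}}$, whose terminal state remains labelled $l$ because $\keyact$ is invisible. We extend $\rho$ by $\keyact$, leaving the residual length unchanged.

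The only nontrivial obstacle is termination. In the first branch the residual length strictly decreases, whereas in the third branch it stays the same while $\rho$ grows. Ruling out an infinite run of third-branch steps requires both the finiteness of $\TS_r$ and condition \textbf{L}: during any maximal stretch of consecutive third-branch steps the residual $\pi_i$ is stable and in particular retains the visible action $v$, so a stretch of more than $|S_r|$ steps would force the states $s'_i$ to form a cycle in $\TS_r$; by \textbf{L} some state $s'_p$ on that cycle satisfies $v \in \redf(s'_p)$, contradicting the third-branch assumption at step $p$. Each third-branch stretch therefore has bounded length, and as the number of first-branch steps is bounded by $n$, the iteration terminates and produces a path $\rho$ in $\TS_r$ from $s_0$ to a state with label $l$.
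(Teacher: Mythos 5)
Your proof is correct and follows essentially the same strategy as the paper's: repeatedly use \textbf{D1} to shift the first stubborn action of the mimicked path to the front, and otherwise introduce a key action that must be invisible by the contrapositive of \textbf{V} (given a visible action remaining in the residual), handling the all-invisible residual case by stopping immediately. The only difference is one of detail: your pigeonhole-plus-\textbf{L} termination argument carefully spells out what the paper dispatches in a single sentence (``the second case cannot be repeated infinitely often, due to condition \textbf{L}'').
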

\begin{proof}
	The ``if'' case is trivial, since $\TS_r$ is a subgraph of $\TS$.
	For the ``only if'' case, we reason as follows.
	Let $\TS = (S, \edgerel, \init{s}, L)$ be an LSTS and $\pi = s_0 
	\transition{\act_1} \dots \transition{\act_n} s_n$ an initial path, \ie, 
	$s_0 = \init{s}$.
	We mimic this path by repeatedly taking some enabled action $\act$ that is 
	in the stubborn set, according to the following schema.
	Below, we assume the path to be mimicked contains at least one visible 
	action.
	Otherwise, its first state would have the same labelling as $s_n$.
	\begin{enumerate}
		\item If there is an $i$ such that $\act_i \in \redf(s_0)$, we consider 
		the smallest such $i$, \ie, $\act_1 \notin \redf(s_0), \dots, 
		\act_{i-1} \notin \redf(s_0)$.
		Then, we can shift $\act_i$ forward by \textbf{D1}, move towards $s_n$ 
		along $s_0 \transition{\act_i} s'_0$ and continue by mimicking $s'_0 
		\transition{\act_1} \dots \transition{\act_{i-1}} s_i 
		\transition{\act_{i+1}} \dots \transition{\act_n} s_n$.
		\item If all of $\act_1 \notin \redf(s_0), \dots, \act_n \notin 
		\redf(s_0)$, then, by \textbf{D0} and \textbf{D2} or by \textbf{D2w}, 
		there is a key action $\keyact$ in $s_0$.
		By the definition of key actions and \textbf{D1}, $\keyact$ leads to a 
		state $s'_0$ from which we can continue mimicking the path $s'_0 
		\transition{\act_1} s'_1 \transition{\act_2} \dots \transition{\act_n} 
		s'_n$.
		Note that $L(s_n) = L(s'_n)$, since $\keyact$ is invisible by condition 
		\textbf{V}.
	\end{enumerate}
	The second case cannot be repeated infinitely often, due to condition 
	\textbf{L}.
	Hence, after a finite number of steps, we reach a state $s'_n$ with 
	$L(s'_n) = L(s_n)$.
\end{proof}

We remark that more efficient mechanisms for reachability checking under POR 
have been proposed, such as condition \textbf{S}~\cite{Valmari2017a}, which can 
replace \textbf{L}, or conditions based on \emph{up-sets}~\cite{Schmidt2000}.
Another observation is that model checking of LTL$_{-X}$ properties can be 
reduced to reachability checking by computing the cross-product of a B\"uchi 
automaton and an LSTS~\cite{BaierKatoen-PMC}, in the process resolving the 
inconsistent labelling problem.
Peled~\cite{Peled1996} shows how this approach can be combined with POR, but 
please note the correctness issues detailed in~\cite{Siegel2019}.

\subsection{Deterministic LSTSs and CTL$_{-X}$ Model Checking}
\label{sec:ctl-x}
In this section, we consider the inconsistent labelling problem in the setting 
of CTL$_{-X}$ model checking.
When applying stubborn sets in that context, stronger conditions are required 
to preserve the branching structure that CTL$_{-X}$ reasons about.
Namely, the original LSTS must be deterministic and one more condition needs to 
be added~\cite{Gerth1999}:
\begin{description}[leftmargin=!,labelwidth=\widthof{\bfseries C4}]
	\item[\textbf{C4}] Either $\redf(s) = \Act$ or $\redf(s) \cap \enabled(s) = 
	\{\act\}$ for some $\act \in \Act$.
\end{description}
We slightly changed its original formulation to match the setting of stubborn 
sets.
A weaker condition, called \textbf{\"A8}, which does not require determinism of 
the whole LSTS is proposed in~\cite{Valmari1997}.
With \textbf{C4}, strong and weak stubborn sets collapse, as shown by the 
following lemma.

\begin{lem}
	\label{lmm:c4_implies_d0_d2}
	Conditions \textbf{D2w} and \textbf{C4} together imply \textbf{D0} and 
	\textbf{D2}.
\end{lem}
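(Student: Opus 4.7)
The plan is to verify \textbf{D0} directly from \textbf{D2w} (using that every key action is enabled), and then establish \textbf{D2} by a case split driven by the disjunction in \textbf{C4}.

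First I would dispatch \textbf{D0}. Suppose $\enabled(s) \neq \emptyset$. By \textbf{D2w}, there is a key action $\keyact$ of $\redf(s)$ in $s$. Setting $n = 0$ in the definition of a key action gives $s \transition{\keyact}$, so $\keyact \in \enabled(s)$. Combined with $\keyact \in \redf(s)$, this yields $\redf(s) \cap \enabled(s) \neq \emptyset$.

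Next, for \textbf{D2}, let $a$ be an arbitrary enabled action in $\redf(s)$; the goal is to show $a$ is a key action of $\redf(s)$ in $s$. I would split on \textbf{C4}. If $\redf(s) = \Act$, then any path $s \transition{a_1 \ldots a_n} s'$ with all $a_i \notin \redf(s)$ must have $n = 0$, so $s = s'$, and we need only check $s \transition{a}$, which holds since $a \in \enabled(s)$. Hence $a$ is trivially a key action. Otherwise $\redf(s) \cap \enabled(s) = \{a'\}$ for some $a'$. Because $\{a'\}$ is non-empty, $\enabled(s) \neq \emptyset$, so \textbf{D2w} supplies a key action $\keyact \in \redf(s)$. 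Keys are enabled, so $\keyact \in \redf(s) \cap \enabled(s) = \{a'\}$, forcing $\keyact = a' = a$; thus $a$ is a key action.

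No significant obstacle is anticipated: the argument is a short case analysis, and the only conceptual observation needed is that the definition of key action builds in enabledness (via the $n=0$ instance), which lets \textbf{D2w} feed both conclusions. The slight subtlety is remembering that in the first branch of \textbf{C4}, the key-action condition becomes vacuous for all nonempty action sequences, so no further appeal to \textbf{D2w} is required for that case.
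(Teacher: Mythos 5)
Your proof is correct and follows essentially the same route as the paper's: the case split on \textbf{C4}, the vacuity of the key-action condition when $\redf(s) = \Act$, and the identification of the unique enabled action with the key action supplied by \textbf{D2w}. The only cosmetic difference is that you obtain \textbf{D0} from \textbf{D2w} (via enabledness of key actions), whereas the paper notes it follows trivially from \textbf{C4} alone; both are valid one-liners.
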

\begin{proof}
	Let \TS be an LSTS, $s$ a state and $\redf$ a reduction function that 
	satisfies \textbf{D2w} and \textbf{C4}.
	Condition \textbf{D0} is trivially implied by \textbf{C4}.
	Using \textbf{C4}, we distinguish two cases: either $\redf(s)$ contains 
	precisely one enabled action $\act$, or $\redf(s) = \Act$.
	In the former case, this single action $\act$ must be a key action, 
	according to \textbf{D2w}.
	Hence, \textbf{D2}, which requires that all enabled actions in $\redf(s)$ 
	are key actions, is satisfied.
	Otherwise, if $\redf(s) = \Act$, we consider an arbitrary action $\act$ 
	that satisfies \textbf{D2}'s precondition that $s \transition{\act}$.
	Given a path $s \transition{\act_1 \dots \act_n}$, the condition that 
	$\act_1 \notin \redf(s), \dots, \act_n \notin \redf(s)$ only holds if $n = 
	0$.
	We conclude that \textbf{D2}'s condition $s \transition{\act_1 \dots \act_n 
	\act}$ is satisfied by the assumption $s \transition{\act}$.
\end{proof}

It follows from Lemmata~\ref{lmm:det_lsts_d1'_implied} and 
\ref{lmm:c4_implies_d0_d2} and Theorems~\ref{thm:dl_nostut} 
and~\ref{thm:inf_nostut} that 
CTL$_{-X}$ model checking of deterministic systems with stubborn sets does not 
suffer from the inconsistent labelling problem.
The same holds for condition \textbf{\"A8}, as already shown 
in~\cite{Valmari1997}.

\section{Petri Nets}
\label{sec:petri_nets}
In this section, we discuss the impact of the inconsistent labelling problem
on Petri nets.
Contrary to Section~\ref{section:PN}, here we assume the LSTS of a Petri 
net has the set of all markings $\Markings$ as its set of states.
This does not affect the correctness of POR, as long as the set of reachable 
states $\Mreach$ is finite.
As before, we assume that the LSTS contains some labelling function $L: 
\Markings \to 2^\AP$.
More details on how a labelling function arises from a Petri net are given 
below.
Like in the Petri net examples we saw earlier, markings and structural 
transitions take over the role of states and actions respectively.
Note that the LSTS of a Petri net is deterministic.
We want to stress that all the theory in this section is specific for the 
semantics defined in Section~\ref{section:PN}.

\begin{exa}
	\label{ex:petri_net}
	Consider the Petri net with initial marking $\init{m}$ on left of 
	Figure~\ref{fig:petri_net_example}.
	Here, all arcs are weighted 1, except for the arc from $p_5$ to $\tr_3$, 
	which is weighted 2.
	Its LSTS is infinite, but the substructure reachable from $\init{m}$ is 
	depicted on the right.
	The number of tokens in each of the places $p_1,\dots,p_6$ is inscribed in 
	the nodes, the state labels (if any) are written beside the nodes.

	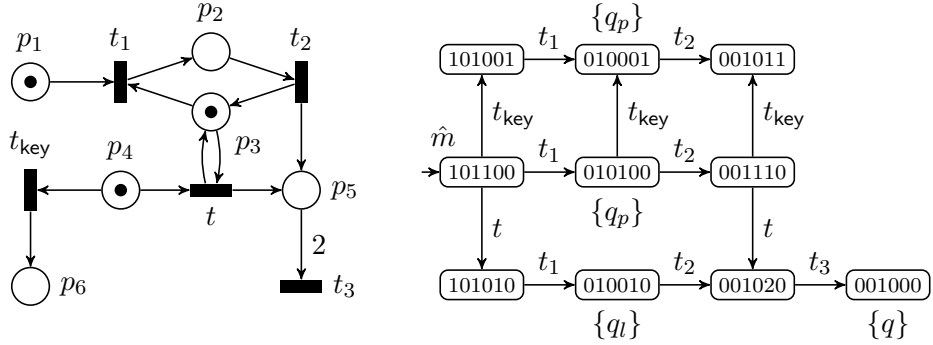
\begin{figure}
		\centering
		\begin{tikzpicture}[->,>=stealth',shorten >=0pt,auto,node 
		distance=2.0cm,semithick,
		every place/.style={draw,minimum size=5mm}]
		
		\begin{scope}[scale=0.8]
		\tikzstyle{vtransition} = [fill,inner sep=0pt,minimum 
		width=1.6mm,minimum height=5.5mm]
		\tikzstyle{htransition} = [fill,inner sep=0pt,minimum 
		width=5.5mm,minimum height=1.6mm]
		
		\def\x{1.5}
		\def\y{2}
		\node[place,label={above:$p_1$},tokens=1]       (p1) at (0,\y) {};
		\node[place,label={above:$p_2$}]                (p2) at (2*\x,1.25*\y) 
		{};
		\node[place,label={below right:$p_3$},tokens=1] (p3) at (2*\x,0.75*\y) 
		{};
		\node[place,label={above:$p_4$},tokens=1]       (p4) at (\x,0.1*\y) {};
		\node[place,label={right:$p_5$}]                (p5) at (3*\x,0.1*\y) 
		{};
		\node[place,label={right:$p_6$}]                (p6) at (0,-0.7*\y) {};
		
		\node[vtransition,label={above:$\tr_1$}]  (t1) at (\x,\y) {};
		\node[vtransition,label={above:$\tr_2$}]  (t2) at (3*\x,\y) {};
		\node[htransition,label={below:$\tr$}]    (t)  at (2*\x,0.1*\y) {};
		\node[htransition,label={right:$\tr_3$}]  (t3) at (3*\x,-0.7*\y) {};
		\node[vtransition,label={above:$\keytr$}] (tk) at (0,0.1*\y) {};
		
		\path
		(p1) edge (t1) (p3) edge (t1) (t1) edge (p2)
		(p2) edge (t2) (t2) edge (p3)
		(p3) edge[bend left=12] (t.50)  (p4) edge (t)  (t.130)  edge[bend 
		left=12] (p3)
		(t) edge (p5) (t2) edge (p5) (p5) edge node {$2$} (t3)
		(p4) edge (tk) (tk) edge (p6);
		\end{scope}
		
		\begin{scope}[xshift=6.0cm,yshift=-1.1cm]
		\tikzstyle{state}=[draw,inner sep=3pt,rectangle,rounded 
		corners=3pt,node font=\scriptsize]
		\def\x{1.8}
		\def\y{1.5}
		
		\node[state,label={above left:$\init{m}$}] (1)  at (0,\y)      {101100};
		\node[state,label={below:$\{q_p\}$}]       (2)  at (\x,\y)     {010100};
		\node[state]                               (3)  at (2*\x,\y)   {001110};
		\node[state]                               (4)  at (0,0)       {101010};
		\node[state,label={below:$\{q_l\}$}]       (5)  at (\x,0)      {010010};
		\node[state]                               (6)  at (2*\x,0)    {001020};
		\node[state]                               (7)  at (0,2*\y)    {101001};
		\node[state,label={above:$\{q_p\}$}]       (8)  at (\x,2*\y)   {010001};
		\node[state]                               (9)  at (2*\x,2*\y) {001011};
		\node[state,label={below:$\{q\}$}]         (10) at (3*\x,0)    {001000};
		\path (-0.8,\y) edge (1)
		(1)  edge    node {$\tr$}   (4)
		(4)  edge    node {$\tr_1$} (5)
		(5)  edge    node {$\tr_2$} (6)
		(1)  edge    node {$\tr_1$} (2)
		(2)  edge    node {$\tr_2$} (3)
		(3)  edge    node {$\tr$}   (6)
		(1)  edge['] node {$\keytr$} (7)
		(7)  edge    node {$\tr_1$} (8)
		(8)  edge    node {$\tr_2$} (9)
		(3)  edge['] node {$\keytr$} (9)
		(2)  edge['] node {$\keytr$} (8)
		(6)  edge    node {$\tr_3$} (10)
		;
		\end{scope}
		\end{tikzpicture}
		\caption{Example of a Petri net whose LSTS suffers from the 
		inconsistent labelling problem.}
		\label{fig:petri_net_example}
	\end{figure}

	The LSTS practically coincides with the counter-example of 
	Section~\ref{sec:counter_example}.
	Only the self-loops are missing and the state labelling, with atomic 
	propositions $q$, $q_p$ and $q_l$, differs slightly; the latter will be 
	explained later.
	For now, note that $\tr$ and $\keytr$ are invisible and that the trace 
	$\emptyset \{q_p\} \emptyset \emptyset \{q\}$, which occurs when firing 
	transitions $\tr_1 \tr_2 \tr \tr_3$ from $\init{m}$, can be lost when 
	reducing with weak stubborn sets.
	\qed
\end{exa}

In the remainder of this section, we fix a Petri net $(P, T, W, \init{m})$ and 
its LSTS $(\Markings, \edgerel, \init{m}, L)$.
Below, we consider three different types of atomic propositions.
Firstly, polynomial propositions~\cite{Bonneland2019} are of the shape 
$f(p_1,\dots,p_n) \bowtie k$ where $f$ is a polynomial over $p_1,\dots,p_n$, 
$\bowtie\, \in \{<,\leq,>,\geq,=,\neq\}$ and $k \in \mathbb{Z}$.
Such a proposition holds in a marking $m$ iff $f(m(p_1),\dots,m(p_n)) \bowtie 
k$.
A linear proposition~\cite{Liebke2019} is similar, but the function $f$ over 
places must be linear and $f(0,\dots,0) = 0$, \ie, linear propositions are of 
the shape $k_1 p_1 + \dots + k_n p_n \bowtie k$, where $k_1,\dots,k_n,k \in 
\mathbb{Z}$.
Finally, we have arbitrary propositions~\cite{Varpaaniemi2005}, whose shape is 
not restricted and which can hold in any given set of markings.

Several other types of atomic propositions can be encoded as polynomial 
propositions.
For example, $\mathit{fireable}(\tr)$~\cite{Bonneland2019,Liebke2019}, which 
holds in a marking $m$ iff $\tr$ is enabled in $m$, can be encoded as 
$\prod_{p\in P} \prod_{i = 0}^{W(p,t)-1} (p - i) \geq 1$.
The proposition $\mathit{deadlock}$, which holds in markings where no 
structural transition is enabled, does not require special treatment in the 
context of POR, since it is already preserved by \textbf{D1} and \textbf{D2w}.
The sets containing all linear and polynomial propositions are henceforward 
called $\AP_l$ and $\AP_p$, respectively.
The corresponding labelling functions are defined as $L_l(m) = L(m) \cap \AP_l$ 
and $L_p(m) = L(m) \cap \AP_p$ for all markings $m$.
Below, the two stutter equivalences $\stuteq_{L_l}$ and $\stuteq_{L_p}$ that 
follow from the new labelling functions are abbreviated $\stuteq_l$ and 
$\stuteq_p$, respectively.
Note that $\AP \supseteq \AP_p \supseteq \AP_l$ and $\mathord{\stuteq} \subseteq
\mathord{\stuteq_p} \subseteq \mathord{\stuteq_l}$.

For the purpose of introducing several variants of invisibility, we reformulate 
and generalise the definition of invisibility from 
Section~\ref{sec:preliminaries}.
Given an atomic proposition $q \in \AP$, a relation $\strel \subseteq \Markings 
\times \Markings$ is \emph{$q$-invisible} if and only if $(m, m') \in \strel$ 
implies $q \in L(m) \Leftrightarrow q \in L(m')$.
We consider a structural transition $\tr$ $q$-invisible iff its corresponding 
relation $\{(m,m') \mid m \transition{\tr} m' \}$ is $q$-invisible.
Invisibility is also lifted to sets of atomic propositions: given a set $\AP' 
\subseteq \AP$, relation $\strel$ is \emph{$\AP'$-invisible} iff it is 
$q$-invisible for all $q \in \AP'$.
If $\strel$ is $\AP$-invisible, we plainly say that $\strel$ is 
\emph{invisible}.
$\AP'$-invisibility and invisibility carry over to structural transitions.
We sometimes refer to invisibility as \emph{ordinary invisibility} for emphasis.
Note that the set of invisible structural transitions $\Inv$ is no longer an 
under-approximation, but contains exactly those structural transitions $\tr$ 
for which $m \transition{\tr} m'$ implies $L(m) = L(m')$ (cf. 
Section~\ref{sec:preliminaries}).

We are now ready to introduce three orthogonal variations on invisibility.

\begin{defi}
	Let $\strel \subseteq \Markings \times \Markings$ be a relation on markings.
	Then,
	\begin{itemize}
		\item $\strel$ is \emph{reach $q$-invisible}~\cite{Valmari2017a} iff 
		$\strel \cap (\Mreach \times \Mreach)$ is $q$-invisible; and
		\item $\strel$ is \emph{value $q$-invisible} iff
		\begin{itemize}
			\item $q = (f(p_1,\dots,p_n) \bowtie k)$ is polynomial and for all 
			pairs of markings $(m,m') \in \strel$, we have that
			$f(m(p_1),\dots,m(p_n)) = f(m'(p_1),\dots,m'(p_n))$; or
			\item $q$ is not polynomial and $\strel$ is $q$-invisible.
		\end{itemize}
	\end{itemize}
\end{defi}

Intuitively, under reach $q$-invisibility, all pairs of reachable markings 
$(m,m') \in \strel$ have to agree on the labelling of $q$.
For value invisibility, the value of the polynomial $f$ must never change 
between two markings $(m,m') \in \strel$.
Reach and value invisibility are lifted to structural transitions and sets of 
atomic propositions as before, \ie, by taking $\strel = \{ (m,m') \mid m 
\transition{\tr} m' \}$ when considering invisibility of $\tr$.

\begin{defi}
	A structural transition $\tr$ is \emph{strongly $q$-invisible} iff the set 
	$\{ (m,m') \mid \forall p \in P: m'(p) = m(p) + W(t,p) - W(p,t) \}$ is 
	$q$-invisible.
\end{defi}

Strong invisibility does not take the presence of a transition $m 
\transition{\tr} m'$ into account, and purely reasons about the effects of 
$\tr$.
Value invisibility and strong invisibility are new in the current work, 
although strong invisibility was inspired by the notion of invisibility that is 
proposed by Varpaaniemi in~\cite{Varpaaniemi2005}.
Our definition of strong invisibility weakens the conditions of Varpaaniemi.

\begin{figure} 
	\centering
	\begin{tikzpicture}[->,>=stealth',inner 
	sep=2pt,x={(1cm,0.85cm)},y=0.8cm,z={(-1cm,0.85cm)}]
	\node (is)   at (0,1,1) {$\Inv_s$};
	\node (iv)   at (1,1,0) {$\Inv_v$};
	\node (i)    at (0,1,0) {$\Inv$};
	\node (irs)  at (0,0,1) {$\Inv^r_s$};
	\node (irv)  at (1,0,0) {$\Inv^r_v$};
	\node (ir)   at (0,0,0) {$\Inv^r$};
	\node (irsv) at (1,0,1) {$\Inv^r_{sv}$};
	\node (isv)  at (1,1,1) {$\Inv_{sv}$};
	\path
	(is)  edge (i)
	(iv)  edge (i)
	(irs) edge (ir)
	(irv) edge (ir)
	(is)  edge (irs)
	(iv)  edge (irv)
	(i)   edge (ir)
	(isv) edge (irsv)
	(isv) edge (is)
	(isv) edge (iv)
	(irsv) edge (irs)
	(irsv) edge (irv)
	;
	\end{tikzpicture}
	\caption{Lattice of sets of invisible actions. Arrows represent a subset 
	relation.}
	\label{fig:invisibility_lattice}
\end{figure}
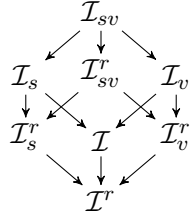%

We indicate the sets of all value, reach and strongly invisible structural 
transitions with $\Inv_v$, $\Inv^r$ and $\Inv_s$ respectively.
Since $\Inv_v \subseteq \Inv$, $\Inv_s \subseteq \Inv$ and $\Inv \subseteq 
\Inv^r$, the set of all their possible combinations forms the lattice shown in 
Figure~\ref{fig:invisibility_lattice}.
In the remainder, the weak equivalence relations that follow from each of the 
eight invisibility notions are abbreviated, \eg, $\weakeq_{\Inv^r_{sv}}$ 
becomes $\weakeq^r_{sv}$.

\begin{exa}
	\label{ex:petri_net_AP}
	Consider again the Petri net and LSTS from Example~\ref{ex:petri_net}.
	We can define $q_l$ and $q_p$ as linear and polynomial propositions, 
	respectively:
	\begin{itemize}
		\item $q_l := p_3 + p_4 + p_6 = 0$ is a linear proposition, which holds 
		when neither $p_3$, $p_4$ nor $p_6$ contains a token.
		Structural transition $\tr$ is $q_l$-invisible, because $m 
		\transition{\tr} m'$ implies that $m(p_3) = m'(p_3) \geq 1$, and thus 
		neither $m$ nor $m$ is labelled with $q_l$.
		On the other hand, $\tr$ is not value $q_l$-invisible (by the 
		transition $101100 \transition{\tr} 101010$) or strongly reach 
		$q_l$-invisible (by $010100$ and $010010$).
		However, $\keytr$ is strongly value $q_l$-invisible: it moves a token 
		from $p_4$ to $p_6$ and hence never changes the value of $p_3 + p_4 + 
		p_6$.
		\item $q_p := (1 - p_3)(1 - p_5) = 1$ is a polynomial proposition, 
		which holds in all reachable markings $m$ where $m(p_3) = m(p_5) = 0$ 
		or $m(p_3) = m(p_5) = 2$.
		Structural transition $\tr$ is reach value $q_p$-invisible, but not 
		$q_p$-invisible (by $002120 \transition{\tr} 002030$) or strongly reach 
		$q_p$ invisible.
		Strong value $q_p$-invisibility of $\keytr$ follows immediately from 
		the fact that the adjacent places of $\keytr$, \viz $p_4$ and $p_6$, do 
		not occur in the definition of $q_p$.
	\end{itemize}
	This yields the state labelling which is shown in 
	Example~\ref{ex:petri_net}.
	\qed
\end{exa}

Given a weak equivalence relation $R_\weakeq$ and a stutter equivalence 
relation $R_\stuteqsym$, we write $R_\weakeq \conslab R_\stuteqsym$ to indicate 
that $R_\weakeq$ and $R_\stuteqsym$ yield consistent labelling 
(Definition~\ref{def:consistent_labelling}).
We spend the rest of this section investigating under which notions of 
invisibility and propositions from the literature, the LSTS of a Petri net is 
labelled consistently.
More formally, we check for each weak equivalence relation $R_\weakeq$ and each 
stutter equivalence relation $R_\stuteqsym$ whether $R_\weakeq \conslab 
R_\stuteqsym$.
This tells us when existing stubborn set theory can be applied without problems.
The two lattices containing all weak and stuttering equivalence relations are 
depicted in Figure~\ref{fig:weak_stut_lattices}; each dotted arrow represents a 
consistent labelling result.
Before we continue, we first introduce an auxiliary lemma.
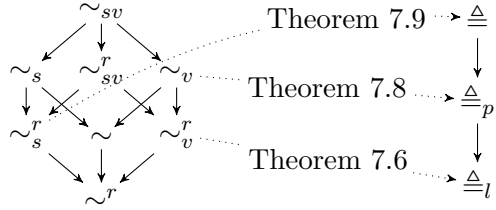
\begin{figure}[t]
	\centering
	\begin{tikzpicture}[->,>=stealth',inner sep=2pt]
	\begin{scope}[name prefix={},x={(1cm,0.85cm)},y=0.8cm,z={(-1cm,0.85cm)}]
		\node (ws)   at (0,1,1) {$\weakeq_s$};
		\node (wv)   at (1,1,0) {$\weakeq_v$};
		\node (w)    at (0,1,0) {$\weakeq$};
		\node (wrs)  at (0,0,1) {$\weakeq^r_s$};
		\node (wrv)  at (1,0,0) {$\weakeq^r_v$};
		\node (wr)   at (0,0,0) {$\weakeq^r$};
		\node (wrsv) at (1,0,1) {$\weakeq^r_{sv}$};
		\node (wsv)  at (1,1,1) {$\weakeq_{sv}$};
		\path
		(ws)  edge (w)
		(wv)  edge (w)
		(wrs) edge (wr)
		(wrv) edge (wr)
		(ws)  edge (wrs)
		(wv)  edge (wrv)
		(w)   edge (wr)
		(wsv) edge (wrsv)
		(wsv) edge (ws)
		(wsv) edge (wv)
		(wrsv) edge (wrs)
		(wrsv) edge (wrv)
		;
	\end{scope}
	\begin{scope}[xshift=5cm,yshift=0.2cm,name prefix={}]
		\def\d{1.1}
		\node (s)  at (0,2*\d) {$\stuteq$};
		\node (sp) at (0,\d) {$\stuteq_p$};
		\node (sl) at (0,0) {$\stuteq_l$};
		\path
		(s) edge (sp)
		(sp) edge (sl)
		;
	\end{scope}
	\path[dotted]
	(wrs)  edge[bend left=17] node[near end,fill=white] 
	{Theorem~\ref{thm:petri_net_labelled_consistently_arbitrary_ap}} (s)
	(wv) edge node[fill=white] 
	{Theorem~\ref{thm:petri_net_labelled_consistently_polynomial_ap}} (sp)
	(wrv) edge node[fill=white] 
	{Theorem~\ref{thm:petri_net_labelled_consistently_linear_ap}} (sl)
	;
	\end{tikzpicture}
	\caption{Two lattices containing variations of weak equivalence and stutter 
	equivalence, respectively.
	Solid arrows indicate a subset relation inside the lattice; dotted arrows 
	follow from the indicated theorems and show when the LSTS of a Petri net is 
	labelled consistently.
	}
	\label{fig:weak_stut_lattices}
\end{figure}

\begin{lem}
	\label{lmm:meta_proof_pn_consistent_labelling}
	Let $I$ be a set of invisible structural transitions and $L$ some labelling 
	function.
	If for all $\tr \in I$ and paths $\pi = m_0 \transition{\tr_1} m_1 
	\transition{\tr_2} \dots$ and $\pi' = m_0 \transition{\tr} m'_0 
	\transition{\tr_1} m'_1 \transition{\tr_2} \dots$, it holds that $\pi 
	\stuteq_L \pi'$, then $\mathord{\weakeq_I} \conslab \mathord{\stuteq_L}$.
\end{lem}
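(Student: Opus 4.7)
The plan is to reduce the global consistent-labelling conclusion to a local ``insert-an-invisible-anywhere'' principle, and then to transform $\pi$ into $\pi'$ through a chain of stutter-equivalent paths in which each link inserts or removes a single transition from $I$.

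First I would establish a strengthened form of the hypothesis: for any valid path $\sigma = m_0 \transition{\tr_1} m_1 \transition{\tr_2} \dots$ and any $\tr \in I$ whose insertion at some position $k$ yields a valid path $\sigma'$, one has $\sigma \stuteq_L \sigma'$. The argument decomposes $\sigma$ at state $m_k$ into a prefix $\sigma_\mathit{pre} = m_0 \transition{\tr_1} \dots \transition{\tr_k} m_k$ and a suffix $\sigma_\mathit{suf} = m_k \transition{\tr_{k+1}} \dots$. The hypothesis applied at $m_k$ to $\sigma_\mathit{suf}$ and its $\tr$-prepended variant yields stutter equivalence of the two suffixes; gluing them to the common prefix $\sigma_\mathit{pre}$ at the joining state $m_k$ preserves stutter equivalence, because the prefix contributes the same label sequence on both sides and stuttering is controlled only by what happens at the join.

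Next, for two complete initial paths $\pi$ and $\pi'$ with $\vis_I(\pi) = \vis_I(\pi')$, I would decompose them as $\pi = u_0 v_1 u_1 v_2 u_2 \cdots$ and $\pi' = u'_0 v_1 u'_1 v_2 u'_2 \cdots$, where $v_1 v_2 \cdots$ is the common visible projection and each $u_i, u'_i$ is a (possibly empty or infinite) block of transitions from $I$. The plan is to construct a sequence of valid paths $\pi = \pi^{(0)}, \pi^{(1)}, \pi^{(2)}, \ldots$ where $\pi^{(k+1)}$ arises from $\pi^{(k)}$ by inserting or removing one transition of $I$ at some interior position, with the block structure being gradually normalised to that of $\pi'$. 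Each transition in this chain preserves stutter equivalence by the extended insertion principle, and so $\pi \stuteq_L \pi'$.

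The main obstacle is justifying each single step of the transformation: because the invisible blocks $u_i$ and $u'_i$ can lead the two paths to genuinely different intermediate markings before the next common visible transition $v_{i+1}$, one cannot in general swap an entire block at once while keeping the intermediate path valid. The stepwise refinement must therefore be carried out one transition at a time, each time checking that the modified path remains fireable; here determinism of the Petri-net LSTS and the fact that both $\pi$ and $\pi'$ are themselves valid are used repeatedly. For complete \emph{infinite} paths, the transformation chain may be countably infinite, so one closes the argument with a prefix-based observation: each modification affects only a bounded prefix of the no-stutter trace, so for every $n$ the $n$-th no-stutter symbols of $\pi$ and $\pi'$ coincide, hence $\nostut(\pi) = \nostut(\pi')$ and $\pi \stuteq_L \pi'$ as required.
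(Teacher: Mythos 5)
Your proposal takes essentially the same route as the paper's proof: the paper likewise extends the front-insertion hypothesis to insertion after an arbitrary common prefix by gluing, performs an induction on the combined number of invisible transitions in the two paths (one insertion/removal of a transition from $I$ per step, with determinism handling the fully-visible base case), and closes the infinite case by a finite-prefix argument. You even flag explicitly the fireability of the intermediate paths in the transformation chain, a point the paper's induction leaves implicit rather than resolving in more detail.
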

\begin{proof}
	We assume that the following holds for all paths and $\tr \in I$:
	\begin{equation}
		\tag{$\dagger$}\label{eq:pn_introduce_invis_front}
		m_0 \transition{\tr_1} m_1 \transition{\tr_2} \dots \stuteq_L m_0 
		\transition{\tr} m'_0 \transition{\tr_1} m'_1 \transition{\tr_2} \dots
	\end{equation}
	To prove $\mathord{\weakeq_I} \conslab \mathord{\stuteq_L}$, we need to 
	consider two initial paths $\pi$ and $\pi'$ such that $\pi \weakeq_I \pi'$ 
	and prove that $\pi \stuteq_L \pi'$ (see 
	Definition~\ref{def:consistent_labelling}).
	The proof proceeds by induction on the combined number of invisible 
	structural transitions (taken from $I$) in $\pi$ and $\pi'$.
	In the base case, $\pi$ and $\pi'$ contain only visible structural 
	transitions, and $\pi \weakeq_I \pi'$ implies $\pi = \pi'$ since Petri nets 
	are deterministic.
	Hence, $\pi \stuteq_L \pi'$.
	
	For the induction step, we take as hypothesis that, for all initial paths 
	$\pi$ and $\pi'$ that together contain at most $k$ invisible structural 
	transitions, $\pi \weakeq_I \pi'$ implies $\pi \stuteq_L \pi'$.
	Let $\pi$ and $\pi'$ be two arbitrary initial paths such that $\pi 
	\weakeq_I \pi'$ and the total number of invisible structural transitions 
	contained in $\pi$ and $\pi'$ is $k$.
	We consider the case where an invisible structural transition is introduced 
	in $\pi'$, the other case is symmetric.
	Let $\pi' = \sigma_1 \sigma_2$ for some $\sigma_1$ and $\sigma_2$.
	Let $\tr \in I$ be some invisible structural transition and $\pi'' = 
	\sigma_1 \tr \sigma'_2$ such that $\sigma_2$ and $\sigma'_2$ contain the 
	same sequence of structural transitions.
	Clearly, we have $\pi' \weakeq_I \pi''$.
	Here, we can apply our original assumption 
	(\ref{eq:pn_introduce_invis_front}), to conclude that $\sigma_2 \stuteq 
	t\sigma'_2$, \ie, the extra stuttering step $\tr$ thus does not affect the 
	labelling of the remainder of $\pi''$.
	Hence, we have $\pi' \stuteq_L \pi''$ and, with the induction hypothesis, 
	$\pi \stuteq_L \pi''$.
	Note that $\pi$ and $\pi''$ together contain $k+1$ invisible structural 
	transitions.
	
	In case $\pi$ and $\pi'$ together contain an infinite number of invisible 
	structural transitions, $\pi \weakeq_I \pi'$ implies $\pi \stuteq_L \pi'$ 
	follows from the fact that the same holds for all finite prefixes of $\pi$ 
	and $\pi'$ that are related by $\weakeq_I$.
\end{proof}

The following theorems each focus on a class of atomic propositions and show 
which notion of invisibility is required for the LSTS of a Petri net to be 
labelled consistently.
In the proofs, we use a function $d_\tr$, defined as $d_\tr(p) = W(\tr,p) - 
W(p,\tr)$ for all places $p$, which indicates how structural transition $\tr$ 
changes the state.
Furthermore, we also consider functions of type $P \to \mathbb{N}$ as vectors 
of type $\mathbb{N}^{\cardinality{P}}$.
This allows us to compute the pairwise addition of a marking $m$ with $d_\tr$ 
($m + d_\tr$) and to indicate that $\tr$ does not change the marking ($d_\tr = 
0$).

\begin{thm}
	\label{thm:petri_net_labelled_consistently_linear_ap}
	Under reach value invisibility, the LSTS underlying a Petri net is labelled 
	consistently for linear propositions, \ie, $\mathord{\weakeq^r_v} \conslab 
	\mathord{\stuteq_l}$.
\end{thm}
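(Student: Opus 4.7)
The plan is to apply the auxiliary Lemma~\ref{lmm:meta_proof_pn_consistent_labelling} with $I = \Inv^r_v$ and $L = L_l$. It then suffices to show that, for every reach value invisible structural transition $\tr$ and every path $\pi = m_0 \transition{\tr_1} m_1 \transition{\tr_2} \dots$, the lifted path $\pi' = m_0 \transition{\tr} m'_0 \transition{\tr_1} m'_1 \transition{\tr_2} \dots$ satisfies $\pi \stuteq_l \pi'$. Here I would also need to note that $\pi'$ actually exists in the LSTS; this follows from the standard Petri net diamond property, since firing $\tr$ only adds $d_\tr$ componentwise and each $\tr_i$ remains enabled in $m_i + d_\tr$ provided it is in $m_i$ (as long as all $d_\tr(p) \geq -\infty$ is compensated; actually, for Petri nets we get this from the fact that the $m_i$'s already enable $\tr_i$, and adding extra tokens via $d_\tr$ cannot disable).

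The backbone of the argument is a simple inductive identity:
\begin{equation*}
m'_i = m_i + d_\tr \quad \text{for every } i \geq 0.
\end{equation*}
This follows by induction on $i$: the base case is the definition of $m'_0$, and the induction step uses $m'_{i+1}(p) = m'_i(p) + d_{\tr_{i+1}}(p) = m_i(p) + d_\tr(p) + d_{\tr_{i+1}}(p) = m_{i+1}(p) + d_\tr(p)$ for every place $p$.

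Next I would exploit linearity. Fix any linear proposition $q = (f(p_{i_1},\ldots,p_{i_n}) \bowtie k)$, so $f$ is linear with $f(0) = 0$. Since $m_0 \in \Mreach$ and $m_0 \transition{\tr} m'_0$ witnesses a pair in $\{(m,m') \mid m \transition{\tr} m'\} \cap (\Mreach \times \Mreach)$, reach value $q$-invisibility of $\tr$ gives $f(m_0) = f(m'_0) = f(m_0 + d_\tr)$. By linearity $f(m_0 + d_\tr) = f(m_0) + f(d_\tr)$, hence $f(d_\tr) = 0$. Note this holds even though we only assumed $m_0$ reachable: once $f(d_\tr) = 0$ is established as a numerical identity on $d_\tr$ itself, it propagates everywhere. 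Indeed, for every $i \geq 0$, $f(m'_i) = f(m_i + d_\tr) = f(m_i) + f(d_\tr) = f(m_i)$, so $q \in L_l(m'_i) \Leftrightarrow q \in L_l(m_i)$. Since $q$ was an arbitrary linear proposition, $L_l(m'_i) = L_l(m_i)$ for all $i \geq 0$.

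Therefore the trace of $\pi'$ under $L_l$ is $L_l(m_0)\,L_l(m'_0)\,L_l(m'_1)\,L_l(m'_2)\,\dots = L_l(m_0)\,L_l(m_0)\,L_l(m_1)\,L_l(m_2)\,\dots$, which differs from the trace of $\pi$ by a single repeated initial label. Hence $\nostut(\pi) = \nostut(\pi')$, i.e.\ $\pi \stuteq_l \pi'$, and Lemma~\ref{lmm:meta_proof_pn_consistent_labelling} yields the conclusion. The main (mild) obstacle is the conceptual step of observing that linearity converts the \emph{reach}-conditional hypothesis ``$f$ is preserved on reachable $\tr$-steps'' into the \emph{unconditional} identity $f(d_\tr) = 0$; without linearity (i.e.\ for general polynomial $f$) this step fails, which is precisely why a stronger invisibility notion is needed in Theorem~\ref{thm:petri_net_labelled_consistently_polynomial_ap}.
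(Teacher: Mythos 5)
Your proof is correct and follows essentially the same route as the paper's: both apply Lemma~\ref{lmm:meta_proof_pn_consistent_labelling} and use linearity to turn reach value invisibility at a single reachable occurrence of $\tr$ into the unconditional identity $f(d_\tr)=0$, which then propagates along the whole path via $m'_i = m_i + d_\tr$. The only wobble is your aside on the existence of $\pi'$: it is unnecessary (the lemma quantifies only over pairs of paths that both exist) and its justification is wrong anyway, since $d_\tr(p) = W(\tr,p) - W(p,\tr)$ can be negative for some places, so firing $\tr$ first does not merely "add extra tokens" and could in principle disable a later $\tr_i$.
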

\begin{proof}
	Let $\tr \in \Inv^r_v$ be a reach value invisible structural transition 
	such that there exist reachable markings $m$ and $m'$ with $m 
	\transition{\tr} m'$.
	If such a $\tr$ does not exist, then $\weakeq^r_v$ is the reflexive 
	relation and $\mathord{\weakeq^r_v} \conslab \mathord{\stuteq_l}$ is 
	trivially satisfied.
	Otherwise, let $q := f(p_1,\dots,p_n) \bowtie k$ be a linear proposition.
	Since $\tr$ is reach value invisible and $f$ is linear, we have $f(m) = 
	f(m') = f(m + d_\tr) = f(m) + f(d_\tr)$ and thus $f(d_\tr) = 0$.
	It follows that, given two paths $\pi = m_0 \transition{\tr_1} m_1 
	\transition{\tr_2} \dots$ and $\pi' = m_0 \transition{\tr} m'_0 
	\transition{\tr_1} m'_1 \transition{\tr_2} \dots$, the addition of $\tr$ 
	does not influence $f$, since $f(m_i) = f(m_i) + f(d_\tr) = f(m_i + d_\tr) 
	= f(m'_i)$ for all $i$.
	As a consequence, $\tr$ also does not influence $q$.
	With Lemma~\ref{lmm:meta_proof_pn_consistent_labelling}, we deduce that 
	$\mathord{\weakeq^r_v} \conslab \mathord{\stuteq_l}$.
\end{proof}

Whereas in the linear case one can easily conclude that $\pi$ and $\pi'$ are 
stutter equivalent under $f$, in the polynomial case, we need to show that $f$ 
is constant under all value invisible structural transitions $\tr$, even in 
markings where $\tr$ is not enabled.
This follows from the following proposition.

\begin{prop}
	\label{prop:value_invis_f_constant}
	Let $f: \mathbb{N}^n \to \mathbb{Z}$ be a polynomial function, $a,b \in 
	\mathbb{N}^n$ two constant vectors and $c = a - b$ the difference between 
	$a$ and $b$.
	Assume that for all $x \in \mathbb{N}^n$ such that $x \geq b$, where $\geq$ 
	denotes pointwise comparison, it holds that $f(x) = f(x + c)$.
	Then, $f$ is constant in the vector $c$, \ie, $f(x) = f(x+c)$ for all $x 
	\in \mathbb{N}^n$.
\end{prop}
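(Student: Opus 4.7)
The plan is to reduce the statement to the classical fact that a nonzero polynomial cannot vanish on an infinite ``box'' in $\mathbb{N}^n$. First I would introduce the polynomial
\[
  g(x) \;=\; f(x+c) - f(x),
\]
which is well-defined as a polynomial in $x_1,\dots,x_n$ because $c$ is a constant vector and $f$ is polynomial. The hypothesis says precisely that $g$ vanishes on the set $B = \{x \in \mathbb{N}^n \mid x \geq b\}$. The goal becomes: show $g$ is the zero polynomial, since then $g(x)=0$ for all $x \in \mathbb{N}^n$, which is exactly the desired conclusion $f(x) = f(x+c)$.

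The key step is a vanishing lemma proved by induction on the number of variables $n$. For $n=1$, $g$ is a univariate polynomial with infinitely many zeros (every integer $\geq b_1$), hence $g \equiv 0$. For the induction step, write
\[
  g(x_1,\dots,x_n) \;=\; \sum_{k=0}^{d} g_k(x_1,\dots,x_{n-1})\, x_n^{k},
\]
where each $g_k$ is a polynomial in $n-1$ variables. Fix any $(x_1,\dots,x_{n-1})$ with $x_i \geq b_i$ for $i<n$. Then $g(x_1,\dots,x_{n-1}, x_n)$ is a univariate polynomial in $x_n$ that vanishes at all integers $x_n \geq b_n$, hence is identically zero, so $g_k(x_1,\dots,x_{n-1})=0$ for every $k$. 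This means each $g_k$ vanishes on the $(n-1)$-dimensional box $\{y \in \mathbb{N}^{n-1} \mid y \geq (b_1,\dots,b_{n-1})\}$, and by the induction hypothesis $g_k \equiv 0$. Therefore $g \equiv 0$.

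The main (and really only) obstacle is making the induction step airtight: one has to be careful that after fixing $x_1,\dots,x_{n-1}$ above their thresholds, the resulting univariate polynomial in $x_n$ really does vanish at infinitely many values, and then separately that the vanishing of each coefficient polynomial $g_k$ on the smaller box lets one invoke the induction hypothesis. Everything else (the polynomiality of $g$, the rewriting $f(x+c) - f(x) = 0 \Rightarrow f(x) = f(x+c)$, extension from $\mathbb{N}^n$ to all of $\mathbb{N}^n$) is routine. Once $g \equiv 0$ is established, the proposition follows immediately.
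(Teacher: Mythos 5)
Your proof is correct, but it takes a different route from the one in the paper. You form the multivariate difference polynomial $g(x) = f(x+c) - f(x)$ and prove, by induction on the number of variables, the general vanishing lemma that a polynomial which is zero on an infinite box $\{x \in \mathbb{N}^n \mid x \geq b\}$ is identically zero; the induction step (fix the first $n-1$ coordinates in the box, conclude the univariate polynomial in $x_n$ is zero, hence all its coefficient polynomials vanish on the smaller box) is sound. The paper instead avoids any multivariate induction by a one-dimensional slicing trick: for each fixed $x \in \mathbb{N}^n$ it considers the single univariate polynomial $g_x(t) = f(x + t\cdot\mathbf{1} + c) - f(x + t\cdot\mathbf{1})$ along the diagonal direction, observes that $x + t\cdot\mathbf{1}$ enters the box for all sufficiently large $t$, concludes $g_x \equiv 0$, and evaluates at $t = 0$. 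Both arguments bottom out in the same fact that a univariate polynomial with infinitely many roots is zero; yours is longer but yields a reusable general lemma, while the paper's is more economical and tailored to exactly the claim needed. One small point worth making explicit in either write-up: since $c = a - b$ may have negative entries, $x + c$ need not lie in $\mathbb{N}^n$, so the conclusion should be read as evaluating the polynomial $f$ at the integer point $x + c$; this affects both proofs equally and is harmless.
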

\begin{proof}
	Let $f$, $a$, $b$ and $c$ be as above and let $\mathbf{1} \in \mathbb{N}^n$ 
	be the vector containing only ones.
	Given some arbitrary $x \in \mathbb{N}^n$, consider the function $g_x(t) = 
	f(x + t\cdot\mathbf{1} + c) - f(x + t\cdot\mathbf{1})$.
	For sufficiently large $t$, it holds that $x + t \cdot \mathbf{1} \geq b$, 
	and it follows that $g_x(t) = 0$ for all sufficiently large $t$.
	This can only be the case if $g_x$ is the zero polynomial, \ie, $g_x(t) = 
	0$ for all $t$.
	As a special case, we conclude that $g_x(0) = f(x + c) - f(x) = 0$.
\end{proof}

The intuition behind this is that $f(x+c) - f(x)$ behaves like the directional 
derivative of $f$ with respect to $c$.
If the derivative is equal to zero in infinitely many $x$, $f$ must be constant 
in the direction of $c$.
We will apply this result in the following theorem.

\begin{thm}
	\label{thm:petri_net_labelled_consistently_polynomial_ap}
	Under value invisibility, the LSTS underlying a Petri net is labelled 
	consistently for polynomial propositions, \ie, $\mathord{\weakeq_v} 
	\conslab \mathord{\stuteq_p}$.
\end{thm}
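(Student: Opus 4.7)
The plan is to invoke Lemma~\ref{lmm:meta_proof_pn_consistent_labelling} with $I := \Inv_v$ and $L := L_p$, which reduces the claim to verifying, for every value invisible structural transition $\tr$ and every pair of paths $\pi = m_0 \transition{\tr_1} m_1 \transition{\tr_2} \dots$ and $\pi' = m_0 \transition{\tr} m'_0 \transition{\tr_1} m'_1 \transition{\tr_2} \dots$ in the net, that $\pi \stuteq_p \pi'$.

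As a preliminary, I would observe that since each firing of $\tr_j$ updates the marking by the fixed offset $d_{\tr_j}$, a one-line induction on $i$ gives $m'_i = m_i + d_\tr$ for every $i$. Hence the trace of $\pi'$ is $L_p(m_0)\,L_p(m_0 + d_\tr)\,L_p(m_1 + d_\tr)\,L_p(m_2 + d_\tr)\dots$, and the whole question collapses to whether $L_p(m_i) = L_p(m_i + d_\tr)$ for all $i$, i.e.\ whether translating a marking by $d_\tr$ preserves the truth value of every polynomial proposition.

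The main obstacle is that value $q$-invisibility of $\tr$ only supplies $f(m) = f(m + d_\tr)$ for those markings $m$ that actually enable $\tr$ (so that a transition $m \transition{\tr} m'$ exists), whereas the $m_i$ appearing along $\pi$ need not enable $\tr$. This is precisely the gap filled by Proposition~\ref{prop:value_invis_f_constant}: fixing a polynomial proposition $q := (f(p_1,\dots,p_n) \bowtie k) \in \AP_p$, I would instantiate the proposition with $a := W(\tr,\cdot)$ and $b := W(\cdot,\tr)$, so that $c := a - b = d_\tr$ and the enabling condition $m \geq W(\cdot,\tr)$ becomes $x \geq b$. The hypothesis of the proposition is then exactly value $q$-invisibility of $\tr$, and its conclusion upgrades the identity $f(x) = f(x + d_\tr)$ from $\{x \in \mathbb{N}^n \mid x \geq b\}$ to all of $\mathbb{N}^n$.

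With this extension in hand, $f(m_i) = f(m_i + d_\tr) = f(m'_i)$ for every $i$, so $m_i$ and $m'_i$ agree on the truth of $q$; ranging over every $q \in \AP_p$ yields $L_p(m_i) = L_p(m'_i)$ for all $i$. In particular $L_p(m_0) = L_p(m'_0)$, so the trace of $\pi'$ only differs from that of $\pi$ by a single repetition of $L_p(m_0)$ at the front, giving the same no-stutter trace and hence $\pi \stuteq_p \pi'$. Lemma~\ref{lmm:meta_proof_pn_consistent_labelling} then delivers $\mathord{\weakeq_v} \conslab \mathord{\stuteq_p}$, as required.
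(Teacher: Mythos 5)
Your proposal is correct and follows essentially the same route as the paper: reduce to the single-insertion case via Lemma~\ref{lmm:meta_proof_pn_consistent_labelling}, then use Proposition~\ref{prop:value_invis_f_constant} to upgrade $f(m)=f(m+d_\tr)$ from markings enabling $\tr$ to all markings, so that $L_p(m_i)=L_p(m'_i)$ for every $i$. Your version merely spells out the instantiation $a=W(\tr,\cdot)$, $b=W(\cdot,\tr)$, $c=d_\tr$ more explicitly than the paper does, which is a welcome clarification rather than a difference in method.
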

\begin{proof}
	Let $\tr \in \Inv_v$ be a value invisible structural transition, $m$ and 
	$m'$ two markings with $m \transition{\tr} m'$, and $q := f(p_1,\dots,p_n) 
	\bowtie k$ a polynomial proposition.
	Note that infinitely many such (not necessarily reachable) markings exist 
	in $\Markings$, so we can apply 
	Proposition~\ref{prop:value_invis_f_constant} to obtain $f(m) = f(m + 
	d_\tr)$ for all markings $m$.
	It follows that, given two paths $\pi = m_0 \transition{\tr_1} m_1 
	\transition{\tr_2} \dots$ and $\pi' = m_0 \transition{\tr} m'_0 
	\transition{\tr_1} m'_1 \transition{\tr_2} \dots$, the addition of $\tr$ 
	does not alter the value of $f$, since $f(m_i) = f(m_i + d_\tr) = f(m'_i)$ 
	for all $i$.
	As a consequence, $\tr$ also does not change the labelling of $q$.
	Application of Lemma~\ref{lmm:meta_proof_pn_consistent_labelling} yields 
	$\mathord{\weakeq_v} \conslab \mathord{\stuteq_p}$.
\end{proof}

Varpaaniemi shows that the LSTS of a Petri net is labelled consistently for 
arbitrary propositions under his notion of invisibility~\cite[Lemma 
9]{Varpaaniemi2005}.
Our notion of strong visibility, and especially strong reach invisibility, is 
weaker than Varpaaniemi's invisibility, so we generalise the result to 
$\mathord{\weakeq^r_s} \conslab \mathord{\stuteq}$.
\begin{thm}
	\label{thm:petri_net_labelled_consistently_arbitrary_ap}
	Under strong reach visibility, the LSTS underlying a Petri net is labelled 
	consistently for arbitrary propositions, \ie, $\mathord{\weakeq^r_s} 
	\conslab \mathord{\stuteq}$.
\end{thm}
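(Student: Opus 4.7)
The plan is to apply Lemma~\ref{lmm:meta_proof_pn_consistent_labelling} with $I := \Inv^r_s$ and the full labelling function $L$. So I must verify the hypothesis: for every $\tr \in \Inv^r_s$ and every pair of paths $\pi = m_0 \transition{\tr_1} m_1 \transition{\tr_2} \dots$ and $\pi' = m_0 \transition{\tr} m'_0 \transition{\tr_1} m'_1 \transition{\tr_2} \dots$ whose common starting marking $m_0$ is reachable, I must show $\pi \stuteq \pi'$. (Inspecting the proof of Lemma~\ref{lmm:meta_proof_pn_consistent_labelling}, the hypothesis $(\dagger)$ is only applied at reachable markings, namely suffix-starting points of initial paths, so we may assume reachability of $m_0$.)

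First I would establish the structural alignment: an easy induction on $i$ shows $m'_i = m_i + d_\tr$ for all $i$. The base case is the definition of $m_0 \transition{\tr} m'_0$, and the step applies $\tr_{i+1}$ to both sides and uses commutativity of vector addition on $P \to \mathbb{N}$.

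Second, since all markings along $\pi$ and $\pi'$ are reachable (because the paths are valid in the LSTS and $m_0$ is reachable), for every $i$ both $m_i$ and $m'_i = m_i + d_\tr$ are reachable. Strong reach $q$-invisibility of $\tr$ (for every $q \in \AP$) then yields $L(m_i) = L(m'_i)$. Consequently the trace of $\pi$ is $L(m_0) L(m_1) L(m_2) \dots$ and the trace of $\pi'$ is $L(m_0) L(m'_0) L(m'_1) L(m'_2) \dots = L(m_0) L(m_0) L(m_1) L(m_2) \dots$, so the only difference is a single extra repetition of $L(m_0)$ at the front, which collapses under $\nostut$. Hence $\pi \stuteq \pi'$.

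Invoking Lemma~\ref{lmm:meta_proof_pn_consistent_labelling} now delivers $\mathord{\weakeq^r_s} \conslab \mathord{\stuteq}$. The main subtlety is not a computational obstacle but an attentional one: strong reach invisibility is weaker than strong invisibility, so one must keep track of reachability when applying the definition. This is why I carefully note that the lemma's hypothesis needs to be checked only for reachable starting markings, which matches exactly what strong reach invisibility provides; no extra hypothesis (such as those appearing in Varpaaniemi's version) is needed.
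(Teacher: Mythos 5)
Your proposal is correct and follows essentially the same route as the paper's proof: establish $m'_i = m_i + d_\tr$, apply strong reach $q$-invisibility to each pair $(m_i, m'_i)$ to get $L(m_i) = L(m'_i)$, and invoke Lemma~\ref{lmm:meta_proof_pn_consistent_labelling}. Your explicit remark that the lemma's hypothesis $(\dagger)$ only needs to hold at reachable starting markings is a welcome clarification that the paper leaves implicit, and it is indeed what makes the \emph{reach} variant of invisibility sufficient here.
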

\begin{proof}
	Let $\tr \in \Inv^r_s$ be a strongly reach invisible structural transition 
	and $\pi = m_0 \transition{\tr_1} m_1 \transition{\tr_2} \dots$ and $\pi' = 
	m_0 \transition{\tr} m'_0 \transition{\tr_1} m'_1 \transition{\tr_2} \dots$ 
	two paths.
	Since, $m'_i = m_i + d_\tr$ for all $i$, it holds that either
	\begin{enumerate*}[label=\textnormal{(\roman*)}]
		\item $d_\tr = 0$ and $m_i = m'_i$ for all $i$; or
		\item each pair $(m_i,m'_i)$ is contained in $\{ (m,m') \mid \forall p 
		\in P: m'(p) = m(p) + W(t,p) -\linebreak[1] W(p,t) \}$, which is the 
		set that underlies strong reach invisibility of $\tr$.
	\end{enumerate*}
	In both cases, $L(m_i) = L(m'_i)$ for all $i$.
	It follows from Lemma~\ref{lmm:meta_proof_pn_consistent_labelling} that 
	$\mathord{\weakeq^r_s} \conslab \mathord{\stuteq}$.
\end{proof}

To show that the results of the above theorems cannot be strengthened, we 
provide two negative results.
\begin{thm}
	\label{thm:petri_net_not_labelled_consistently_weak_inv}
	Under ordinary invisibility, the LSTS underlying a Petri net is not 
	necessarily labelled consistently for arbitrary propositions, \ie, 
	$\mathord{\weakeq} \nconslab \mathord{\stuteq}$.
\end{thm}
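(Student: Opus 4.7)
The plan is to exhibit an explicit counter-example; by the freedom granted by ``arbitrary propositions'', any concrete Petri net paired with a well-chosen labelling will do. A natural choice is to re-use the Petri net of Example~\ref{ex:petri_net} (Figure~\ref{fig:petri_net_example}), whose LSTS was explicitly designed to mirror the counter-example of Section~\ref{sec:counter_example}. The only work that remains is to choose a labelling and an invisibility set that expose a pair of weakly equivalent complete initial paths that fail to be stutter equivalent.

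The construction has three ingredients. First, I pick a single atomic proposition $q$ with minimal support: let $q$ hold in exactly the marking $m_q = 010100$, which is the target of $\tr_1$ fired from $\hat m$. Second, I declare $\Inv = \{\tr\}$ and verify that $\tr$ is ordinarily $q$-invisible: firing $\tr$ requires $p_3 \geq 1$ while $m_q(p_3) = 0$, and $\tr$ preserves the value of $p_3$, so every source and every target of a $\tr$-firing has $p_3 \geq 1$ and therefore differs from $m_q$; consequently $\tr$ never changes the truth value of $q$ over any pair of markings in $\Markings$, not only over $\Mreach$. Third, I exhibit the two complete initial paths
\begin{align*}
	\pi_1 &= \hat m \transition{\tr_1} 010100 \transition{\tr_2} 001110 \transition{\tr} 001020 \transition{\tr_3} 001000, \\
	\pi_2 &= \hat m \transition{\tr} 101010 \transition{\tr_1} 010010 \transition{\tr_2} 001020 \transition{\tr_3} 001000,
\end{align*}
both of which end in the deadlock $001000$. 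Their visible projections agree (both equal $\tr_1 \tr_2 \tr_3$), so $\pi_1 \weakeq \pi_2$, while the state labels along them give $\nostut(\pi_1) = \emptyset\,\{q\}\,\emptyset$ and $\nostut(\pi_2) = \emptyset$, whence $\pi_1 \not\stuteq \pi_2$ and thus $\mathord{\weakeq} \nconslab \mathord{\stuteq}$.

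No serious obstacle arises; the argument is essentially the Petri-net realisation of the counter-example of Section~\ref{sec:counter_example}. The only subtle point is the narrow-support choice of $q$: were we to replace $q$ by a polynomial or linear proposition with the same support at $m_q$, Theorems~\ref{thm:petri_net_labelled_consistently_polynomial_ap} and~\ref{thm:petri_net_labelled_consistently_linear_ap} would apply and force consistent labelling, precluding the desired counter-example. Using an \emph{arbitrary} $q$ sidesteps that restriction and is exactly what makes $\tr$ ordinarily (and not merely reach) invisible here.
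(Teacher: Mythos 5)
Your construction is correct and follows essentially the same route as the paper: the paper's proof likewise uses the Petri net of Example~\ref{ex:petri_net} and contrasts the firings $\tr_1\tr_2\tr\tr_3$ and $\tr\tr_1\tr_2\tr_3$, the only difference being that it takes the proposition $q_l := p_3+p_4+p_6 = 0$ (read as an arbitrary proposition) where you take a singleton-support $q$; both choices make $\tr$ ordinarily invisible (via the $p_3$ self-loop argument) while the reordering changes the no-stutter trace. One caveat on your closing remark: replacing $q$ by a linear or polynomial proposition would \emph{not} by itself restore consistent labelling, because Theorems~\ref{thm:petri_net_labelled_consistently_polynomial_ap} and~\ref{thm:petri_net_labelled_consistently_linear_ap} additionally require (reach) \emph{value} invisibility of the transitions in $\Inv$, not mere ordinary invisibility --- indeed the paper's own proof of this very theorem uses the linear proposition $q_l$, for which $\tr$ is ordinarily but not value invisible.
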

\begin{proof}
	Consider the Petri net from Example~\ref{ex:petri_net} with the arbitrary 
	proposition $q_l$.
	Disregard $q_p$ for the moment.
	Structural transition $\tr$ is $q_l$-invisible, hence the paths 
	corresponding to $\tr_1 \tr_2 \tr \tr_3$ and $\tr \tr_1 \tr_2 \tr_3$ are 
	weakly equivalent under ordinary invisibility.
	However, they are not stutter equivalent.
\end{proof}
\begin{thm}
	\label{thm:petri_net_not_labelled_consistently_reach_value_inv}
	Under reach value invisibility, the LSTS underlying a Petri net is not 
	necessarily labelled consistently for polynomial propositions, \ie, 
	$\mathord{\weakeq^r_v} \nconslab \mathord{\stuteq_p}$.
\end{thm}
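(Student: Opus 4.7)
The strategy is to exhibit a counter-example by reusing the Petri net of Example~\ref{ex:petri_net} together with the polynomial proposition $q_p$ of Example~\ref{ex:petri_net_AP}, where $\tr$ was already shown to be reach value $q_p$-invisible but not value $q_p$-invisible. I would first observe that $\tr_1$, $\tr_2$ and $\tr_3$ are \emph{not} reach value $q_p$-invisible---for instance, firing $\tr_1$ at the reachable marking $\init{m}$ changes $f=(1-p_3)(1-p_5)$ from $0$ to $1$---so that under $\weakeq^r_v$ only $\tr$ is treated as invisible, while $\tr_1$, $\tr_2$ and $\tr_3$ are visible.

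Next I would consider the two complete initial paths obtained from $\init{m} = 101100$ by firing $\tr_1\tr_2\tr\tr_3$ and $\tr\tr_1\tr_2\tr_3$ respectively. Both end in the deadlock $001000$, and the projections of both onto the visible actions equal $\tr_1\tr_2\tr_3$, so the two paths are weakly equivalent under $\weakeq^r_v$. I would then tabulate the $L_p$-labels along each path: the first path passes through $010100$, which is the unique marking it visits where $q_p$ holds, whereas the second passes instead through $101010$ and $010010$, at neither of which $q_p$ holds. This produces the no-stutter trace $\emptyset\,\{q_p\}\,\emptyset$ under $L_p$ for the first path and $\emptyset$ for the second, so the two paths are not $\stuteq_p$-equivalent.

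The main obstacle, beyond the routine case analysis of markings and labels, is ensuring that the chosen paths are genuinely complete in the LSTS sense: one must check that $\tr\tr_1\tr_2\tr_3$ is indeed enabled in sequence from $\init{m}$ (so that $\tr_1$, $\tr_2$ and $\tr_3$ remain fireable after firing $\tr$ first) and that the common final marking $001000$ is a deadlock. Both are immediate from the arc structure. Conceptually, the example works precisely because Proposition~\ref{prop:value_invis_f_constant} cannot be invoked under reach value invisibility: the analogous equality $f(m) = f(m + d_\tr)$ is forced only at reachable markings at which $\tr$ is enabled, and the marking $010100$ exhibits exactly the gap left by this weaker hypothesis. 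Combined, this yields two complete initial paths in the LSTS that are weakly equivalent under reach value invisibility yet not stutter equivalent under $\stuteq_p$, establishing $\weakeq^r_v \nconslab \stuteq_p$ as required.
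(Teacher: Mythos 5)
Your proposal is correct and takes essentially the same route as the paper: it reuses the Petri net of Example~\ref{ex:petri_net} with the polynomial proposition $q_p$ and compares exactly the two complete initial paths $\tr_1\tr_2\tr\tr_3$ and $\tr\tr_1\tr_2\tr_3$, which are weakly equivalent under reach value invisibility (only $\tr$ being invisible) but have no-stutter traces $\emptyset\,\{q_p\}\,\emptyset$ and $\emptyset$ under $L_p$. The extra detail you supply (visibility of $\tr_1,\tr_2,\tr_3$, completeness of the paths, and the remark on why Proposition~\ref{prop:value_invis_f_constant} fails here) is consistent with the paper and merely makes explicit what its proof leaves implicit.
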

\begin{proof}
	Consider the Petri net from Example~\ref{ex:petri_net} with the polynomial 
	proposition $q_p := (1-p_3)(1 - p_5) = 1$ from 
	Example~\ref{ex:petri_net_AP}.
	Disregard $q_l$ in this reasoning.
	Structural transition $\tr$ is reach value $q_p$-invisible, hence the paths 
	corresponding to $\tr_1 \tr_2 \tr \tr_3$ and $\tr \tr_1 \tr_2 \tr_3$ are 
	weakly equivalent under reach value invisibility.
	However, they are not stutter equivalent for polynomial propositions.
\end{proof}

It follows from Theorems~\ref{thm:petri_net_not_labelled_consistently_weak_inv} 
and~\ref{thm:petri_net_not_labelled_consistently_reach_value_inv} and 
transitivity of $\subseteq$ that 
Theorems~\ref{thm:petri_net_labelled_consistently_linear_ap}, 
\ref{thm:petri_net_labelled_consistently_polynomial_ap} 
and~\ref{thm:petri_net_labelled_consistently_arbitrary_ap} cannot be 
strengthened further.
In terms of Figure~\ref{fig:weak_stut_lattices}, this means that the dotted 
arrows cannot be moved downward in the lattice of weak equivalences and cannot 
be moved upward in the lattice of stutter equivalences.
The implications of these findings on related work will be discussed in the 
next section.

\section{Related Work}
\label{sec:related_work}
There are many works in the literature that apply stubborn sets.
We will consider several works that aim to preserve LTL$_{-X}$ and discuss 
whether they are correct when it comes to the inconsistent labelling problem.
Furthermore, we also identify several unrelated issues.

Liebke and Wolf~\cite{Liebke2019} present an approach for efficient CTL model 
checking on Petri nets.
For some formulas, they can reduce CTL model checking to LTL model checking, 
which allows greater reductions under POR.
They rely on the incorrect LTL preservation theorem, and since they apply the 
techniques on Petri nets with ordinary invisibility, their theory is incorrect 
(Theorem~\ref{thm:petri_net_not_labelled_consistently_weak_inv}).
Similarly, the overview of stubborn set theory presented by Valmari and Hansen 
in~\cite{Valmari2017a} applies reach invisibility and does not necessarily 
preserve LTL$_{-X}$.
Varpaaniemi~\cite{Varpaaniemi2005} also applies stubborn sets to Petri nets, 
but relies on a visibility notion that is stronger than strong invisibility.
The correctness of these results is thus not affected 
(Theorem~\ref{thm:petri_net_labelled_consistently_arbitrary_ap}).

A generic implementation of weak stubborn sets for the LTSmin model checker is
proposed by Laarman \etal~\cite{Laarman2016}.
They use abstract concepts such as guards and transition groups to implement 
POR in a way that is agnostic of the input language.
The theory they present includes condition \textbf{D1}, which is too weak and 
thus incorrect, but the accompanying implementation follows the framework of 
Section~\ref{sec:implementation}, and thus it is correct by 
Theorems~\ref{thm:dl_nostut} and~\ref{thm:inf_nostut}.
The implementations proposed in~\cite{Valmari2017a,Wolf2018} are similar, 
albeit specific for Petri nets.

Several works~\cite{Gibson-Robinson2015,Hansen2014} perform action-based 
model checking and thus strive to preserve weak trace equivalence or inclusion.
As such, they do not suffer from the problems discussed here, which applies 
only to state labels.
Other recent work~\cite{dobrikov_optimising_2016} relies on ample sets, and is 
thus not affected, or only considers safety 
properties~\cite{laarman_stubborn_2018}.

Although Bene\v{s} \etal~\cite{Benes2009,Benes2011} rely on ample sets, and not
on stubborn sets, they also discuss weak trace equivalence and stutter-trace 
equivalence.
In fact, they present an equivalence relation for traces that is a combination 
of weak and stutter equivalence.
The paper includes a theorem that weak equivalence implies their new 
state/event equivalence~\cite[Theorem 6.5]{Benes2011}.
However, the counter-example in Figure~\ref{fig:ce_Benes} shows that this 
consistent labelling theorem does not hold.
Here, the action $\tau$ is invisible, and the two paths in this transition 
system are thus weakly equivalent.
However, they are not stutter equivalent, which is a special case of 
state/event equivalence.
Although the main POR correctness result~\cite[Corollary 6.6]{Benes2011} builds 
on the incorrect consistent labelling theorem, its correctness does not appear 
to be affected.
An alternative proof can be constructed based on the reasoning presented in 
Section~\ref{sec:correctness}.

\begin{figure}
	\centering
	\begin{subfigure}{0.3\textwidth}
		\centering
		\begin{tikzpicture}[->,>=stealth',shorten >=0pt,auto,node 
		distance=2.0cm,semithick]

		\tikzstyle{state} = [draw,circle]

		\node[state] (s0)                       at (0,0)      {};
		\node[state] (s1)                       at (1.2,0.6)  {};
		\node[state,label={right:$\{q\}$}] (s2) at (2.4,0.6)  {};
		\node[state] (s3)                       at (1.2,-0.6) {};

		\path (-0.5,0) edge (s0)
		(s0) edge    node {$\tau$} (s1)
		(s1) edge    node {$a$}    (s2)
		(s0) edge['] node {$a$}    (s3)
		;

		\end{tikzpicture}
		\vspace{1.15em}
		\caption{} 
		\label{fig:ce_Benes}
	\end{subfigure}
	\hspace{1cm}
	\begin{subfigure}{0.3\textwidth}
		\centering
		\begin{tikzpicture}[->,>=stealth',shorten >=0pt,auto,node 
		distance=2.0cm,semithick]
		\tikzstyle{state} = [draw,circle]
		\def\x{1.4}
		\node[state] (0) at (0,0)    {};
		\node[state,dashed] (1) at (\x,0)   {};
		\node[state,fill=lightgray] (2) at (0,-\x)  {};
		\node[state] (3) at (\x,-\x) {};
		\path (-0.5,0) edge (0)
		(0) edge['] node {$b$} (2)
		(2) edge['] node {$a$} (3)
		;
		\path[dashed]
		(0) edge node {$a$} (1)
		(1) edge node {$b$} (3)
		;
		\end{tikzpicture}
		\caption{} 
		\label{fig:ce_Bonneland}
	\end{subfigure}
	\caption{Counter-examples for theories in two related works.}
\end{figure}
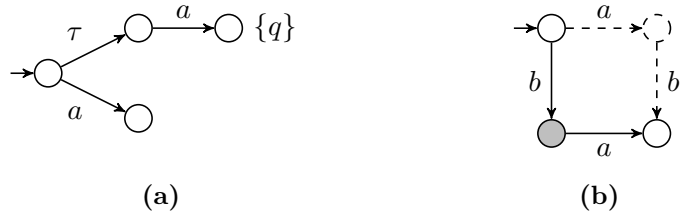

B{\o}nneland \etal~\cite{Bonneland2019} apply stubborn-set based POR to 
two-player Petri nets, and their reachability semantics expressed as a 
\emph{reachability game}.
Since their approach only concerns reachability, it is not affected by the 
inconsistent labelling problem (see Section~\ref{sec:safe_formalisms}).
Unfortunately, their POR theory is nevertheless unsound, contrary to what is 
claimed in~\cite[Theorem 17]{Bonneland2019}.
In reachability games, player 1 tries to reach one of the \emph{goal} states, 
while player 2 tries to avoid them.
B{\o}nneland \etal propose a condition \textbf{R} that guarantees that all goal 
states in the full game are also reachable in the reduced game.
However, the reverse is not guaranteed: paths that do not contain a goal state 
are not necessarily preserved, essentially endowing player~1 with more power.
Consider the (solitaire) reachability game depicted in 
Figure~\ref{fig:ce_Bonneland}, in which all edges belong to player 2 and the 
only goal state is indicated with grey.
Player 2 wins the non-reduced game by avoiding the goal state via the edges 
labelled with $a$ and then $b$.
However, $\{b\}$ is a stubborn set---according to the conditions 
of~\cite{Bonneland2019}---in the initial state, and the dashed transitions are 
thus eliminated in the reduced game.
Hence, player 2 is forced to move the token to the goal state and player 1 wins 
in the reduced game.
In the mean time, the authors of~\cite{Bonneland2019} confirmed and resolved 
the issue in~\cite{bonneland_stubborn_2021}.

The current work is not the first to point out mistakes in POR theory.
In~\cite{Siegel2019}, Siegel presents a flaw in an algorithm that combines POR 
with ample sets and on-the-fly model checking~\cite{Peled1996}.
In that setting, POR is applied on the product of an LSTS and a B\"uchi 
automaton.
We briefly sketch the issue here.
Let $q$ be a state of the LSTS and $s$ a state of the B\"uchi automaton.
While investigating a transition $(q,s) \transition{\act} (q',s')$, condition 
\textbf{C3}, which---like condition \textbf{L}---aims to solve the action 
ignoring problem, incorrectly sets $\redf(q,s') = \enabled(q)$ instead of 
$\redf(q,s) = \enabled(q)$.
The issue is repaired by setting $\redf(q,s) = \enabled(q)$, but only for a 
certain subclass of B\"uchi automata.

The setting considered by Laarman and Wijs~\cite{Laarman2014} is similar: they 
discuss how to apply stubborn sets during parallel nested depth-first search in 
the product of an LSTS and a B\"uchi automaton.
Both the correctness argument and the implementation are based 
on~\cite{Laarman2016}, thus -- by the discussion above -- incorrect in theory, 
but correct in practice.

\section{Conclusion}
\label{sec:conclusion}
We discussed the inconsistent labelling problem for preservation of 
stutter-trace equivalence with stubborn sets.
The issue is relatively easy to repair by strengthening condition \textbf{D1}.
For Petri nets, altering the definition of invisibility can also resolve 
inconsistent labelling depending on the type of atomic propositions.
The impact on applications presented in related works seems to be limited: the 
problem is typically mitigated in the implementation, since it is very hard to 
compute \textbf{D1} exactly.
This is also a possible explanation for why the inconsistent labelling problem 
has not been noticed for so many years.

Since this is not the first error found in POR theory~\cite{Siegel2019}, a more 
rigorous approach to proving its correctness, \eg using proof assistants, would 
provide more confidence.

\section*{Acknowledgements}
The authors would like to thank the anonymous reviewers, including those who 
reviewed the conference version, for their helpful comments.
Special thanks go out to the two journal reviewers.
The first reviewer provided many useful suggestions for improvement and noticed 
that condition \textbf{D1'} can be weakened (see the footnote in 
Section~\ref{sec:strengthen_d1}).
The second reviewer took the significant effort to check all proofs, giving us 
more confidence in the publication.

\bibliographystyle{alpha}
\bibliography{ref}

\newcommand{\etalchar}[1]{$^{#1}$}
\begin{thebibliography}{GRHRW15}

\bibitem[BBB{\etalchar{+}}11]{Benes2011}
N.~Bene{\v{s}}, L.~Brim, B.~Buhnova, I.~Ern, J.~Sochor, and
  P.~Va{\v{r}}ekov{\'{a}}.
\newblock Partial order reduction for state/event {LTL} with application to
  component-interaction automata.
\newblock {\em Science of Computer Programming}, 76(10):877--890, 2011.

\bibitem[BB{\v{C}}{\etalchar{+}}09]{Benes2009}
Nikola Bene{\v{s}}, Lubos Brim, Ivana {\v{C}}ern{\'{a}}, Jiri Sochor, Pavlina
  Va{\v{r}}ekov{\'{a}}, and Barbora Zimmerova.
\newblock {Partial Order Reduction for State/Event LTL}.
\newblock In {\em IFM 2009}, volume 5423 of {\em LNCS}, pages 307--321, 2009.

\bibitem[BJL{\etalchar{+}}21]{bonneland_stubborn_2021}
Frederik~Meyer B{\o}nneland, Peter~Gj{\o}l Jensen, Kim~Guldstrand Larsen, Marco
  Mu\~{n}iz, and Ji\v{r}\'i Srba.
\newblock Stubborn {Set} {Reduction} for {Two}-{Player} {Reachability} {Games}.
\newblock {\em Logical Methods in Computer Science}, 17(1):21:1--21:26, 2021.

\bibitem[BJLM19]{Bonneland2019}
Frederik~Meyer B{\o}nneland, Peter~Gj{\o}l Jensen, Kim~Guldstrand Larsen, and
  Marco Mu{\~{n}}iz.
\newblock {Partial Order Reduction for Reachability Games}.
\newblock In {\em CONCUR 2019}, volume 140, pages 23:1--23:15, 2019.

\bibitem[BK08]{BaierKatoen-PMC}
Christel Baier and Joost-Pieter Katoen.
\newblock {\em {Principles of model checking}}.
\newblock MIT Press, 2008.

\bibitem[DL16]{dobrikov_optimising_2016}
Ivaylo Dobrikov and Michael Leuschel.
\newblock Optimising the {ProB} model checker for {B} using partial order
  reduction.
\newblock {\em Formal Aspects of Computing}, 28(2):295--323, 2016.

\bibitem[GKPP99]{Gerth1999}
Rob Gerth, Ruurd Kuiper, Doron Peled, and Wojciech Penczek.
\newblock {A Partial Order Approach to Branching Time Logic Model Checking}.
\newblock {\em Information and Computation}, 150(2):132--152, 1999.

\bibitem[God96]{Godefroid1996}
Patrice Godefroid.
\newblock {\em {Partial-Order Methods for the Verification of Concurrent
  Systems}}, volume 1032 of {\em LNCS}.
\newblock Springer, 1996.

\bibitem[GRHRW15]{Gibson-Robinson2015}
Thomas Gibson-Robinson, Henri Hansen, A.~W. Roscoe, and Xu~Wang.
\newblock {Practical Partial Order Reduction for CSP}.
\newblock In {\em NFM 2015}, volume 9058 of {\em LNCS}, pages 188--203, 2015.

\bibitem[HLL{\etalchar{+}}14]{Hansen2014}
Henri Hansen, Shang{-wei} Lin, Yang Liu, Truong~Khanh Nguyen, and Jun Sun.
\newblock {Diamonds Are a Girl's Best Friend: Partial Order Reduction for Timed
  Automata with Abstractions}.
\newblock In {\em CAV 2014}, volume 8559 of {\em LNCS}, pages 391--406, 2014.

\bibitem[K{\"o}n27]{Konig1927}
D{\'{e}}nes K{\"o}nig.
\newblock {{\"{U}}ber eine Schlussweise aus dem Endlichen ins Unendliche}.
\newblock {\em Acta Sci. Math. (Szeged)}, 3(2-3):121--130, 1927.

\bibitem[Laa18]{laarman_stubborn_2018}
Alfons Laarman.
\newblock Stubborn {Transaction} {Reduction}.
\newblock In {\em {NFM} 2018}, volume 10811 of {\em {LNCS}}, pages 280--298,
  2018.

\bibitem[LPvdPH16]{Laarman2016}
Alfons Laarman, Elwin Pater, Jaco van~de Pol, and Henri Hansen.
\newblock {Guard-based partial-order reduction}.
\newblock {\em International Journal on Software Tools for Technology
  Transfer}, 18(4):427--448, 2016.

\bibitem[LW14]{Laarman2014}
Alfons Laarman and Anton Wijs.
\newblock Partial-{Order} {Reduction} for {Multi}-core {LTL} {Model}
  {Checking}.
\newblock In {\em {HVC} 2014}, volume 8855 of {\em {LNCS}}, pages 267--283,
  2014.

\bibitem[LW19]{Liebke2019}
Torsten Liebke and Karsten Wolf.
\newblock {Taking Some Burden Off an Explicit CTL Model Checker}.
\newblock In {\em Petri Nets 2019}, volume 11522 of {\em LNCS}, pages 321--341,
  2019.

\bibitem[NVW20]{NeeleVW2020}
Thomas Neele, Antti Valmari, and Tim A.~C. Willemse.
\newblock The {Inconsistent} {Labelling} {Problem} of {Stutter}-{Preserving}
  {Partial}-{Order} {Reduction}.
\newblock In {\em {FoSSaCS} 2020}, volume 12077 of {\em {LNCS}}, pages
  482--501, 2020.

\bibitem[NWW20]{NeeleWW2020}
Thomas Neele, Tim A.~C. Willemse, and Wieger Wesselink.
\newblock Partial-{Order} {Reduction} for {Parity} {Games} with an
  {Application} on {Parameterised} {Boolean} {Equation} {Systems}.
\newblock In {\em {TACAS} 2020}, volume 12079 of {\em {LNCS}}, pages 307--324,
  2020.

\bibitem[Pel93]{Peled1993}
Doron Peled.
\newblock {All from One, One for All: on Model Checking Using Representatives}.
\newblock In {\em CAV 1993}, volume 697 of {\em LNCS}, pages 409--423, 1993.

\bibitem[Pel96]{Peled1996}
Doron Peled.
\newblock {Combining partial order reductions with on-the-fly model-checking}.
\newblock {\em Formal Methods in System Design}, 8(1):39--64, 1996.

\bibitem[Sch00]{Schmidt2000}
Karsten Schmidt.
\newblock {Stubborn sets for model checking the EF/AG fragment of CTL}.
\newblock {\em Fundamenta Informaticae}, 43(1-4):331--341, 2000.

\bibitem[Sie19]{Siegel2019}
Stephen~F. Siegel.
\newblock {What's Wrong with On-the-Fly Partial Order Reduction}.
\newblock In {\em CAV 2019}, volume 11562 of {\em LNCS}, pages 478--495, 2019.

\bibitem[Val88]{Valmari1988}
Antti Valmari.
\newblock Error detection by reduced reachability graph generation.
\newblock In {\em {APN} 1988}, pages 95--112, 1988.

\bibitem[Val91a]{Valmari1991b}
Antti Valmari.
\newblock {A Stubborn Attack on State Explosion}.
\newblock In {\em CAV 1990}, volume 531 of {\em LNCS}, pages 156--165, 1991.

\bibitem[Val91b]{Valmari1991a}
Antti Valmari.
\newblock {Stubborn sets for reduced state space generation}.
\newblock In {\em Advances in Petri Nets}, volume 483, pages 491--515, 1991.

\bibitem[Val92]{Valmari1992}
Antti Valmari.
\newblock {A Stubborn Attack on State Explosion}.
\newblock {\em Formal Methods in System Design}, 1(4):297--322, 1992.

\bibitem[Val96]{Valmari1996}
Antti Valmari.
\newblock {The state explosion problem}.
\newblock In {\em ACPN 1996}, volume 1491 of {\em LNCS}, pages 429--528, 1996.

\bibitem[Val97]{Valmari1997}
Antti Valmari.
\newblock {Stubborn Set Methods for Process Algebras}.
\newblock In {\em POMIV 1996}, volume~29 of {\em DIMACS}, pages 213--231, 1997.

\bibitem[Val17]{Valmari2017b}
Antti Valmari.
\newblock {Stop It, and Be Stubborn!}
\newblock {\em ACM Transactions on Embedded Computing Systems},
  16(2):46:1--46:26, 2017.

\bibitem[Var05]{Varpaaniemi2005}
Kimmo Varpaaniemi.
\newblock {On Stubborn Sets in the Verification of Linear Time Temporal
  Properties}.
\newblock {\em Formal Methods in System Design}, 26(1):45--67, 2005.

\bibitem[VH11]{ValmariH11}
Antti Valmari and Henri Hansen.
\newblock Can stubborn sets be optimal?
\newblock {\em Fundamenta Informaticae}, 113(3-4):377--397, 2011.

\bibitem[VH17]{Valmari2017a}
Antti Valmari and Henri Hansen.
\newblock {Stubborn Set Intuition Explained}.
\newblock In {\em ToPNoC XII}, volume 10470 of {\em LNCS}, pages 140--165,
  2017.

\bibitem[Wol18]{Wolf2018}
Karsten Wolf.
\newblock {Petri Net Model Checking with LoLA 2}.
\newblock In {\em Petri Nets 2018}, volume 10877 of {\em LNCS}, pages 351--362,
  2018.

\end{thebibliography}

\end{document}